\newcommand{\mycomment}[1]{}
\newcommand{\eproof}{\hfill $\Box$}
\newcommand{\diverge}{\to\infty}
\newcommand{\reals}{{\mathbb{R}}}
\newcommand{\pth}[1]{\left( #1 \right)}
\newcommand{\tr}{\widetilde{r}}
\newcommand{\tx}{{\widetilde{x}}}
\newcommand{\tC}{{\widetilde{C}}}
\newcommand{\tY}{{\widetilde{Y}}}
\newcommand{\calC}{{\mathcal{C}}}
\newcommand{\calE}{{\mathcal{E}}}
\newcommand{\calF}{{\mathcal{F}}}
\newcommand{\calH}{{\mathcal{H}}}
\newcommand{\calL}{{\mathcal{L}}}
\newcommand{\calN}{{\mathcal{N}}}
\newcommand{\calO}{{\mathcal{O}}}
\newcommand{\calR}{{\mathcal{R}}}
\newcommand{\calS}{{\mathcal{S}}}
\newcommand{\calV}{{\mathcal{V}}}
\newcommand{\argmin}{{\rm argmin}}
\newcommand{\argmax}{{\rm argmax}}
\begin{document}

\title{Fault-Tolerant Multi-Agent Optimization-- Part III
\thanks{This research is supported in part by National Science Foundation awards NSF 1329681 and 1421918.
Any opinions, findings, and conclusions or recommendations expressed here are those of the authors
and do not necessarily reflect the views of the funding agencies or the U.S. government.}}

\author{Lili Su \hspace*{1in} Nitin Vaidya}
\institute{Department of Electrical and Computer Engineering, and\\
Coordinated Science Laboratory\\
University of Illinois at Urbana-Champaign\\
Email:\{lilisu3, nhv\}@illinois.edu}

\maketitle

\begin{center}

Technical Report\\

\vskip \baselineskip

\today
~
\end{center}

\begin{abstract}
We study fault-tolerant distributed optimization of a sum of convex (cost) functions with real-valued scalar
input/output in the presence of crash faults or Byzantine faults.
In particular, the goal is to optimize
a global cost function $\frac{1}{n}\sum_{i\in \calV} h_i(x)$, where $\calV=\{1, \ldots, n\}$ is the collection of agents, and $h_i(x)$ is agent $i$'s local cost function, which is initially
known only to agent $i$.
 This problem finds its applications in the domain of fault-tolerant large scale distributed machine learning, where data are generated at different locations and some data may be lost during processing or be tampered by malicious local data managers.  The global cost function $\frac{1}{n}\sum_{i\in \calV} h_i(x)$ captures the requirement that, in distributed machine learning, the system tries to take full advantage of all the data generated at different locations. 
Since the above global cost function cannot be optimized exactly in presence of crash faults or Byzantine faults, we define two weaker versions of the problem for crash faults and Byzantine faults, respectively.\\

When some agents may crash, the local functions/data stored at these agents may not always available to the system. In this scenario, the goal for the weaker problem is to generate an output that is an optimum of a function formed as
$$C\pth{\sum_{i\in \calN} h_i(x)+\sum_{i\in \calF} \alpha_i h_i(x)},$$ where $\calN$ is the set of non-faulty agents, $\calF$ is the set of faulty agents (crashed agents), $0\le \alpha_i\le 1$ for each $i\in \calF$ and $C$ is a normalization constant such that $C\pth{|\calN|+\sum_{i\in \calF} \alpha_i}=1$. We present an iterative algorithm in which each agent only needs to perform local computation, and send one message per iteration.\\


When some agents may be Byzantine, the system cannot take full advantage of the data kept by non-faulty agents.
The goal for the associated weaker problem is to generate an output that is an optimum of a function formed as
$$\sum_{i\in \calN}\alpha_i h_i(x),$$
such that $\alpha_i\geq 0$ for each $i\in \calN$ and $\sum_{i\in \calN}\alpha_i=1$.
We present an iterative algorithm, where only local computation is needed and only one message per agent is sent in each iteration, that ensures that at least $|\calN|-f$ agents have weights ($\alpha_i$'s) that are lower bounded by $\frac{1}{2(|\calN|-f)}$.\\

The obtained results can be generalized to asynchronous systems as well.

\end{abstract}

~

~

\newpage

\section{System Model and Problem Formulation}
\label{sec:intro}
The system under consideration is synchronous, and
 consists of $n$ agents connected by a complete communication network.
Our results can be generalized to asynchronous system. We postpone the discussion of this generalization to the end of this report.
The set of agents is $\calV=\{1,\cdots,n\}$.
We assume that $n>3f$ for reasons that will be clearer soon.
We say that a function $h: \mathbb{R}\rightarrow \mathbb{R}$ is {\em admissible} if (i) $h(\cdot)$ is convex, and continuously
differentiable,
 (ii) the set $\arg\min_{x\in\mathbb{R}} h(x)$ containing the optima of $h(\cdot)$
is non-empty and compact (i.e., bounded and closed), (iii) the magnitude of the gradient is bounded by $L$, i.e., $|h^{\prime}(x)|\le L,\, \forall  x\in \reals$, and the derivative $h^{\prime}(\cdot)$ is $L$--Lipschitz continuous.
Each agent $i\in \calV$ is initially provided with an {\em admissible} local cost function $h_i: \mathbb{R}\rightarrow\mathbb{R}$. Ideally, the system goal is to optimize the {\em average} of all the local functions, and have all the agents to reach agreement on the optimum $x$. In particular, each agent should output an {\em identical} value $\tx\in\mathbb{R}$
that minimizes
\begin{align}
\label{problem1}
\frac{1}{n}\sum_{i\in \calV} h_i(x).
\end{align}
This problem finds its applications in the domain of large scale distributed machine learning, where data are generated at different locations and the data center at each location is not allowed to transmit all the locally collected data to other centers either due to transmission capacity constraint or due to privacy issue. 

This problem is well-studied in the scenario where each agent is reliable throughout any execution of an algorithm \cite{Duchi2012,Nedic2009,Tsianos2012}. In this work, we consider the fault-tolerant version of this problem. In particular, we consider the setting where up to $f$ of the $n$ agents may crash or be Byzantine faulty. Let $\calF$ denote the set of faulty agents, and let $\calN = \calV - \calF$ denote the set of non-faulty agents. For each $t\ge 0$, let $\calN[t]$ be the collection of agents that have not been crashed till the end of iteration $t$, with $\calN[0]=\calV$. Note that $\calN[t+1]\subseteq \calN[t]$ for $t\ge 0$, and that $\lim_{t\diverge} \calN[t]=\calN$.
The set $\calF$ of faulty agents may be chosen by an adversary arbitrarily. Let $|\calF|=\phi$. Note that $\phi\leq f$ and $|\calN|\geq n-f$.
The presence of crashed or Byzantine faulty agents makes it impossible to design an algorithm that solves (\ref{problem1}) for all admissible local cost functions (this is shown formally in Part I of this report \cite{su2015byzantine}).  Therefore, for crash fault and Byzantine fault, respectively, we study two weaker versions of the problem, namely, Problem 1 and Problem 2 in Figure \ref{fig:prob}. Problem 1 is proposed in this report, and Problem 2 is first introduced in Part I of this report \cite{su2015byzantine}.\\

When some agents may crash, the local functions/data stored at these agents may only be partially visible or even invisible to the system -- as agent $i$ may crash at any time during an execution. Problem 1 requires that the output $\tx$ be an optimum of a function formed as
$$C\pth{\sum_{i\in \calN} h_i(x)+\sum_{i\in \calF} \alpha_i h_i(x)},$$ where $\calN$ is the set of non-faulty agents, $\calF$ is the set of faulty agents (crashed agents), $0\le \alpha_i\le 1$ for each $i\in \calF$ and $C$ is a normalization constant such that $C\pth{|\calN|+\sum_{i\in \calF} \alpha_i}=1$. 

When some agents may be Byzantine, the system cannot take full advantage of the data kept by non-faulty agents. In addition, among the non-faulty agents, the system may put more weights to some agents than the others. Then, the desired goal is to {\em maximize} the number of weights ($\alpha_i$'s)
that are bounded away from zero.
With this in mind, Problem 2 in Figure \ref{fig:prob} is introduced in Part I of this report \cite{su2015byzantine}. In Problem 2, note that ${\bf 1}\{\alpha_i>\beta\}$ is an indicator function that outputs 1 if $\alpha_i>\beta$, and 0 otherwise.
Essentially, Problem 2 requires that at least $\gamma$
weights must exceed a threshold $\beta$, where $\beta\geq 0$. Thus, $\beta,\gamma$ are parameters of Problem 2, capturing how the data collected by non-faulty agents are utilized by the system.

%

%
%
%
%

\begin{figure}
\begin{tabular}{|l|l|l|}\hline
\begin{minipage}[t]{0.50\textwidth}
{\bf Problem 1}\\
\begin{eqnarray*}
\tx & \in & \arg \min_{x\in\mathbb{R}}\quad C\pth{\sum_{i\in \calN} h_i(x)+\sum_{i\in \calF} \alpha_i h_i(x)}\\
\text{such that} && \nonumber \\
&&\forall i\in\calF, ~ 0\le \alpha_i\le 1 \text{~~and~~} \nonumber \\
&&C\pth{|\calN|+\sum_{i\in \calF} \alpha_i}=1 \nonumber
\end{eqnarray*}
\end{minipage}
&
\begin{minipage}[t]{0.5\textwidth}
{\bf Problem 2 with parameters $\beta,\gamma$, $\beta\geq 0$}\\
\begin{eqnarray*}
\tx & \in & \arg \min_{x\in\mathbb{R}}\quad \sum_{i\in \calN} \alpha_i h_i(x)\\
\text{such that} && \nonumber \\
&&\forall i\in\calN, ~ \alpha_i\geq 0,  \nonumber \\
&&\sum_{i\in \calN}\alpha_i=1, \text{~~and~~} \nonumber \\
&&\sum_{i\in\calN} {\bf 1}(\alpha_i>\beta) ~ \geq ~ \gamma \nonumber
\end{eqnarray*}

\end{minipage}
~\\
\hline
\end{tabular}
\caption{Problem formulations: All non-faulty agents
must output an identical value $\tx\in\mathbb{R}$ that satisfies the constraints specified in each problem formulation.}
\label{fig:prob}
\end{figure}

We will say that Problem 1 or 2 is solvable if there exists an algorithm that will
find a solution for the problem (satisfying all its constraints) for all admissible local cost functions,
and all possible behaviors of faulty agents.
Our problem formulations require that all non-faulty agents output asymptotically identical $\tx\in\mathbb{R}$, while
satisfying the constraints imposed by the problem (as listed in Figure \ref{fig:prob}).
Thus, the traditional fault-tolerant consensus \cite{impossible_proof_lynch} problem, which also imposes a similar
{\em agreement} condition, is a special case
of our optimization problem.\footnote{Interested readers are referred to Part I of this report \cite{su2015byzantine} for formal proof. } 
Therefore,
the lower bound of $n>3f$ for Byzantine consensus \cite{impossible_proof_lynch} also applies
to our problem. Hence we assume that $n>3f$.\\

We prove the following key results:
\begin{itemize}
\item (Theorem 2) We provide a simple iterative algorithm that solves Problem 1. In each iteration of this algorithm, each agent only needs to perform local computation and send one message.
\item (Theorem 4) We present a simple iterative algorithm that solves Problem 2 with $\beta\leq \frac{1}{2(|\calN|-f)}$ and $\gamma\leq |\calN|-f$. In each iteration of this algorithm, each agent only needs to perform local computation and send one message.
\end{itemize}
In our proposed algorithms, the local estimates at all non-faulty agents are identical in the limit. 

The rest of the report is organized as follows. Related work is summarized in Section \ref{related work}.  Two algorithms are proposed in Section \ref{sec: algorithm crash}, wherein the first algorithm solves Problem 1 with two rounds of information exchange in each iteration. In contrast, the second algorithm solves Problem 1 with one message sent per agent in each iteration.
Section \ref{sec: algorithm byzantine} presents a simple iterative algorithm that solves Problem 2 with $\beta=\frac{1}{2\pth{|\calN|-f}}$ and $\gamma=|\calN|-f$. Similar to the second algorithm in Section \ref{sec: algorithm crash}, this proposed algorithm only requires one message sent per agent in each iteration. Section \ref{conclusion and discussion} discusses the generalization of the obtained results to asynchronous systems, and concludes the report.

\section{Related Work}\label{related work}

Fault-tolerant consensus \cite{PeaseShostakLamport} is a special case of the optimization problem considered in this report. There is a significant body of work on fault-tolerant consensus, including \cite{Dolev:1986:RAA:5925.5931,Chaudhuri92morechoices,mostefaoui2003conditions,fekete1990asymptotically,LeBlanc2012,vaidya2012iterative,friedman2007asynchronous}.
The optimization algorithms presented in this report use Byzantine consensus as a component.

Convex optimization, including distributed convex optimization, also has a long history \cite{bertsekas1989parallel}. However, we are not aware of
prior work that obtains the results presented in this report except \cite{su2015byzantine,DBLP:journals/corr/SuV15a}.
Primal and dual decomposition methods that led themselves naturally to a distributed paradigm are well-known \cite{Boyd2011}. 
There has been significant research on a variant of distributed optimization problem \cite{Duchi2012,Nedic2009,Tsianos2012}, in which the global objective $h(x)$ is a summation of $n$ convex functions, i.e, $h(x)=\sum_{j=1}^n h_j(x)$, with function $h_j(x)$ being known to the $j$-th agent. The need for robustness for distributed optimization problems has received some attentions recently \cite{Duchi2012,kailkhura2015consensus,zhang2014distributed,marano2009distributed,su2015byzantine,DBLP:journals/corr/SuV15a}. In particular, Duchi et al.\ \cite{Duchi2012} studied the impact of random communication link faults on the convergence of distributed variant of dual averaging algorithm. Specifically, each realizable link fault pattern considered in \cite{Duchi2012} is assumed to admit a doubly-stochastic matrix which governs
the evolution dynamics of local estimates of the optimum.

We considered Byzantine fault in \cite{su2015byzantine} and \cite{DBLP:journals/corr/SuV15a}. Both \cite{su2015byzantine} and \cite{DBLP:journals/corr/SuV15a} considered synchronous system. \cite{su2015byzantine} showed that at most $|\calN|-f$ non-faulty functions can have non-zero weights. This observation led to the formulation of Problem 2 in Fig. \ref{fig:prob}. Six algorithms were proposed in \cite{su2015byzantine}. In contrast, we also showed \cite{DBLP:journals/corr/SuV15a} that sufficient redundancy in the input functions (each input function is not exclusively kept by a single agent), it is possible to solve (\ref{problem1}), where the summation is over all input functions. In addition, a simple low-complexity iterative algorithm was proposed in \cite{DBLP:journals/corr/SuV15a}, and a tight topological condition for the existence of such iterative algorithms is identified.

In other related work, significant attempts have been made to solve the problem of distributed hypothesis testing in the presence of Byzantine attacks \cite{kailkhura2015consensus,zhang2014distributed,marano2009distributed}, where Byzantine sensors may transmit fictitious observations aimed at confusing the decision maker to arrive at a judgment that is in contrast with the true underlying distribution. Consensus based variant of distributed event detection, where a centralized data fusion center does not exist, is considered in \cite{kailkhura2015consensus}. In contrast, in this paper, we focus on the Byzantine attacks on the multi-agent optimization problem.


\section{Mutil-Agent Optimization with Crash fault}
\label{sec: algorithm crash}

Algorithm 1 and its correctness proof contain the key ideas and intuition of this report.

\subsection{Algorithm 1: Two-Round of Information Exchange per Iteration}

In Algorithm 1, each agent $j$ maintains two variables: the local estimate $x_j$ and the auxiliary variable $s_j$, with $x_j[t]$ and $s_j[t]$ representing these two variables at the end of iteration $t$, and $x_j[0]$ being the system input at agent $j$ and $s_j[0]=0$. In each iteration $t\ge 1$, there are two rounds of information exchange. In the first round, (1) agent $j$ requests all the agents (including itself) to compute the gradients of their local functions at $x_j[t-1]$; (2) after receiving $x_j[t-1]$, a non-faulty agent $i$ computes $h_i^{\prime}(x_j[t-1])$ and sends it back to agent $j$; (3) agent $j$ collects the requested gradients and updates the auxiliary variable $s_j$. In the second round, all the non-faulty agents exchange their auxiliary variables $s_j[t]$'s and update their local estimate as an {\em average} of all received auxiliary variables.

Let $\{\lambda[t]\}_{t=0}^{\infty}$ be a sequence of stepsizes chosen beforehand such that $\lambda[t]\ge 0$ ad $\lambda[t]\ge \lambda[t+1]$ for each $t\ge 0$, $\lim_{t\diverge}\lambda[t]=0 $,  $\sum_{t=0}^{\infty} \lambda[t]=\infty$ and $\sum_{t=0}^{\infty} \lambda^2[t]<\infty$.

\paragraph{}
\vspace*{8pt}\hrule
~

{\bf Algorithm 1} for agent $j$ at iteration $t$:
~
\vspace*{4pt}\hrule

\begin{list}{}{}

\item[{\bf Step 1:}] Send $x_j[t-1]$ to all the agents (including agent $j$ itself).\\
~
\item[{\bf Step 2:}]
Upon receiving $x_i[t-1]$ from agent $i$, compute $h_j^{\prime}(x_i[t-1])$ --
the gradient of function $h_j(\cdot)$ at $x_i[t-1]$ -- and send it back to agent $i$. \\
~
\item[{\bf Step 3:}]
Let $\calR^{1}_j[t-1]$ denote the set of gradients of the form $h_i^{\prime}(x_j[t-1])$ received as a result of step 1 and step 2.
%
Update $s_j$ as
\begin{align}
\label{update z crash 1}
s_j[t]=x_j[t-1]-\frac{\lambda[t-1]}{\left |\calR^{1}_j[t-1]\right |}\pth{\sum_{i\in \calR^{1}_j[t-1]}h_i^{\prime}(x_j[t-1])}.
\end{align}
~

\item[{\bf Step 4:}] Send $s_j[t]$ to all the agents (including agent $j$ itself). \\
~
\item[{\bf Step 5:}] 
    Let $\calR^{2}_j[t-1]$ denote the set of auxiliary variables $s_i[t]$ received as a result of step 4.\\
    Update $x_j$ as
\begin{align}
\label{update x crash 1}
x_j[t]= \frac{1}{\left | \calR^{2}_j[t-1] \right |}\sum_{i\in \calR^{2}_j[t-1]} s_i[t].
\end{align}

\end{list}

\hrule

~

\vskip \baselineskip

Steps 1, 2 and 3 correspond to the first round of information exchange, and step 4 corresponds to the second round of information exchange.
We will show that Algorithm 1 correctly solves Problem 1. Intuitively speaking, the first round of information exchange corresponds to the standard gradient-method iterate, which drives each local estimate to a global
optimum; the second round of information exchange forces all local estimates at non-faulty agents to reach consensus.  Algorithm 2 will achieve a similar goal with a single round of exchange.\\

Recall that $\calN$ is the set of non-faulty agents and $\calF$ is the set of faulty agents that may crash at any time during an execution. In Problem 1, the system goal is to optimize
$$C\pth{\sum_{i\in \calN} h_i(x)+\sum_{i\in \calF} \alpha_i h_i(x)},$$ where $\calN$ is the set of non-faulty agents, $\calF$ is the set of faulty agents (crashed agents), $0\le \alpha_i\le 1$ for each $i\in \calF$ and $C$ is a normalization constant such that $C\pth{|\calN|+\sum_{i\in \calF} \alpha_i}=1$. Given $\calN$ and $\calF$, the normalization constant $C$ and the crashed agents' coefficients $\alpha_i$ depend on when the faulty agents crash during an execution. 
For given $\calN$ and $\calF$, let $\calC$ be the collection of potential system objectives, formally defined as follows:
\begin{align}
\nonumber
\calC\triangleq \Big{\{}~~p(x)~:~ p(x)&=C\pth{\sum_{i\in \calN} h_i(x)+\sum_{i\in \calF} \alpha_i h_i(x)}, \\
~~&\forall i\in\calF, ~0\le \alpha_i\le 1,~ C\pth{|\calN|+\sum_{i\in \calF} \alpha_i}=1.~~\Big{\}}
\label{valid collection crash}
\end{align}
Each $p(x)\in \calC$ is called a valid function.
Since $\forall i\in\calF, ~0\le \alpha_i\le 1$, it holds that $\frac{1}{n}\le C\le \frac{1}{|\calN|}$ for each valid function. Note that $\frac{1}{|\calN|}\sum_{i\in \calN}h_i(x)\in \calC$ is a valid function. For ease of future reference, we let $\widetilde{p}(x)\triangleq\frac{1}{|\calN|}\sum_{i\in \calN}h_i(x)$. Define $Y\triangleq \cup_{p(x)\in \calC} \argmin~ p(x)$. The characterization of $Y$ is presented in the following two lemmas.

\begin{lemma}
\label{valid convex crash}
$Y$ is a convex set.
\end{lemma}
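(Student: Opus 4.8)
The plan is to reduce the claim to two facts: that the collection $\calC$ of valid functions is itself a \emph{convex} set of functions, and that for a convex, differentiable $p:\reals\to\reals$ one has $p^{\prime}(y)=0$ iff $y\in\argmin p$. As a preliminary observation, every $p\in\calC$ is admissible: writing $p=\sum_{i\in\calV}\beta_i h_i$ with $\beta_i=C$ for $i\in\calN$ and $\beta_i=C\alpha_i$ for $i\in\calF$, we have $\beta_i\ge 0$ and $\sum_{i\in\calV}\beta_i=C\pth{|\calN|+\sum_{i\in\calF}\alpha_i}=1$, so $p$ is a convex combination of admissible functions, hence convex, continuously differentiable, with $|p^{\prime}|\le L$ and $p^{\prime}$ $L$-Lipschitz; moreover, with $[a_i,b_i]\triangleq\argmin h_i$, one checks $p^{\prime}(x)>0$ for $x>\max_{i\in\calV}b_i$ and $p^{\prime}(x)<0$ for $x<\min_{i\in\calV}a_i$ (using $\beta_i=C>0$ for $i\in\calN\neq\emptyset$), so $\argmin p$ is a nonempty compact interval. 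Thus $Y$ is a union of intervals of $\reals$, and the point is to show this union is itself an interval.

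Next I would show $\calC$ is convex. Let $p_0,p_1\in\calC$ with normalization constants $C_0,C_1$ and faulty coefficients $(\alpha_i^{(0)})_{i\in\calF}$, $(\alpha_i^{(1)})_{i\in\calF}$, and fix $\theta\in[0,1]$. Put $C_\theta\triangleq\theta C_1+(1-\theta)C_0$, which is $\ge 1/n>0$ since $C_0,C_1\ge 1/n$, and $\alpha_i(\theta)\triangleq\big(\theta C_1\alpha_i^{(1)}+(1-\theta)C_0\alpha_i^{(0)}\big)/C_\theta$ for $i\in\calF$. Then each $\alpha_i(\theta)$ is a convex combination of $\alpha_i^{(0)},\alpha_i^{(1)}\in[0,1]$, hence in $[0,1]$, and $C_\theta\pth{|\calN|+\sum_{i\in\calF}\alpha_i(\theta)}=\theta C_1\pth{|\calN|+\sum_{i\in\calF}\alpha_i^{(1)}}+(1-\theta)C_0\pth{|\calN|+\sum_{i\in\calF}\alpha_i^{(0)}}=\theta+(1-\theta)=1$. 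A direct check gives $p_\theta\triangleq\theta p_1+(1-\theta)p_0=C_\theta\pth{\sum_{i\in\calN}h_i+\sum_{i\in\calF}\alpha_i(\theta)h_i}$, so $p_\theta\in\calC$ (and $p_0,p_1$ are the values of $p_\theta$ at $\theta=0,1$). This bookkeeping with the normalization constant is the only step that needs care; the one genuine point is that $C_\theta>0$ (so $\alpha_i(\theta)$ is well defined — here the bound $C\ge 1/n$ for valid functions is used) and that $\alpha_i(\theta)\in[0,1]$.

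Finally, to prove $Y$ convex, take $y_0,y_1\in Y$ with, say, $y_0\le y_1$, and choose $p_0,p_1\in\calC$ with $y_0\in\argmin p_0$ and $y_1\in\argmin p_1$; fix any $y\in[y_0,y_1]$. The map $g(\theta)\triangleq p_\theta^{\prime}(y)=\theta p_1^{\prime}(y)+(1-\theta)p_0^{\prime}(y)$ is affine, hence continuous, on $[0,1]$. Since $p_0$ is convex and $y\ge y_0\in\argmin p_0$, we have $g(0)=p_0^{\prime}(y)\le 0$; since $p_1$ is convex and $y\le y_1\in\argmin p_1$, we have $g(1)=p_1^{\prime}(y)\ge 0$. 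By the intermediate value theorem there is $\theta^*\in[0,1]$ with $p_{\theta^*}^{\prime}(y)=0$; as $p_{\theta^*}$ is convex and differentiable, $y\in\argmin p_{\theta^*}$, and $p_{\theta^*}\in\calC$ by the previous paragraph, so $y\in Y$. Hence $[y_0,y_1]\subseteq Y$, \ie $Y$ is convex. The main work is identifying that $\calC$ is convex; given that, convexity of $Y$ follows from the first-order optimality characterization together with one application of the intermediate value theorem.
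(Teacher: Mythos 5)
Your proof is correct and rests on the same mechanism as the paper's: a convex combination of two valid functions is again valid (your bookkeeping with $C_\theta$ and $\alpha_i(\theta)$ is exactly the normalization check the paper performs), and one chooses the mixing weight so that the derivative of the combined function vanishes at the intermediate point. The only organizational difference is that you interpolate directly between $p_0$ and $p_1$ and invoke the intermediate value theorem, whereas the paper routes the interpolation through the reference function $\widetilde p=\frac{1}{|\calN|}\sum_{i\in\calN}h_i$ and branches on the sign of $\widetilde p^{\prime}(x_\alpha)$; your version is the more direct of the two. One sign transposition to fix: for convex $p_0$ with $y\ge y_0\in\argmin p_0$, monotonicity of $p_0^{\prime}$ gives $p_0^{\prime}(y)\ge 0$, not $\le 0$, and symmetrically $p_1^{\prime}(y)\le 0$; since the two endpoint values still bracket $0$, the intermediate value theorem applies unchanged and the argument goes through.
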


\begin{lemma}
\label{valid closed crash}
$Y$ is a closed set.
\end{lemma}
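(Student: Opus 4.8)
The plan is to argue sequentially. Let $\{y_k\}$ be a sequence in $Y$ with $y_k\to y$; I want to show $y\in Y$. By the definition of $Y$, for each $k$ there is a valid function $p_k\in\calC$, determined by coefficients $\{\alpha_i^{(k)}\}_{i\in\calF}\subseteq[0,1]$ and the induced constant $C_k = 1/\pth{|\calN|+\sum_{i\in\calF}\alpha_i^{(k)}}$, such that $y_k\in\argmin p_k$. Since $p_k$ is a positive multiple of a nonnegative combination of admissible (hence convex and continuously differentiable) functions, $p_k$ is itself convex and continuously differentiable, so membership in $\argmin p_k$ is equivalent to the first-order condition $p_k'(y_k)=0$.

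Next I would exploit compactness of the parameter space. The vectors $\pth{\alpha_i^{(k)}}_{i\in\calF}$ all lie in the compact cube $[0,1]^{|\calF|}$, so along a subsequence — which I continue to index by $k$, noting $y_k\to y$ still holds — they converge to some $\pth{\alpha_i^\ast}_{i\in\calF}\in[0,1]^{|\calF|}$, and correspondingly $C_k\to C^\ast \triangleq 1/\pth{|\calN|+\sum_{i\in\calF}\alpha_i^\ast}$. Define $p^\ast(x)\triangleq C^\ast\pth{\sum_{i\in\calN}h_i(x)+\sum_{i\in\calF}\alpha_i^\ast h_i(x)}$; by construction $p^\ast\in\calC$. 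The key estimate is that $p_k'$ converges to $(p^\ast)'$ \emph{uniformly on all of} $\reals$: expanding $p_k'(x)-(p^\ast)'(x)$ and using $|h_i'(x)|\le L$ for every $x$, one bounds $\sup_{x\in\reals}|p_k'(x)-(p^\ast)'(x)|$ by $|C_k-C^\ast|\,|\calN|\,L + L\sum_{i\in\calF}|C_k\alpha_i^{(k)}-C^\ast\alpha_i^\ast|$, which tends to $0$.

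Finally I would combine these via the triangle inequality. On the one hand $(p^\ast)'$ is $L$-Lipschitz and $y_k\to y$, so $(p^\ast)'(y_k)\to(p^\ast)'(y)$. On the other hand $p_k'(y_k)=0$, so $|(p^\ast)'(y_k)| = |(p^\ast)'(y_k)-p_k'(y_k)|\le\sup_{x\in\reals}|(p^\ast)'(x)-p_k'(x)|\to 0$. A sequence cannot have two limits, hence $(p^\ast)'(y)=0$; since $p^\ast$ is convex this forces $y\in\argmin p^\ast\subseteq Y$, and closedness follows.

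The step I expect to require the most care is establishing the uniform convergence $p_k'\to(p^\ast)'$ on all of $\reals$, since this is exactly where the admissibility hypothesis $|h_i'(\cdot)|\le L$ is indispensable: without a uniform gradient bound, the two finite combinations could differ by an arbitrarily large amount far from the origin, and the first-order condition would not pass to the limit. The remaining ingredients — compactness of $[0,1]^{|\calF|}$, the first-order optimality characterization for convex differentiable functions, and Lipschitz continuity of $(p^\ast)'$ (inherited from the $L$-Lipschitz derivatives of the $h_i$) — are routine.
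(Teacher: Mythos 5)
Your proof is correct, but it follows a genuinely different route from the paper's. You argue sequentially: a limit $y$ of points $y_k\in\argmin p_k$ is captured by extracting a convergent subsequence of the coefficient vectors $\pth{\alpha_i^{(k)}}$ in the compact cube $[0,1]^{|\calF|}$, passing $C_k\to C^\ast$, and transferring the first-order condition $p_k'(y_k)=0$ to the limit function via the uniform bound $\sup_{x\in\reals}|p_k'(x)-(p^\ast)'(x)|\le |C_k-C^\ast|\,|\calN|\,L+L\sum_{i\in\calF}|C_k\alpha_i^{(k)}-C^\ast\alpha_i^\ast|$, which is exactly where the admissibility hypothesis $|h_i'|\le L$ enters; every step checks out, and the conclusion $y\in\argmin p^\ast\subseteq Y$ is valid since a critical point of a convex differentiable function is a global minimizer. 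The paper instead leans on the one-dimensional structure: having already shown $Y$ is convex (hence an interval), it introduces the auxiliary monotone continuous function $r(x)=\sum_{i\in\calN}h_i'(x)+\sum_{i\in\calF}h_i'(x)\mathbf{1}\{h_i'(x)>0\}$, locates a root $x_0$, and exhibits an explicit extremal valid function (weights on $\calF$ given by the indicators $\mathbf{1}\{h_i'(x_0)>0\}$) whose minimizer attains $\min Y$, with a symmetric construction for $\max Y$; closedness then follows because a bounded interval whose endpoints are attained is closed. The trade-off: your compactness argument is shorter, more standard, independent of Lemma \ref{valid convex crash}, and would survive in higher dimensions, whereas the paper's construction additionally identifies $\min Y$ and $\max Y$ explicitly and, in particular, establishes that they are attained — a fact the surrounding text actually needs (Definition \ref{BS def resilient} invokes $\min Y$ and $\max Y$), and which closedness alone does not deliver without also noting that $Y$ is bounded (easy here, since any minimizer of a valid function must lie between $\min_i\min\argmin h_i$ and $\max_i\max\argmin h_i$, but worth stating if your proof were to replace the paper's).
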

The proofs of Lemma \ref{valid convex crash} and Lemma \ref{valid closed crash} are presented in Appendix \ref{appendix: valid convex crash} and Appendix \ref{appendix: valid closed crash}, respectively. \\

In addition, since $Y$ is convex, $Dist\pth{\cdot, Y}$ is also convex.

\subsubsection{Asymptotic Consensus under Algorithm 1}
We first show that asymptotic consensus among the non-faulty agents is achieved under Algorithm 1. The following proposition is used in proving consensus.

\begin{proposition}
\label{crash sum 0}
Let $0\le b<1$. Define $\ell(t)=\sum_{r=0}^{t-1}\lambda[r] b^{t-r}.$ The limit of $\ell(t)$ exists and
$$\lim_{t\diverge} \ell(t)=0.$$
\end{proposition}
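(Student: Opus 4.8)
The plan is to use the standard ``$\epsilon$-splitting'' argument for convolutions of a geometric kernel against a null sequence; the only properties of $\{\lambda[t]\}$ that I will need are $\lambda[t]\ge 0$ and $\lambda[t]\to 0$. First I would dispose of the degenerate case $b=0$: for every $t\ge 1$ each summand in $\ell(t)=\sum_{r=0}^{t-1}\lambda[r]b^{t-r}$ has exponent $t-r\ge 1$, so $b^{t-r}=0$ and $\ell(t)=0$ identically, whence the claim is immediate. So assume $0<b<1$ from now on.

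Next, fix $\epsilon>0$. Since $\lambda[r]\to 0$, pick $N=N(\epsilon)$ such that $0\le \lambda[r]<\epsilon$ for all $r\ge N$. For $t>N$ split
\begin{align}
\ell(t)=\underbrace{\sum_{r=0}^{N-1}\lambda[r]\,b^{t-r}}_{=:A(t)}+\underbrace{\sum_{r=N}^{t-1}\lambda[r]\,b^{t-r}}_{=:B(t)}.
\end{align}
For the tail $B(t)$, bound $\lambda[r]<\epsilon$ and re-index $k=t-r$ to get $B(t)\le \epsilon\sum_{k=1}^{t-N}b^{k}\le \epsilon\,\frac{b}{1-b}\le \frac{\epsilon}{1-b}$, a bound uniform in $t$. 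For the head $A(t)$, the number of terms is the fixed constant $N$, and for $r\le N-1$ we have $t-r\ge t-N+1$, so $A(t)\le b^{\,t-N+1}\sum_{r=0}^{N-1}\lambda[r]$; since $N$ is fixed and $0<b<1$, this quantity tends to $0$ as $t\to\infty$.

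Combining the two estimates, $\limsup_{t\to\infty}\ell(t)\le 0+\frac{\epsilon}{1-b}$, and letting $\epsilon\downarrow 0$ gives $\limsup_{t\to\infty}\ell(t)\le 0$. On the other hand $\ell(t)\ge 0$ for all $t$ because $\lambda[r]\ge 0$ and $b^{t-r}\ge 0$, so $\liminf_{t\to\infty}\ell(t)\ge 0$. Hence $\lim_{t\to\infty}\ell(t)$ exists and equals $0$.

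There is no genuine obstacle here; the only point requiring a little care is to separate the finite ``head'' (which vanishes because a fixed finite geometric sum with ratio $b<1$ decays, regardless of how large the early $\lambda[r]$ are) from the ``tail'' (which is small because $\lambda[r]$ is small there and the geometric series is summable). An alternative, essentially equivalent, route would be to invoke a Toeplitz/Silverman–Toeplitz-type lemma with the (row-normalized up to a bounded factor) weights $b^{t-r}$, but the direct splitting above is shorter and self-contained.
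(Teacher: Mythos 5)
Your proof is correct. It is the same basic ``split into head and tail'' idea as the paper's, but executed with a different split point and under weaker hypotheses, which is worth noting. The paper splits the sum at the moving index $\lceil t/2\rceil$ and uses the monotonicity of the stepsizes ($\lambda[t+1]\le\lambda[t]$, which is part of the paper's assumptions) twice: the head is bounded by $\lambda[0]\,b^{t/2-1}/(1-b)$ and the tail by $b\,\lambda[\lceil t/2\rceil]/(1-b)$, both of which tend to $0$ because $b^{t/2}\to 0$ and $\lambda[\lceil t/2\rceil]\to 0$. You instead fix $\epsilon>0$, split at a fixed index $N(\epsilon)$ beyond which $\lambda[r]<\epsilon$, and use only $\lambda[r]\ge 0$ and $\lambda[r]\to 0$; monotonicity plays no role. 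Your version is therefore slightly more general (it would survive if the stepsizes were not assumed nonincreasing), at the cost of the extra $\epsilon$-then-$\limsup$ bookkeeping; the paper's version gives a cleaner explicit bound $\ell(t)\le \lambda[0]\,b^{t/2-1}/(1-b)+b\,\lambda[\lceil t/2\rceil]/(1-b)$, which is convenient when one wants a quantitative rate rather than just convergence. Your separate treatment of $b=0$ is harmless but unnecessary, since every bound you write also holds (trivially) in that case.
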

Proposition \ref{crash sum 0} is proved in Appendix \ref{appendix: crash sum 0}. \\

Recall that for each $t\ge 0$, $\calN[t]$ is the collection of agents that have not been crashed till the end of iteration $t$. Note that $\calN[t+1]\subseteq \calN[t]$ for $t\ge 0$, and that $\lim_{t\diverge} \calN[t]=\calN$.
Denote $M(t)=\max_{i\in \calN[t]} x_i[t]$ and $m(t)=\min_{i\in \calN[t]} x_i[t]$.

\begin{lemma}
\label{consensus alg1}
Under Algorithm 1, the sequence $\{M[t]-m[t]\}_{t=0}^{\infty}$ converges and
$$\lim_{t\diverge} \pth{M[t]-m[t]}~=~0.$$
\end{lemma}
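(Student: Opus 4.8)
The plan is to show that $\{M[t]-m[t]\}$ is (eventually) nonincreasing up to a summable error, and that the ``contraction'' happens often enough to force the limit to be $0$. The key structural observation is that each $x_j[t]$ is an average of the auxiliary variables $s_i[t]$ over $i\in\calR^2_j[t-1]$, and each $s_i[t]$ equals $x_i[t-1]$ minus a gradient correction of magnitude at most $\lambda[t-1]L$ (since $|\calR^1_i[t-1]|\ge 1$ and all gradients are bounded by $L$). Consequently, writing $\widehat{x}_i[t-1]$ for the ``true'' averaging target, each $s_i[t]\in[m(t-1)-\lambda[t-1]L,\ M(t-1)+\lambda[t-1]L]$, and since $x_j[t]$ is a convex combination of such $s_i[t]$'s, we get $M[t]\le M[t-1]+\lambda[t-1]L$ and $m[t]\ge m[t-1]-\lambda[t-1]L$, hence the crude bound $M[t]-m[t]\le (M[t-1]-m[t-1])+2\lambda[t-1]L$. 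This alone is not enough; I need the averaging in Steps 4--5 to actually shrink the spread.

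First I would isolate the mixing effect. Because the communication graph is complete and a non-faulty agent $j$ receives $s_i[t]$ from \emph{every} non-faulty $i$ (and possibly from faulty agents that have not yet crashed), the set $\calR^2_j[t-1]$ contains all of $\calN[t]$. I would argue that any two non-faulty agents' updates $x_j[t], x_k[t]$ are both convex combinations of a common ``core'' set of values $\{s_i[t]: i\in\calN[t]\}$ in which every coefficient is at least $1/|\calN[t-1]|\ge 1/n$. A standard argument (cf.\ the iterative-consensus literature, e.g.\ Vaidya et al.) then gives that the spread contracts: there is a constant $\rho<1$ (one may take $\rho = 1-1/n$) with
\begin{align}
\label{eq:contraction-plan}
M[t]-m[t] \;\le\; \rho\,\bigl(M[t-1]-m[t-1]\bigr) + 2\lambda[t-1]L,
\end{align}
where the additive term again absorbs the gradient perturbations of size $\le\lambda[t-1]L$ on each $s_i[t]$. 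The subtlety here is that faulty agents may send arbitrary (but, under crash faults, merely \emph{stale} or \emph{absent}, not adversarial) values before crashing; under crash semantics the $s_i[t]$ they send is a legitimately computed value, so it too lies in the interval $[m(t-1)-\lambda[t-1]L, M(t-1)+\lambda[t-1]L]$, and \eqref{eq:contraction-plan} survives. (If one wanted the cleaner route, one could discard faulty contributions entirely and only use that each $x_j[t]$ sees all of $\calN[t]\supseteq\calN$; since $\calN[t]\downarrow\calN$ is eventually constant, the core-set argument applies verbatim for $t$ large.)

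Given \eqref{eq:contraction-plan}, I would finish by unrolling the recursion: with $D(t)=M[t]-m[t]$,
\begin{align}
\label{eq:unroll-plan}
D(t) \;\le\; \rho^t D(0) + 2L\sum_{r=0}^{t-1}\lambda[r]\,\rho^{\,t-1-r}.
\end{align}
The first term goes to $0$ since $\rho<1$. For the second term I invoke Proposition~\ref{crash sum 0} with $b=\rho$: the sum $\sum_{r=0}^{t-1}\lambda[r]\rho^{t-r}$ tends to $0$ because $\lambda[r]\to 0$ and $\rho<1$ (split the sum at $r=t/2$: the early indices are killed by $\rho^{t/2}$, the late indices by $\sup_{r\ge t/2}\lambda[r]\to0$ times $\sum_k\rho^k$). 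Hence $D(t)\to 0$. Convergence (not just $\limsup=0$) of $\{D(t)\}$ is then immediate since $D(t)\ge 0$ and $D(t)\to 0$.

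\textbf{Main obstacle.} The delicate point is justifying the uniform contraction factor $\rho<1$ in \eqref{eq:contraction-plan} in the presence of agents that crash \emph{mid-iteration}: an agent may send its message to some neighbors but not others within the same iteration, so different non-faulty agents may be averaging over slightly different sets $\calR^2_j[t-1]$. I would handle this by noting that in any case $\calN[t]\subseteq\calR^2_j[t-1]\subseteq\calN[t-1]$ for every non-faulty $j$, so all $x_j[t]$ share the common convex-combination core $\{s_i[t]:i\in\calN[t]\}$ with weights bounded below by $1/|\calN[t-1]|$; this common-core property is exactly what drives the contraction, and it is robust to the asymmetry. Everything else (the gradient bound, Proposition~\ref{crash sum 0}, unrolling) is routine.
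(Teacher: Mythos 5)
Your proposal is correct, and at the top level it follows the same strategy as the paper's proof: establish a one-step recursion of the form $M[t]-m[t]\le \rho\,\pth{M[t-1]-m[t-1]}+2\lambda[t-1]L$ with $\rho<1$ (using that each $s_k[t]$ lies in $[m[t-1]-\lambda[t-1]L,\,M[t-1]+\lambda[t-1]L]$ because gradients are bounded by $L$), unroll it, and kill the resulting weighted tail sum with Proposition~\ref{crash sum 0}. The substantive difference is how the contraction factor is obtained. The paper rescales both averages by $\min\{1/|\calR^{2}_i[t-1]|,\,1/|\calR^{2}_j[t-1]|\}$ so that the contributions of $\calR^{2}_i[t-1]\cap\calR^{2}_j[t-1]$ cancel exactly, then bounds the leftover symmetric-difference terms via $|\calR^{2}_i[t-1]-\calR^{2}_j[t-1]|\le f$ and $|\calR^{2}_i[t-1]|\ge n-f$, arriving at the multiplicative form $M[t]-m[t]\le \frac{f}{n-f}\pth{M[t-1]-m[t-1]+2\lambda[t-1]L}$ of (\ref{alg ub}). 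You instead invoke the generic common-core argument from iterative consensus: since $\calN[t]\subseteq\calR^{2}_j[t-1]\subseteq\calN[t-1]$ for every non-faulty $j$, all updates are convex combinations sharing a core whose total common weight is at least $|\calN[t]|/n\ge (n-f)/n$, which gives $\rho\le f/n$ (your stated $\rho=1-1/n$ is a looser but still valid bound). Both derivations are sound, and since any $\rho<1$ suffices for Proposition~\ref{crash sum 0}, the difference in constants is immaterial here; your route is more portable to incomplete communication graphs, while the paper's is a self-contained computation tailored to the complete graph. Your handling of the one delicate point --- agents crashing mid-iteration, so that different non-faulty agents average over different sets --- is exactly right: the containments $\calN[t]\subseteq\calR^{2}_j[t-1]\subseteq\calN[t-1]$ guarantee both the common core and that every received $s_i[t]$ was legitimately computed from some $x_i[t-1]\in[m[t-1],M[t-1]]$.
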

\begin{proof}

Let $i, j\in \calN[t]$ such that 
$x_i[t]=M[t]$, and $x_j[t]=m[t]$.
\begin{align}
\label{iterative m crash 1}
\nonumber
M[t]-m[t]&=x_i[t]-x_j[t]\\
\nonumber
&= \frac{1}{\left | \calR^{2}_i[t-1] \right |}\sum_{k\in \calR^{2}_i[t-1]} s_k[t] - \frac{1}{\left | \calR^{2}_j[t-1] \right |}\sum_{p\in \calR^{2}_j[t-1]} s_p[t] ~~~~~~~~\text{by}~(\ref{update x crash 1})\\
\nonumber
&=\min\left \{\frac{1}{\left | \calR^{2}_i[t-1] \right |}, \frac{1}{\left | \calR^{2}_j[t-1] \right |}\right\} \pth{\sum_{k\in \calR^{2}_i[t-1]} s_k[t]-\sum_{p\in \calR^{2}_j[t-1]} s_p[t]}\\
\nonumber
&\quad+\pth{\frac{1}{\left | \calR^{2}_i[t-1] \right |}-\min\left\{\frac{1}{\left | \calR^{2}_i[t-1] \right |}, \frac{1}{\left | \calR^{2}_j[t-1] \right |}\right\} }\sum_{k\in \calR^{2}_i[t-1]} s_k[t]\\
&\quad - \pth{\frac{1}{\left | \calR^{2}_j[t-1] \right |}-\min\left\{\frac{1}{\left | \calR^{2}_i[t-1] \right |}, \frac{1}{\left | \calR^{2}_j[t-1] \right |}\right\} }\sum_{p\in \calR^{2}_j[t-1]} s_p[t].
\end{align}


Assume that $\left | \calR^{2}_i[t-1] \right |\ge \left | \calR^{2}_j[t-1] \right |$. The case that $\left | \calR^{2}_i[t-1] \right |< \left | \calR^{2}_j[t-1] \right |$ can be shown similarly.

We can simplify (\ref{iterative m crash 1}) as follows.
\begin{align}
\label{im1}
M[t]-m[t]=\frac{1}{\left | \calR^{2}_i[t-1] \right |}\pth{\sum_{k\in \calR^{2}_i[t-1]} s_k[t]-\sum_{p\in \calR^{2}_j[t-1]} s_p[t]}- \pth{\frac{1}{\left | \calR^{2}_j[t-1] \right |}-\frac{1}{\left | \calR^{2}_i[t-1] \right |} }\sum_{p\in \calR^{2}_j[t-1]} s_p[t].
\end{align}
For each $k$, wet get
\begin{align}
\label{alg1 consensus 1}
\nonumber
s_k[t]&=x_k[t-1]-\frac{\lambda[t-1]}{\left |\calR^{1}_k[t-1]\right |}\pth{\sum_{i\in \calR^{1}_k[t-1]}h_i^{\prime}(x_k[t-1])} \\
\nonumber
&\le x_k[t-1]+\frac{\lambda[t-1]}{\left |\calR^{1}_k[t-1]\right |}\pth{\sum_{i\in \calR^{1}_k[t-1]}L}~~~\text{since }|h_k^{\prime}(x)|\le L, \forall x\in \reals, \forall i\in \calV \\
\nonumber
&=x_k[t-1]+\frac{\lambda[t-1]}{\left |\calR^{1}_k[t-1]\right |}\left| \calR^{1}_k[t-1]\right|L\\
&=x_k[t-1]+\lambda[t-1]L\le M[t-1]+\lambda[t-1]L.
\end{align}
Similarly, for each $k\in \calN[t]$, it holds that
\begin{align}
\label{alg1 consensus 2}
s_k[t]\ge m[t-1]-\lambda[t-1]L.
\end{align}

We bound the two terms in the right hand side of (\ref{im1}) separately.
For the first term of (\ref{im1}) , we get
\begin{align}
\label{a1}
\nonumber
&\frac{1}{\left | \calR^{2}_i[t-1] \right |}\pth{\sum_{k\in \calR^{2}_i[t-1]} s_k[t]-\sum_{p\in \calR^{2}_j[t-1]} s_p[t]}\\
\nonumber
&=\frac{1}{\left | \calR^{2}_i[t-1] \right |}\pth{\sum_{k\in \calR^{2}_i[t-1]\cap \calR^{2}_j[t-1]} s_k[t]+\sum_{k\in \calR^{2}_i[t-1]-\calR^{2}_j[t-1]} s_k[t]-\sum_{p\in \calR^{2}_j[t-1]\cap \calR^{2}_i[t-1]} s_p[t]-\sum_{p\in \calR^{2}_j[t-1]- \calR^{2}_i[t-1]} s_p[t]}\\
\nonumber
&=\frac{1}{\left | \calR^{2}_i[t-1] \right |}\pth{\sum_{k\in \calR^{2}_i[t-1]-\calR^{2}_j[t-1]} s_k[t]-\sum_{p\in \calR^{2}_j[t-1]- \calR^{2}_i[t-1]} s_p[t]}\\
\nonumber
&\overset{(a)}{\le} \frac{1}{\left | \calR^{2}_i[t-1] \right |}\pth{\sum_{k\in \calR^{2}_i[t-1]-\calR^{2}_j[t-1]} \pth{M[t-1]+\lambda[t-1]L}-\sum_{p\in \calR^{2}_j[t-1]- \calR^{2}_i[t-1]} \pth{m[t-1]-\lambda[t-1]L}}\\
&= \frac{\left |\calR^{2}_i[t-1]-\calR^{2}_j[t-1]\right |}{\left | \calR^{2}_i[t-1] \right |} \pth{M[t-1]+\lambda[t-1]L}-\frac{\left |\calR^{2}_j[t-1]-\calR^{2}_i[t-1]\right |}{\left | \calR^{2}_i[t-1] \right |} \pth{m[t-1]-\lambda[t-1]L}.
\end{align}
Inequality $(a)$ holds due to (\ref{alg1 consensus 1}) and (\ref{alg1 consensus 2}).

For the second term of (\ref{im1}), we get
\begin{align}
\label{a2}
\nonumber
- \pth{\frac{1}{\left | \calR^{2}_j[t-1] \right |}-\frac{1}{\left | \calR^{2}_i[t-1] \right |} }\sum_{p\in \calR^{2}_j[t-1]} s_p[t]~&=~-\frac{\left | \calR^{2}_i[t-1] \right |-\left | \calR^{2}_j[t-1] \right |}{\left | \calR^{2}_j[t-1] \right | \left | \calR^{2}_i[t-1] \right |}\sum_{p\in \calR^{2}_j[t-1]} s_p[t]\\
\nonumber
&\le ~-\frac{\left | \calR^{2}_i[t-1] \right |-\left | \calR^{2}_j[t-1] \right |}{\left | \calR^{2}_j[t-1] \right | \left | \calR^{2}_i[t-1] \right |}\sum_{p\in \calR^{2}_j[t-1]} \pth{m[t-1]-\lambda[t-1]L}\\
&=-\frac{\left | \calR^{2}_i[t-1] \right |-\left | \calR^{2}_j[t-1] \right |}{\left | \calR^{2}_i[t-1] \right |} \pth{m[t-1]-\lambda[t-1]L}.
\end{align}

By (\ref{a1}) and  (\ref{a2}), (\ref{im1}) can be further bounded as
\begin{align}
\label{alg ub}
\nonumber
M[t]-m[t]&=\frac{1}{\left | \calR^{2}_i[t-1] \right |}\pth{\sum_{k\in \calR^{2}_i[t-1]} s_k[t]-\sum_{p\in \calR^{2}_j[t-1]} s_p[t]}- \pth{\frac{1}{\left | \calR^{2}_j[t-1] \right |}-\frac{1}{\left | \calR^{2}_i[t-1] \right |} }\sum_{p\in \calR^{2}_j[t-1]} s_p[t]~~~\text{by }(\ref{im1})\\
\nonumber
&\le \frac{\left |\calR^{2}_i[t-1]-\calR^{2}_j[t-1]\right |}{\left | \calR^{2}_i[t-1] \right |} \pth{M[t-1]+\lambda[t-1]L}-\frac{\left |\calR^{2}_j[t-1]-\calR^{2}_i[t-1]\right |}{\left | \calR^{2}_i[t-1] \right |} \pth{m[t-1]-\lambda[t-1]L}\\
\nonumber
&\quad-\frac{\left | \calR^{2}_i[t-1] \right |-\left | \calR^{2}_j[t-1] \right |}{\left | \calR^{2}_i[t-1] \right |} \pth{m[t-1]-\lambda[t-1]L}~~~\text{by}~(\ref{a1})~\text{and}~  (\ref{a2})\\
\nonumber
&\overset{(a)}{=}\frac{\left |\calR^{2}_i[t-1]-\calR^{2}_j[t-1]\right |}{\left | \calR^{2}_i[t-1] \right |} \pth{M[t-1]+\lambda[t-1]L}-\frac{\left |\calR^{2}_i[t-1]-\calR^{2}_j[t-1]\right |}{\left | \calR^{2}_i[t-1] \right |} \pth{m[t-1]-\lambda[t-1]L}\\
\nonumber
&=\frac{\left |\calR^{2}_i[t-1]-\calR^{2}_j[t-1]\right |}{\left | \calR^{2}_i[t-1] \right |} \pth{M[t-1]-m[t-1]+2\lambda[t-1]L}\\
\nonumber
&\overset{(b)}{\le}\frac{f}{n-f} \pth{M[t-1]-m[t-1]+2\lambda[t-1]L}\\
& \le \pth{\frac{f}{n-f}}^{t}\pth{M[0]-m[0]}+2L\pth{\sum_{r=0}^{t-1}\pth{\frac{f}{n-f}}^{t-r}\lambda[r]}.
\end{align}
Equality $(a)$ is true because that
\begin{align*}
&\left |\calR^{2}_j[t-1]-\calR^{2}_i[t-1]\right |+\left |\calR^{2}_i[t-1]\right |- \left |\calR^{2}_j[t-1]\right |\\
&=\left |\calR^{2}_j[t-1]\right |-\left |\calR^{2}_j[t-1]\cap \calR^{2}_i[t-1]\right |+\left |\calR^{2}_i[t-1]\right |- \left |\calR^{2}_j[t-1]\right |\\
&=\left |\calR^{2}_i[t-1]\right |-\left |\calR^{2}_j[t-1]\cap \calR^{2}_i[t-1]\right |\\
&=\left |\calR^{2}_i[t-1]-\calR^{2}_j[t-1]\right |.
\end{align*}
Since $|\calF|\le f$, it holds that $\left |\calR^{2}_i[t-1]-\calR^{2}_j[t-1]\right |\le f$ and that $\left |\calR^{2}_i[t-1]\right |\ge n-f$. Thus,
$$\frac{\left |\calR^{2}_i[t-1]-\calR^{2}_j[t-1]\right |}{\left | \calR^{2}_i[t-1] \right |}\le \frac{f}{n-f},$$
and inequality $(b)$ holds.\\

It follows from Proposition \ref{crash sum 0} that
$$\lim_{t\diverge} 2L\pth{\sum_{r=0}^{t-1}\pth{\frac{f}{n-f}}^{t-r}\lambda[r]}=0,$$
where $b=\frac{f}{n-f}$.
Thus, taking limit sup on both sides of (\ref{alg ub}), we get
\begin{align*}
\limsup_{t\diverge}\pth{M[t]-m[t]}\le \lim_{t\diverge} \pth{\pth{\frac{f}{n-f}}^{t}\pth{M[0]-m[0]}}+2L\lim_{t\diverge} \pth{\sum_{r=0}^{t-1}\pth{\frac{f}{n-f}}^{t-r}\lambda[r]}=0.
\end{align*}
On the other hand, by definition of $\pth{M[t]-m[t]}$, for each $t\ge 0$ we get $\pth{M[t]-m[t]}\ge 0$. Thus
$$\liminf_{t\diverge} \pth{M[t]-m[t]}\ge 0.$$
Then, we obtain
$$\limsup_{t\diverge}\pth{M[t]-m[t]}\le 0\le \liminf_{t\diverge} \pth{M[t]-m[t]}.$$
Thus, the limit of $\pth{M[t]-m[t]}$ exists and
$$\lim_{t\diverge}\pth{M[t]-m[t]}=0.$$

\eproof
\end{proof}
Recall that $M[t]=\max_{i\in \calN[t]} x_i[t]$ and $m[t]=\min_{i\in \calN[t]} x_i[t]$. Lemma \ref{consensus alg1} implies that asymptotic consensus is achieved under Algorithm 1.
The following lemma is used in the correctness proof of Algorithm 1.
\begin{lemma}
\label{alg1 finite ub crash}
Under Algorithm 1, the following holds.
$$\sum_{t=0}^{\infty} \lambda[t]\pth{M[t]-m[t]}<\infty.$$
\end{lemma}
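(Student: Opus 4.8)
The plan is to bootstrap directly from the recursive estimate established in the proof of Lemma~\ref{consensus alg1}. Recall that there we obtained, with $b \triangleq \frac{f}{n-f} \in [0,1)$ (in fact $b<\frac12$ since $n>3f$),
$$M[t]-m[t] \;\le\; b\pth{M[t-1]-m[t-1] + 2\lambda[t-1]L} \qquad \text{for all } t\ge 1.$$
Write $D[t] \triangleq M[t]-m[t] \ge 0$. Multiplying this inequality by $\lambda[t]\ge 0$ and using the monotonicity $\lambda[t]\le \lambda[t-1]$ on the right-hand side gives
$$\lambda[t]\,D[t] \;\le\; b\,\lambda[t-1]\,D[t-1] + 2bL\,\lambda[t-1]^2, \qquad t\ge 1.$$

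Next I would sum this over $t=1,\dots,T$. Setting $S_T \triangleq \sum_{t=0}^{T}\lambda[t]\,D[t]$, the left-hand side sums to $S_T - \lambda[0]D[0]$, the first term on the right sums to $b\sum_{t=0}^{T-1}\lambda[t]D[t] = bS_{T-1}\le bS_T$, and the second to $2bL\sum_{t=0}^{T-1}\lambda[t]^2 \le 2bL\sum_{t=0}^{\infty}\lambda[t]^2$. Rearranging,
$$(1-b)\,S_T \;\le\; \lambda[0]\,D[0] + 2bL\sum_{t=0}^{\infty}\lambda[t]^2 .$$
Since $\sum_t \lambda[t]^2 < \infty$ by hypothesis on the stepsizes and $D[0]=M[0]-m[0]<\infty$, the right-hand side is a finite constant independent of $T$. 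Hence $S_T$ is bounded above uniformly in $T$; being nondecreasing (all summands are nonnegative), it converges, which is exactly $\sum_{t=0}^{\infty}\lambda[t]\pth{M[t]-m[t]}<\infty$.

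There is no serious obstacle here — the argument is essentially careful bookkeeping. The point worth flagging is that one should not try to bound $D[t]$ by its unrolled expression and integrate term-by-term naively; the clean route is to keep the recursion in multiplicative form, multiply through by $\lambda[t]$, and exploit both $\lambda[t]\le\lambda[t-1]$ and $b<1$ so that the resulting inequality for the partial sums $S_T$ can be solved for $S_T$. An equivalent alternative is to start from the explicit bound $D[t]\le b^{t}D[0]+2L\sum_{r=0}^{t-1}b^{t-r}\lambda[r]$ already derived in the proof of Lemma~\ref{consensus alg1}, multiply by $\lambda[t]$, interchange the order of summation in the resulting double sum, and apply $\lambda[t]\lambda[r]\le\frac12(\lambda[t]^2+\lambda[r]^2)$ together with $\sum_{k\ge 1} b^{k} = \frac{b}{1-b}$; in either case finiteness reduces to $\sum_t\lambda[t]^2<\infty$.
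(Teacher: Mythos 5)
Your argument is correct. Both the one-step recursion $M[t]-m[t]\le \frac{f}{n-f}\left(M[t-1]-m[t-1]+2\lambda[t-1]L\right)$ and its unrolled form are established in the proof of Lemma~\ref{consensus alg1}, and every subsequent step checks out: multiplying by $\lambda[t]$, invoking $\lambda[t]\le\lambda[t-1]$, summing, and noting that $b=\frac{f}{n-f}<1$ (indeed $<\tfrac12$ since $n>3f$) lets you solve $(1-b)S_T\le \lambda[0]D[0]+2bL\sum_t\lambda[t]^2$ for a uniform bound on the nondecreasing partial sums.

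The route differs from the paper's in a small but genuine way. The paper starts from the fully unrolled bound $D[t]\le b^{t}D[0]+2L\sum_{r=0}^{t-1}b^{t-r}\lambda[r]$, multiplies by $\lambda[t]$, and is then forced to handle a double sum: it splits the result into three terms, applies $\lambda[t]\lambda[r]\le\frac12(\lambda^2[t]+\lambda^2[r])$, and interchanges the order of summation to reduce each piece to a geometric series times $\sum_t\lambda^2[t]$. You instead keep the recursion in one-step form and absorb the $bS_{T-1}\le bS_T$ term into the left-hand side, which eliminates the double sum and the summation interchange entirely; the only price is the extra use of the monotonicity $\lambda[t]\le\lambda[t-1]$, which the paper's stepsize assumptions supply. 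Your closing remark correctly identifies the paper's computation as the equivalent alternative. Both arguments ultimately rest on the same two facts, $b<1$ and $\sum_t\lambda^2[t]<\infty$; yours is the shorter bookkeeping.
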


\begin{proof}
Since
\begin{align*}
\sum_{t=0}^{\infty} \lambda[t]\pth{M[t]-m[t]}=\lambda[0]\pth{M[0]-m[0]}+\sum_{t=1}^{\infty} \lambda[t]\pth{M[t]-m[t]},
\end{align*}
and $\lambda[0]\pth{M[0]-m[0]}<\infty$, 
to show Lemma \ref{alg1 finite ub crash}, it is enough to show that 
$$\sum_{t=1}^{\infty} \lambda[t]\pth{M[t]-m[t]}<\infty.$$

\begin{align}
\label{alg1 finiteness CB}
\nonumber
\sum_{t=1}^{\infty} \lambda[t]\pth{M[t]-m[t]}&\le
\sum_{t=1}^{\infty} \lambda[t]\pth{ \pth{\frac{f}{n-f}}^{t}\pth{M[0]-m[0]}+2L\pth{\sum_{r=0}^{t-1}\pth{\frac{f}{n-f}}^{t-r}\lambda[r]}}~~~\text{by (\ref{alg ub})}\\
\nonumber
&=\pth{M[0]-m[0]}\sum_{t=1}^{\infty} \lambda[t]\pth{\frac{f}{n-f}}^{t}
+2L\sum_{t=1}^{\infty}\sum_{r=0}^{t-1}\pth{\pth{\frac{f}{n-f}}^{t-r}\lambda[r]\lambda[t]}\\
\nonumber
&\overset{(a)}{\le} \pth{M[0]-m[0]}\sum_{t=1}^{\infty} \lambda[t]\pth{\frac{f}{n-f}}^{t}
+L\sum_{t=1}^{\infty}\sum_{r=0}^{t-1}\pth{\pth{\frac{f}{n-f}}^{t-r}\pth{\lambda^2[r]+\lambda^2[t]}}\\
\nonumber
&= \pth{M[0]-m[0]}\sum_{t=1}^{\infty} \lambda[t]\pth{\frac{f}{n-f}}^{t}+L\sum_{t=1}^{\infty}\lambda^2[t]\sum_{r=0}^{t-1}\pth{\frac{f}{n-f}}^{t-r}\\
&\quad +L\sum_{t=1}^{\infty}\sum_{r=0}^{t-1}\pth{\pth{\frac{f}{n-f}}^{t-r}\lambda^2[r]}
\end{align}
Inequality $(a)$ holds because $\lambda[t]\lambda[r]\le \frac{\lambda^2[t]+\lambda^2[r]}{2}$. We bound the three terms in the RHS of (\ref{alg1 finiteness CB}) separately.

\paragraph{The first term of (\ref{alg1 finiteness CB}): }
Since $\lambda[t]\le \lambda [0]$ for each $t\ge 1$, we have
\begin{align}
\label{1CB finiteness t1}
\nonumber
\pth{M[0]-m[0]}\sum_{t=1}^{\infty} \lambda[t]\pth{\frac{f}{n-f}}^{t}&\le
\pth{M[0]-m[0]} \lambda[0]\sum_{t=1}^{\infty}\pth{\frac{f}{n-f}}^{t}\\
\nonumber
&\le \pth{M[0]-m[0]} \lambda[0] \frac{1}{1-\frac{f}{n-f}} \\
&=\pth{M[0]-m[0]} \lambda[0] \frac{n-f}{n-2f}<\infty.
\end{align}

\paragraph{The second term of (\ref{alg1 finiteness CB}):}
\begin{align}
\label{1CB finiteness t2}
\nonumber
L\sum_{t=1}^{\infty}\lambda^2[t]\sum_{r=0}^{t-1}\pth{\frac{f}{n-f}}^{t-r}&=
L\sum_{t=1}^{\infty}\lambda^2[t]\sum_{r=1}^{t}\pth{\frac{f}{n-f}}^{r}\\
\nonumber
&\le L \sum_{t=1}^{\infty}\lambda^2[t]\sum_{r=0}^{\infty}\pth{\frac{f}{n-f}}^{r}\\
\nonumber
&=L\sum_{t=1}^{\infty}\lambda^2[t]\frac{1}{1-\frac{f}{n-f}}\\
\nonumber
&= \frac{n-f}{n-2f}L\sum_{t=1}^{\infty}\lambda^2[t]\\
&<\infty
 \end{align}
The last inequality follows from the fact that $\sum_{t=1}^{\infty}\lambda^2[t]\le \sum_{t=0}^{\infty}\lambda^2[t]<\infty$.

\paragraph{The third term of (\ref{alg1 finiteness CB}):}
For any fixed $T$, we get
\begin{align*}
L\sum_{t=1}^{T}\sum_{r=0}^{t-1}\pth{\pth{\frac{f}{n-f}}^{t-r}\lambda^2[r]}&=L\sum_{r=0}^{T-1} \lambda^2[r] \sum_{t=1}^{T} \pth{\frac{f}{n-f}}^{t}\\
&\le L\sum_{r=0}^{T-1} \lambda^2[r] \sum_{t=0}^{\infty} \pth{\frac{f}{n-f}}^{t} \\
&=\frac{n-f}{n-2f}L\sum_{r=0}^{T-1} \lambda^2[r].
\end{align*}
Let $T\diverge$, we get
\begin{align}
\label{1CB finiteness t3}
L\sum_{t=1}^{\infty}\sum_{r=0}^{t-1}\pth{\pth{\frac{f}{n-f}}^{t-r}\lambda^2[r]}&=\frac{n-f}{n-2f}L\sum_{r=0}^{\infty} \lambda^2[r]~<~\infty.
\end{align}
%
We get
\begin{align*}
\sum_{t=1}^{\infty} \lambda[t] \pth{M[t]-m[t]}&\le
 \pth{M[0]-m[0]}\sum_{t=1}^{\infty} \lambda[t]\pth{\frac{f}{n-f}}^{t}+L\sum_{t=1}^{\infty}\lambda^2[t]\sum_{r=0}^{t-1}\pth{\frac{f}{n-f}}^{t-r}\\
&\quad +L\sum_{t=1}^{\infty}\sum_{r=0}^{t-1}\pth{\pth{\frac{f}{n-f}}^{t-r}\lambda^2[r]}~~\text{by}~(\ref{alg1 finiteness CB})\\
&<\infty+\infty+\infty=\infty ~~~\text{by}~(\ref{1CB finiteness t1}), (\ref{1CB finiteness t2})~ \text{and}~ (\ref{1CB finiteness t3})
\end{align*}
proving the lemma.

\eproof
\end{proof}
By Lemma \ref{alg1 finite ub crash}, we know there exists some constant $C_1$ such that for any constant $t\ge 0$
\begin{align}
\label{alg11}
\sum_{\tau=t}^{\infty}\lambda[\tau]L\pth{M[\tau]-m[\tau]}\le \sum_{\tau=0}^{\infty}\lambda[\tau]L\pth{M[\tau]-m[\tau]}\le C_1.
\end{align}

The following corollary is an immediate consequence of Lemma \ref{alg1 finite ub crash}.
\begin{corollary}
\label{alg1 cor1}
Under Algorithm 1,
\begin{align}
\label{alg1 cor11}
\lim_{t\diverge} \lambda[t]\pth{M[t]-m[t]}=0,
\end{align}
and
\begin{align}
\label{alg1 cor12}
\lim_{t\diverge} \sum_{\tau=t}^{\infty}\lambda[\tau]\pth{M[\tau]-m[\tau]}=0.
\end{align}
\end{corollary}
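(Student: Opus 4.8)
The plan is to obtain both limits directly from the summability established in Lemma~\ref{alg1 finite ub crash}, using nothing beyond elementary facts about convergent series of nonnegative terms. First I would record that every summand $\lambda[t]\pth{M[t]-m[t]}$ is nonnegative: $\lambda[t]\ge 0$ by the standing assumption on the stepsizes, and $M[t]-m[t]=\max_{i\in\calN[t]}x_i[t]-\min_{i\in\calN[t]}x_i[t]\ge 0$ since $\calN[t]\ne\emptyset$ (at most $f<n$ agents ever crash). Hence the partial sums $\sum_{\tau=0}^{t}\lambda[\tau]\pth{M[\tau]-m[\tau]}$ are nondecreasing in $t$ and, by Lemma~\ref{alg1 finite ub crash}, bounded above; they therefore converge to a finite value $S\triangleq\sum_{\tau=0}^{\infty}\lambda[\tau]\pth{M[\tau]-m[\tau]}<\infty$, with $L\cdot S\le C_1$ (cf.\ \eqref{alg11}).

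For \eqref{alg1 cor11}, the general term of a convergent series tends to $0$, so $\lim_{t\diverge}\lambda[t]\pth{M[t]-m[t]}=0$. For \eqref{alg1 cor12}, I would write the tail as a difference of partial sums, $\sum_{\tau=t}^{\infty}\lambda[\tau]\pth{M[\tau]-m[\tau]}=S-\sum_{\tau=0}^{t-1}\lambda[\tau]\pth{M[\tau]-m[\tau]}$; since the partial sums converge to $S$, the right-hand side tends to $0$ as $t\diverge$. Equivalently, this is just the Cauchy tail criterion for a convergent series.

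There is essentially no obstacle in this corollary: all the substantive work is already contained in Lemma~\ref{alg1 finite ub crash}, and what remains is the routine observation that for a convergent series of nonnegative terms both the terms and the tails vanish. The only point meriting an explicit line is the nonnegativity of $M[t]-m[t]$, which is what makes the infinite tail $\sum_{\tau\ge t}$ well defined and monotone (nonincreasing) in $t$, and hence legitimizes the two limit statements.
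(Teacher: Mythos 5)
Your proposal is correct and follows essentially the same route as the paper: both derive \eqref{alg1 cor11} from the fact that the general term of a convergent series vanishes, and both obtain \eqref{alg1 cor12} by writing the tail as the difference between the full sum and the (monotone, bounded, hence convergent) partial sums. The only cosmetic difference is that you make the nonnegativity of $\lambda[t]\pth{M[t]-m[t]}$ explicit up front, which the paper leaves implicit.
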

\begin{proof}
By Lemma \ref{alg1 finite ub crash}, (\ref{alg1 cor11}) holds trivially. Now we prove (\ref{alg1 cor12}).\\

Let $F=\sum_{\tau=0}^{\infty} \lambda[\tau](M[\tau]-m[\tau])$, and let $\{F_t\}_{t=0}^{\infty}$ be a sequence such that for each $t$,
$$F_t=\sum_{\tau=0}^{t-1} \lambda[\tau](M[\tau]-m[\tau]).$$
Since $M[\tau]-m[\tau]\ge 0$ for each $\tau\ge 0$, by construction, it holds that $F_t\le F_{t+1}$ and that
$F_t\le F$ for each $t\ge 0$. Thus, by MCT, we know that
$$\lim_{t\diverge} F_t=F.$$
Now, let
$$R_t\triangleq F-F_t=\sum_{\tau=0}^{\infty} \lambda[\tau](M[\tau]-m[\tau])-\sum_{\tau=0}^{t-1} \lambda[\tau](M[\tau]-m[\tau])=\pth{\sum_{\tau=t}^{\infty} \lambda[\tau](M[\tau]-m[\tau])}.$$ By Lemma \ref{alg1 finite ub crash}, we know that $F<\infty$. Thus the sequence $R_t$ is well-defined.
In addition, since the sequence $F_t$ converges, then the sequence $R_t$ also converges. So, we get
\begin{align*}
\lim_{t\diverge} \pth{\sum_{\tau=t}^{\infty} \lambda[\tau](M[\tau]-m[\tau])}=\lim_{t\diverge} R_t=\lim_{t\diverge} \pth{F-F_t}=F-\lim_{t\diverge} F_t=F-F=0,
\end{align*}
proving (\ref{alg1 cor12}).

\eproof
\end{proof}

\subsubsection{Optimality of Algorithm 1}

\begin{definition}
\label{BS def resilient}
Given a sequence $\{x[t]\}_{t=0}^{\infty}$, a sequence of gradients $\{g[t]\}_{t=0}^{\infty}$, and a set of stepsizes $\{\lambda[t]\}_{t=0}^{\infty}$
we say $x[t]$ is a resilient point with respect to gradient $g[t]$ if one of the following items is true:
\begin{list}{}{}
\item
\hspace*{1in}*
$~ x[t]\in Y~\text{and}~~  \pth{x[t]-\lambda[t] g[t]}\notin Y ,$\\
\hspace*{1in}*
$~ x[t]> \max Y~\text{and}~~\pth{x[t]-\lambda[t] g[t]}< \min Y,$\\
\hspace*{1in}*
$~ x[t]< \min Y~\text{and}~~\pth{x[t]-\lambda[t] g[t]}> \max Y.$
\end{list}
\end{definition}
Since by Lemma \ref{valid closed crash}, we know that set $Y$ is closed. Thus $\max Y$ and $\min Y$ exist, and Definition \ref{BS def resilient} is well-defined over set $Y$.\\

Let $\{z[t]\}_{t=0}^{\infty}$ be a sequence of estimates such that
\begin{align}
\label{crash sequence z}
z[t]=x_{j_{t}}[t], ~~\text{where}~j_{t}\in \argmax_{j\in \calN[t]} Dist\pth{x_j[t], Y}.
\end{align}
From the definition, there is a sequence of agents $\{j_t\}_{t=0}^{\infty}$ associated with the sequence $\{z[t]\}_{t=0}^{\infty}$. 

\begin{lemma}
\label{accu point}
If there exists $c>0$ such that
$$\lim_{t\diverge} Dist\pth{z[t], ~Y}=c,$$
then at least one of the following two statements is true.\\
(A.1) There exists a subsequence $\{z[t_k]\}_{k=0}^{\infty}$ such that $z[t_k]<\min Y$ for all $k\ge 0$.\\
(A.2) There exists a subsequence $\{z[t_k^{\prime}]\}_{k=0}^{\infty}$ such that $z[t_k^{\prime}]>\max Y$ for all $k\ge 0$.\\
In addition, at least one of $(\min Y-c)$ or $(\max Y+c)$ is an accumulation point of $\{z[t]\}_{t=0}^{\infty}$.
\end{lemma}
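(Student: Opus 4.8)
The plan is to reduce everything to the elementary geometry of a closed interval on the real line. By Lemmas~\ref{valid convex crash} and \ref{valid closed crash}, $Y$ is convex and closed, hence an interval, and $\min Y$ and $\max Y$ are attained (which is exactly what makes the statement meaningful), so $Y=[\min Y,\max Y]$. For any scalar $z$ the distance to $Y$ then has the explicit form
\begin{align*}
Dist(z,Y)=\max\{0,~\min Y-z,~z-\max Y\},
\end{align*}
so whenever $Dist(z,Y)=c>0$ exactly one of $z=\min Y-c$ or $z=\max Y+c$ holds, and in particular $z\notin Y$.

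First I would use the hypothesis $\lim_{t\diverge}Dist(z[t],Y)=c>0$ to fix an index $T$ with $Dist(z[t],Y)>c/2>0$ for all $t\ge T$; hence $z[t]\notin Y$, i.e. either $z[t]<\min Y$ or $z[t]>\max Y$, for every $t\ge T$. Partition $\{t:t\ge T\}$ into the disjoint sets $T_-=\{t\ge T:z[t]<\min Y\}$ and $T_+=\{t\ge T:z[t]>\max Y\}$; since $Y=[\min Y,\max Y]$ their union is all of $\{t\ge T\}$, which is infinite, so at least one of $T_-,T_+$ is infinite. Listing the elements of an infinite one in increasing order produces the subsequence required in (A.1) (when $T_-$ is infinite) or in (A.2) (when $T_+$ is infinite). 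This gives the first half of the lemma.

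For the accumulation-point statement, suppose $T_-$ is infinite, with elements $t_0<t_1<\cdots$. Along this subsequence $z[t_k]<\min Y$, so $Dist(z[t_k],Y)=\min Y-z[t_k]$, that is, $z[t_k]=\min Y-Dist(z[t_k],Y)$. Since a convergent real sequence converges along every subsequence, $Dist(z[t_k],Y)\to c$, and therefore $z[t_k]\to\min Y-c$, exhibiting $\min Y-c$ as an accumulation point of $\{z[t]\}_{t=0}^{\infty}$. The case where $T_+$ is infinite is symmetric and yields $\max Y+c$ as an accumulation point; since at least one of the two cases occurs, at least one of $\min Y-c$, $\max Y+c$ is an accumulation point, as claimed.

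I do not expect a serious obstacle: once the one-dimensional distance formula is in hand, the argument is pure bookkeeping. The only steps that warrant a sentence of justification are (i) that $Y$ is genuinely a compact interval, so that the distance decomposes as above and $\min Y,\max Y$ exist — which is precisely what Lemmas~\ref{valid convex crash}–\ref{valid closed crash} (together with the compactness of $\argmin p$ for each valid $p$) supply — and (ii) the passage from ``$Dist(z[t],Y)\to c$'' to ``$Dist(z[t_k],Y)\to c$'' on whichever of $T_-,T_+$ is infinite, which is just the subsequence property of limits.
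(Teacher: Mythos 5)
Your proposal is correct and follows essentially the same route as the paper's proof: show that $z[t]$ eventually lies outside $Y$ (the paper does this by contradiction, you by a quantitative $c/2$ threshold — an immaterial difference), split the tail indices into those below $\min Y$ and those above $\max Y$, observe at least one set is infinite, and use $Dist(z[t_k],Y)=\min Y-z[t_k]$ (resp.\ $z[t_k]-\max Y$) together with the subsequence property of limits to identify $\min Y-c$ (resp.\ $\max Y+c$) as an accumulation point. No gaps.
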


\begin{proof}
Since $\lim_{t\diverge} Dist\pth{z[t], Y}=c>0$, there exists $m$ 
such that $z[t]\notin Y$ for each $t\ge m$.
Otherwise, there exists a subsequence $\{z[t_k]\}_{k=0}^{\infty}$ such that $z[t_k]\in Y$ for each $k\ge 0$. By definition of $Dist\pth{\cdot, Y}$, we have, $Dist\pth{z[t_k], Y}=0$ for each $k\ge 0$. Then
$$c=\lim_{t\diverge} Dist\pth{z[t_k], Y}=0,$$ contradicting the assumption that $c>0$.

Since $z[t]\notin Y$ for each $t\ge m$, at least one of the following two statements is true.\\
(A.1) There exists a subsequence $\{z[t_k]\}_{k=0}^{\infty}$ such that $z[t_k]<\min Y$ for all $k\ge 0$.\\
(A.2) There exists a subsequence $\{z[t_k^{\prime}]\}_{k=0}^{\infty}$ such that $z[t_k^{\prime}]>\max Y$ for all $k\ge 0$.\\

By symmetry, WLOG, assume (A.1) is true. Then, for each $y\in Y$ and each $k\ge 0$, we have
\begin{align*}
z[t_k]<\min Y\le y.
\end{align*}
Thus,
$$\left | z[t_k]-y\right |=y-z[t_k].$$
Minimizing over $y\in Y$, we have
$$Dist\pth{z[t_k], Y}=\min_{y\in Y}\left | z[t_k]-y\right |=\min_{y\in Y} (y-z[t_k])=\min Y-z[t_k].$$
Thus,
\begin{align}
\label{limit z}
z[t_k]=\min Y-  Dist\pth{z[t_k], Y}.
\end{align}

Recall that the limit of $Dist\pth{z[t], Y}$ exists and $\lim_{t\diverge} Dist\pth{z[t], Y}=c$, and note that $\{Dist\pth{z[t_k], Y} \}_{k=0}^{\infty}$ is a subsequence of $\{Dist\pth{z[t], Y} \}_{t=0}^{\infty}$. Thus, the limit of $Dist\pth{z[t_k], Y}$ exists, and
$$\lim_{k\diverge} Dist\pth{z[t_k], Y}= \lim_{t\diverge} Dist\pth{z[t], Y}=c.$$
Therefore, the limit of $z[t_k]$ exists, and
\begin{align}
\label{BS limit}
\nonumber
\lim_{k\diverge}z[t_k]&= \lim_{k\diverge}\pth{\min Y-  Dist\pth{z[t_k], Y}}\\
\nonumber
&=\min Y- \lim_{k\diverge} Dist\pth{z[t_k], Y}\\
&=\min Y-c.
\end{align}
Thus, $\pth{\min Y -c}$ is an accumulation point of $\{z[t]\}_{t=0}^{\infty}$.\\

Similarly, if (A.2) is true, i.e.,  there exists a subsequence $\{z[t_k^{\prime}]\}_{k=0}^{\infty}$ such that $z[t_k^{\prime}]>\max Y$ for all $k\ge 0$, and we can show that $\pth{\max Y+c}$ is an accumulation point of $\{z[t]\}_{t=0}^{\infty}$.\\

Therefore, Lemma \ref{accu point} has been proved.

\eproof
\end{proof}

Recall that $\{z[t]\}_{t=0}^{\infty}$ is a sequence of estimates such that (\ref{crash sequence z}) holds, and that there is a sequence of agents $\{j_t\}_{t=0}^{\infty}$ associated with the sequence $\{z[t]\}_{t=0}^{\infty}$. 

\begin{lemma}
\label{optimial z}
If
\begin{align}
\label{alg1 z}
\lim_{t\diverge} Dist\pth{z[t], Y}=0,
\end{align}
 then for each non-faulty agent $i$ in $\calN$, the sequence $\{Dist\pth{x_i[t], Y}\}_{t=0}^{\infty}$ converges and $$\lim_{t\diverge} Dist\pth{x_i[t], Y}=0.$$
\end{lemma}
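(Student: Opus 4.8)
The plan is to show that the "worst" non-faulty agent's distance to $Y$ — namely $Dist(z[t],Y)$, which by definition (\ref{crash sequence z}) dominates $Dist(x_i[t],Y)$ for every $i\in\calN[t]$ — drives the whole system. Since the hypothesis (\ref{alg1 z}) already gives $\lim_{t\diverge} Dist(z[t],Y)=0$, the statement will follow once I relate $Dist(x_i[t],Y)$ to $Dist(z[t],Y)$ through the consensus estimates. Concretely, I would first recall from (\ref{update x crash 1}) that each $x_i[t]$ is an average of auxiliary variables $s_k[t]$, and from (\ref{alg1 consensus 1})–(\ref{alg1 consensus 2}) that every $s_k[t]$ lies in $[m[t-1]-\lambda[t-1]L,\ M[t-1]+\lambda[t-1]L]$. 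Hence every $x_i[t]$ with $i\in\calN[t]$ lies in the interval $[m[t],M[t]]$, and therefore $|x_i[t]-z[t]|\le M[t]-m[t]$.

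Next I would use the (1-Lipschitz) fact, already noted in the excerpt, that $Dist(\cdot,Y)$ is convex, and in one dimension $|Dist(a,Y)-Dist(b,Y)|\le|a-b|$. Combining this with the previous bound,
\begin{align*}
Dist(x_i[t],Y)\ \le\ Dist(z[t],Y)+|x_i[t]-z[t]|\ \le\ Dist(z[t],Y)+\pth{M[t]-m[t]}.
\end{align*}
Now take $t\diverge$: the first term goes to $0$ by hypothesis (\ref{alg1 z}), and the second goes to $0$ by Lemma \ref{consensus alg1}. Since $Dist(x_i[t],Y)\ge 0$, this forces $\limsup_{t\diverge} Dist(x_i[t],Y)\le 0\le\liminf_{t\diverge} Dist(x_i[t],Y)$, so the limit exists and equals $0$.

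The only genuine subtlety is the bookkeeping about which index set $x_i[t]$ belongs to: the agent $i\in\calN$ need not be in $\calN[t]$ for small $t$, but since $\lim_{t\diverge}\calN[t]=\calN$, there is a time $t_i$ after which $i\in\calN[t]$ for all $t\ge t_i$, and the argument above applies for $t\ge t_i$; the finitely many earlier terms do not affect the limit. I expect this indexing point, together with justifying that $z[t]$ really is the maximizer over $\calN[t]$ so that $Dist(z[t],Y)\ge Dist(x_i[t],Y)$ directly (which actually makes the chain even shorter — one can skip the Lipschitz step and just write $Dist(x_i[t],Y)\le Dist(z[t],Y)\to 0$ for $t\ge t_i$), to be the main thing to state carefully rather than a real obstacle; the estimates themselves are immediate from results already established.
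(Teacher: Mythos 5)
Your proposal is correct and, once you invoke the shortcut you yourself note ($Dist(x_i[t],Y)\le Dist(z[t],Y)$ directly from the definition of $z[t]$ as the maximizer over $\calN[t]$), it is exactly the paper's one-line argument of taking $\limsup$ on both sides of that domination. The indexing worry is moot: since $\calN[t+1]\subseteq\calN[t]$ and $\lim_{t\diverge}\calN[t]=\calN$, every $i\in\calN$ belongs to $\calN[t]$ for \emph{all} $t$, so the bound holds from $t=0$ onward.
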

\begin{proof}
For each $i\in \calN$, we have
\begin{align*}
Dist\pth{x_i[t], Y}\le \max_{j\in \calN[t] }Dist\pth{x_j[t], Y}=Dist\pth{z[t], Y}~~~\text{by}~(\ref{crash sequence z})
\end{align*}
Taking limit sup on both sides, we get
$$\limsup_{t\diverge} Dist\pth{x_i[t], Y}\le \limsup_{t\diverge}Dist\pth{z[t], Y}=0~~~\text{by}~(\ref{alg1 z}) $$
Thus, for each $i\in \calN$, the sequence $\{Dist\pth{x_i[t], Y}\}_{t=0}^{\infty}$ converges and $$\lim_{t\diverge} Dist\pth{x_i[t], Y}=0.$$

\eproof
\end{proof}

Lemma \ref{accu point} and Lemma \ref{optimial z} derived in proof of Algorithm 1 apply to Algorithm 2 and Algorithm 3 also.

\begin{theorem}
\label{optimal alg1}
The sequence $\{Dist\pth{z[t], Y}\}_{t=0}^{\infty}$ converges and $$\lim_{t\diverge} Dist\pth{z[t], Y}=0.$$
\end{theorem}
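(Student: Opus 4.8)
The plan is to argue by contradiction: suppose the convergence fails, so there is some $c>0$ with $\limsup_{t\to\infty}\mathrm{Dist}(z[t],Y)=c$ (after passing to a subsequence we may assume the limit is $c$, or at least we work along a subsequence realizing the $\limsup$). We then invoke Lemma~\ref{accu point}: at least one of $(\min Y - c)$ or $(\max Y + c)$ is an accumulation point of $\{z[t]\}$, and correspondingly there is a subsequence of $z[t]$'s lying strictly below $\min Y$ (or strictly above $\max Y$). By symmetry assume the former. The strategy is to track the iterates $x_{j_t}[t]$ along this subsequence and show that, once an estimate is below $\min Y$, each update (the gradient step in Step~3 followed by the averaging in Step~5) moves it strictly \emph{toward} $Y$, and moreover the cumulative progress over many iterations is unbounded because $\sum_t \lambda[t]=\infty$ — contradicting the claim that it returns arbitrarily close to distance $c$ infinitely often.

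The key steps, in order, would be: (1) Set up the contradiction hypothesis and extract from Lemma~\ref{accu point} a subsequence with $z[t]<\min Y$ (WLOG). (2) Relate $z[t+1]$ to the $x_k[t]$ and their gradient-perturbed auxiliary variables $s_k[t+1]$ via the averaging update~\eqref{update x crash 1}, using the consensus bound: $|x_k[t]-z[t]|\le M[t]-m[t]$, which by Lemma~\ref{consensus alg1} tends to $0$, and $\sum_t \lambda[t](M[t]-m[t])<\infty$ by Lemma~\ref{alg1 finite ub crash}. (3) Show that when all relevant estimates are to the left of $Y$, each local gradient $h_i'(x_k[t])$ is negative (since each $h_i$ is convex with minimizer to the right), hence the gradient step $x_k[t]-\lambda[t]\frac{1}{|\calR^1_k|}\sum h_i'(\cdot)$ increases the estimate; quantify this increase from below — here one needs that the average gradient is bounded away from $0$ as long as the distance to $Y$ stays around $c$, using convexity/compactness of $\arg\min$, or alternatively invoke the resilient-point notion of Definition~\ref{BS def resilient} to handle the case where the step overshoots $Y$. (4) Sum these per-iteration gains: either the estimate eventually enters (or crosses) $Y$, contradicting that its distance stays near $c$; or the telescoped lower bound on the gains diverges like $\sum \lambda[t]$, again a contradiction; the small error terms are controlled by the summability in Lemma~\ref{alg1 finite ub crash} and Corollary~\ref{alg1 cor1}. (5) Handle the symmetric case $z[t]>\max Y$ identically, and conclude $\lim_{t\to\infty}\mathrm{Dist}(z[t],Y)=0$; existence of the limit follows since the $\limsup$ is forced to $0$ and the distance is nonnegative.

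The main obstacle is Step~(3)–(4): the auxiliary variable $s_k[t+1]$ at agent $k$ is computed from gradients evaluated at $x_k[t]$, but $z[t+1]$ is an \emph{average} over possibly different agents' $s$-variables (from a set $\calR^2$ that can vary), so the "gradient step is monotone toward $Y$" reasoning must be carried out uniformly over all contributing agents, and one must ensure no contributing agent sits on the far side of $Y$ — this is exactly where consensus ($M[t]-m[t]\to 0$) is essential, but making the bookkeeping tight while the overshoot/resilient-point cases are present is delicate. A secondary subtlety is that a diverging subsequence of distances need not be monotone, so I would phrase the argument as: pick $t$ large with $\mathrm{Dist}(z[t],Y)$ near $c$ and $M[t]-m[t]$ small and $\sum_{\tau\ge t}\lambda[\tau](M[\tau]-m[\tau])$ small (possible by Corollary~\ref{alg1 cor1}), then show the distance is driven below $c/2$ within finitely many steps and cannot climb back, contradicting that $c$ is the $\limsup$.
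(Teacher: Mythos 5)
Your overall strategy is the same one the paper uses for the decisive case of its proof: assume the distance tends to a positive $c$, use Lemma~\ref{accu point} to trap $z[t]$ in a band around $\min Y-c$ (or $\max Y+c$), lower-bound the magnitude of the averaged gradient there by a positive constant, and derive a contradiction from $\sum_t\lambda[t]=\infty$ while absorbing the consensus error through $\sum_t\lambda[t](M[t]-m[t])<\infty$ and handling overshoots via the resilient-point device. So the route is not new, and most of your outline matches the paper's Theorem~\ref{optimal alg1} step for step.

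There are, however, two places where what you wrote would fail as stated. First, in your step (3) you justify the drift toward $Y$ by claiming that ``each local gradient $h_i'(x_k[t])$ is negative (since each $h_i$ is convex with minimizer to the right).'' This is false: $Y$ is the union of minimizers of \emph{valid} functions, and every valid function must contain all non-faulty $h_i$ with equal weight, so an individual $h_i$ ($i\in\calN$) can perfectly well have its minimizer strictly to the left of $\min Y$, making $h_i'(z[t])>0$ even when $z[t]<\min Y$. What is true, and what the argument needs, is that the \emph{average} $\frac{1}{|\calR^1_k[t]|}\sum_{i\in\calR^1_k[t]}h_i'(\cdot)$ is the derivative of a valid function in $\calC$ (because $\calR^1_k[t]\supseteq\calN$), hence negative to the left of $\min Y$; moreover the uniform lower bound $|\rho^*|>0$ with $\rho^*=\sup_{p\in\calC}p'(a-c/2)$ requires showing the supremum over the infinite family $\calC$ is attained --- the paper does this by exhibiting the extremal valid function explicitly --- and your appeal to ``convexity/compactness of $\arg\min$'' does not by itself deliver this. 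Second, Lemma~\ref{accu point} is stated under the hypothesis that $\lim_t Dist(z[t],Y)=c$ exists, not merely that the $\limsup$ equals $c$; passing to a subsequence realizing the $\limsup$ does not give you the consecutive blocks of iterations over which you telescope the per-step decrease. Your proposed repair (``driven below $c/2$ and cannot climb back'') is exactly the relation $Dist(z[t],Y)\le\max\{\lambda[t-1]L,\;Dist(z[t-1],Y)\}$, which the paper proves first (via the resilient/non-resilient dichotomy) and then uses, together with a case split on whether infinitely many resilient points occur, to establish that the limit exists before invoking Lemma~\ref{accu point}. You should make that inequality and the attendant case analysis explicit rather than leaving it as an aside, since without it both the application of Lemma~\ref{accu point} and the ``cannot climb back'' claim are unsupported.
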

\begin{proof}


Recall that $\calN[t-1]$ is the set of agents that do not crash by the end of iteration $t-1$. There may exists an agent $j$ that crashes during the execution of iteration $t$. If agent $j$ crashes after performing step 3 in Algorithm 1, then $s_j[t]$ is well-defined. On the contrary, if agent $j$ crashes before step 3 is conducted, then $s_j[t]$ is not well-defined. In this case, we define $Dist\pth{s_j[t], Y}=0$ for ease of exposition. With this convention,  $\min_{j\in \calN[t-1]} Dist\pth{s_j[t], Y}$ is well-defined.

Let $j_{t-1}^{\prime}\in \calN[t-1]$ such that
\begin{align}
\label{auxilary 1}
\max_{j\in \calN[t-1]} Dist\pth{s_j[t], Y}=Dist\pth{s_{j_{t-1}^{\prime}}[t], Y}.
\end{align}
We get
\begin{align}
\label{distance Y}
\nonumber
Dist\pth{z[t], Y}&=\max_{j\in \calN[t]}~ Dist\pth{x_j[t], Y}~~~\text{due to}~(\ref{crash sequence z})\\
\nonumber
&=\max_{j\in \calN[t]}~ Dist\pth{ \frac{1}{\left |  \calR_j^{2}[t-1]\right |}\sum_{i\in \calR_j^{2}[t-1]} s_i[t], ~ Y}~~\text{by}~(\ref{update x crash 1})\\
\nonumber
&\le  \max_{j\in \calN[t]}~\frac{1}{\left |  \calR_j^{2}[t-1]\right |} \sum_{i\in \calR_j^{2}[t-1]}Dist\pth{s_i[t], ~ Y} ~~~\text{since $Dist\pth{\cdot, Y}$ is convex}\\
\nonumber
&\le \max_{j\in \calN[t]}~\max_{i\in \calR_j^{2}[t-1]} Dist\pth{s_i[t], Y}\\
\nonumber
&\le \max_{j\in \calN[t-1]} Dist\pth{s_j[t], Y}.\\
\nonumber
&=Dist\pth{s_{j_{t-1}^{\prime}}[t], Y}\\
%
&=\inf_{y\in Y}\left |x_{j_{t-1}^{\prime}}[t-1]-\frac{\lambda[t-1]}{\left |  \calR_{j_{t-1}^{\prime}}^{1}[t-1] \right |}\pth{\sum_{i\in \calR_{j_{t-1}^{\prime}}^{1}[t-1] } h_i^{\prime}(x_{j_{t-1}^{\prime}}[t-1])}-y \right |~~\text{by}~(\ref{update z crash 1}).
\end{align}
%
%
%
Recall that $j_{t}^{\prime}$ is defined as (\ref{auxilary 1}).
Note that for each $t\ge 0$, there exists a non-faulty agent $j_{t}^{\prime}$ such that (\ref{distance Y}) holds, and there exists a sequence of agents $\{j_{t}^{\prime}\}_{t=0}^{\infty}$.
Let $\{x[t]\}_{t=0}^{\infty}$ be a sequence of estimates such that
\begin{align}
\label{d1}
x[t]=x_{j_{t}^{\prime}}[t].
\end{align}
Let $\{g[t]\}_{t=0}^{\infty}$ be a sequence of gradients such that
\begin{align}
\label{d2}
g[t]=\frac{1}{\left |  \calR_{j_{t}^{\prime}}^{1}[t] \right |}\pth{\sum_{i\in \calR_{j_{t}^{\prime}}^{1}[t] } h_i^{\prime}(x_{j_{t}^{\prime}}[t])}.
\end{align}


\paragraph{\bf If $x[t-1]=x_{j_{t-1}^{\prime}}[t-1]$ is a resilient point} with respect to the gradient $g[t-1]$,
 by Definition \ref{BS def resilient}, we can bound (\ref{distance Y}) further as follows
\begin{align}
\label{distance Y1}
Dist\pth{z[t], Y}&\le \inf_{y\in Y}\left |x_{j_{t-1}^{\prime}}[t-1]-\frac{\lambda[t-1]}{\left |  \calR_{j_{t-1}^{\prime}}^{1}[t-1] \right |}\pth{\sum_{i\in \calR_{j_{t-1}^{\prime}}^{1}[t-1] } h_i^{\prime}(x_{j_{t-1}^{\prime}}[t-1])}-y \right |\le \lambda[t-1] L.
\end{align}

\paragraph{\bf If $x_{j_{t-1}^{\prime}}[t-1]$ is not a resilient point} with respect to the gradient $g[t-1]$,  then from Definition \ref{BS def resilient}, we know that
\begin{itemize}
\item[$B 1$:] if $x_{j_{t-1}^{\prime}}[t-1]\in Y$, then $x_{j_{t-1}^{\prime}}[t-1]-\frac{\lambda[t-1]}{\left |  \calR_{j_{t-1}^{\prime}}^{1}[t-1] \right |}\pth{\sum_{i\in \calR_{j_{t-1}^{\prime}}^{1}[t-1] } h_i^{\prime}(x_{j_{t-1}^{\prime}}[t-1])}\in Y,$
\item[$B 2$:] if $x_{j_{t-1}^{\prime}}[t-1]< \min Y$, then $x_{j_{t-1}^{\prime}}[t-1]-\frac{\lambda[t-1]}{\left |  \calR_{j_{t-1}^{\prime}}^{1}[t-1] \right |}\pth{\sum_{i\in \calR_{j_{t-1}^{\prime}}^{1}[t-1] } h_i^{\prime}(x_{j_{t-1}^{\prime}}[t-1])}\le \max Y,$
\item[$B 3$:] if $x_{j_{t-1}^{\prime}}[t-1]> \max Y$, then $x_{j_{t-1}^{\prime}}[t-1]-\frac{\lambda[t-1]}{\left |  \calR_{j_{t-1}^{\prime}}^{1}[t-1] \right |}\pth{\sum_{i\in \calR_{j_{t-1}^{\prime}}^{1}[t-1] } h_i^{\prime}(x_{j_{t-1}^{\prime}}[t-1])}\ge \min Y$.
\end{itemize}

\vskip \baselineskip

We consider two scenarios: scenario 1  $$x_{j_{t-1}^{\prime}}[t-1]-\frac{\lambda[t-1]}{\left |  \calR_{j_{t-1}^{\prime}}^{1}[t-1] \right |}\pth{\sum_{i\in \calR_{j_{t-1}^{\prime}}^{1}[t-1] } h_i^{\prime}(x_{j_{t-1}^{\prime}}[t-1])}~\in~ Y,$$ and scenario 2
$$x_{j_{t-1}^{\prime}}[t-1]-\frac{\lambda[t-1]}{\left |  \calR_{j_{t-1}^{\prime}}^{1}[t-1] \right |}\pth{\sum_{i\in \calR_{j_{t-1}^{\prime}}^{1}[t-1] } h_i^{\prime}(x_{j_{t-1}^{\prime}}[t-1])}~\notin~ Y.$$
The first scenario can possibly appear in each of $B 1, B 2,$ and $B 3$.
In contrast, the second scenario can only appear in $B 2$ and $B 3$.
\paragraph{Scenario 1: }
Assume that
$$x_{j_{t-1}^{\prime}}[t-1]-\frac{\lambda[t-1]}{\left |  \calR_{j_{t-1}^{\prime}}^{1}[t-1] \right |}\pth{\sum_{i\in \calR_{j_{t-1}^{\prime}}^{1}[t-1] } h_i^{\prime}(x_{j_{t-1}^{\prime}}[t-1])}~\in~ Y,$$
it holds that
$$
\inf_{y\in Y}\left |x_{j_{t-1}^{\prime}}[t-1]-\frac{\lambda[t-1]}{\left |  \calR_{j_{t-1}^{\prime}}^{1}[t-1] \right |}\pth{\sum_{i\in \calR_{j_{t-1}^{\prime}}^{1}[t-1] } h_i^{\prime}(x_{j_{t-1}^{\prime}}[t-1])}-y \right |~=~0~\le~ Dist\pth{z[t-1],Y}.
$$
Thus, (\ref{distance Y}) can be further bounded as
\begin{align}
\nonumber
Dist\pth{z[t], Y}&\le \inf_{y\in Y}\left |x_{j_{t-1}^{\prime}}[t-1]-\frac{\lambda[t-1]}{\left |  \calR_{j_{t-1}^{\prime}}^{1}[t-1] \right |}\pth{\sum_{i\in \calR_{j_{t-1}^{\prime}}^{1}[t-1] } h_i^{\prime}(x_{j_{t-1}^{\prime}}[t-1])}-y \right |\\
&~=~0~\label{distance Y2'}\\
&\le~ Dist\pth{z[t-1],Y}\label{distance Y2}.
\end{align}

\paragraph{Scenario 2: }
Assume that
$$x_{j_{t-1}^{\prime}}[t-1]-\frac{\lambda[t-1]}{\left |  \calR_{j_{t-1}^{\prime}}^{1}[t-1] \right |}\pth{\sum_{i\in \calR_{j_{t-1}^{\prime}}^{1}[t-1] } h_i^{\prime}(x_{j_{t-1}^{\prime}}[t-1])}~\notin~ Y=[\min Y, \max Y].$$
As commented earlier, either $B 2$ holds or $B 3$ holds. In addition, from the assumption of scenario 2, $B 2$ and $B 3$ can be further refined as follows.

\begin{itemize}
\item[$B 2^{\prime}$:]
$x_{j_{t-1}^{\prime}}[t-1]<\min Y~~~\text{and}~~~ x_{j_{t-1}^{\prime}}[t-1]-\frac{\lambda[t-1]}{\left |  \calR_{j_{t-1}^{\prime}}^{1}[t-1] \right |}\pth{\sum_{i\in \calR_{j_{t-1}^{\prime}}^{1}[t-1] } h_i^{\prime}(x_{j_{t-1}^{\prime}}[t-1])}~<\min Y$
\item[$B 3^{\prime}$:]
$x_{j_{t-1}^{\prime}}[t-1]>\max Y~~~\text{and}~~~ x_{j_{t-1}^{\prime}}[t-1]-\frac{\lambda[t-1]}{\left |  \calR_{j_{t-1}^{\prime}}^{1}[t-1] \right |}\pth{\sum_{i\in \calR_{j_{t-1}^{\prime}}^{1}[t-1] } h_i^{\prime}(x_{j_{t-1}^{\prime}}[t-1])}~>\max Y$
\end{itemize}

Suppose $B 2^{\prime}$ is true. As $x_{j_{t-1}^{\prime}}[t-1]<\min Y$, and $\frac{1}{\left |  \calR_{j_{t-1}^{\prime}}^{1}[t-1] \right |}\pth{\sum_{i\in \calR_{j_{t-1}^{\prime}}^{1}[t-1] } h_i^{\prime}(x_{j_{t-1}^{\prime}}[t-1])}$ is the gradient of a valid function at point $x_{j_{t-1}^{\prime}}[t-1]$, from the definition of set $Y$, we know that
\begin{align}
\label{negative gradient}
\frac{1}{\left |  \calR_{j_{t-1}^{\prime}}^{1}[t-1] \right |}\pth{\sum_{i\in \calR_{j_{t-1}^{\prime}}^{1}[t-1] } h_i^{\prime}(x_{j_{t-1}^{\prime}}[t-1])}<0.
\end{align}
In addition, since
$$x_{j_{t-1}^{\prime}}[t-1]-\frac{\lambda[t-1]}{\left |  \calR_{j_{t-1}^{\prime}}^{1}[t-1] \right |}\pth{\sum_{i\in \calR_{j_{t-1}^{\prime}}^{1}[t-1] } h_i^{\prime}(x_{j_{t-1}^{\prime}}[t-1])}~<\min Y,$$
it holds that for any $y\in Y$
\begin{align}
\label{r3}
\nonumber
&\left |   x_{j_{t-1}^{\prime}}[t-1]-\frac{\lambda[t-1]}{\left |  \calR_{j_{t-1}^{\prime}}^{1}[t-1] \right |}\pth{\sum_{i\in \calR_{j_{t-1}^{\prime}}^{1}[t-1] } h_i^{\prime}(x_{j_{t-1}^{\prime}}[t-1])} -y\right |\\
\nonumber
&=y-  x_{j_{t-1}^{\prime}}[t-1]+\frac{\lambda[t-1]}{\left |  \calR_{j_{t-1}^{\prime}}^{1}[t-1] \right |}\pth{\sum_{i\in \calR_{j_{t-1}^{\prime}}^{1}[t-1] } h_i^{\prime}(x_{j_{t-1}^{\prime}}[t-1])}\\
\nonumber
&=\left |y-  x_{j_{t-1}^{\prime}}[t-1]\right |+\frac{\lambda[t-1]}{\left |  \calR_{j_{t-1}^{\prime}}^{1}[t-1] \right |}\pth{\sum_{i\in \calR_{j_{t-1}^{\prime}}^{1}[t-1] } h_i^{\prime}(x_{j_{t-1}^{\prime}}[t-1])}\\
&=\left |y-  x_{j_{t-1}^{\prime}}[t-1]\right |-\lambda[t-1]\left|\frac{1}{\left |  \calR_{j_{t-1}^{\prime}}^{1}[t-1] \right |}\pth{\sum_{i\in \calR_{j_{t-1}^{\prime}}^{1}[t-1] } h_i^{\prime}(x_{j_{t-1}^{\prime}}[t-1])}\right|~~~\text{by}~(\ref{negative gradient})
\end{align}


Similarly, we can show that (\ref{r3}) still holds for the case when $B 3^{\prime}$ is true. Henceforth, we refer (\ref{r3}) as the relation holds for both $B 2^{\prime}$ and $B 3^{\prime}$, i.e., holds under scenario 2.\\

Thus, under scenario 2, we can bound (\ref{distance Y}) as follows
%

\begin{align}
Dist\pth{z[t], Y}&\le \inf_{y\in Y}\left |x_{j_{t-1}^{\prime}}[t-1]-\frac{\lambda[t-1]}{\left |  \calR_{j_{t-1}^{\prime}}^{1}[t-1] \right |}\pth{\sum_{i\in \calR_{j_{t-1}^{\prime}}^{1}[t-1] } h_i^{\prime}(x_{j_{t-1}^{\prime}}[t-1])}-y \right |~~~ \text{by } (\ref{distance Y})\nonumber\\
&= \inf_{y\in Y} \left |y-  x_{j_{t-1}^{\prime}}[t-1]\right |-\lambda[t-1]\left|\frac{1}{\left |  \calR_{j_{t-1}^{\prime}}^{1}[t-1] \right |}\pth{\sum_{i\in \calR_{j_{t-1}^{\prime}}^{1}[t-1] } h_i^{\prime}(x_{j_{t-1}^{\prime}}[t-1])}\right|~~~\text{by}~(\ref{r3})\nonumber\\
&=Dist\pth{x_{j_{t-1}^{\prime}}[t-1], Y}-\lambda[t-1]\left|\frac{1}{\left |  \calR_{j_{t-1}^{\prime}}^{1}[t-1] \right |}\pth{\sum_{i\in \calR_{j_{t-1}^{\prime}}^{1}[t-1] } h_i^{\prime}(x_{j_{t-1}^{\prime}}[t-1])}\right| \nonumber\\
&\le Dist\pth{z[t-1], Y}-\lambda[t-1]\left|\frac{1}{\left |  \calR_{j_{t-1}^{\prime}}^{1}[t-1] \right |}\pth{\sum_{i\in \calR_{j_{t-1}^{\prime}}^{1}[t-1] } h_i^{\prime}(x_{j_{t-1}^{\prime}}[t-1])}\right|\label{distance Y3}\\
&\le Dist\pth{z[t-1], Y} \label{distance Y4}.
\end{align}

The last inequality follows from the fact that
$$Dist\pth{x_{j_{t-1}^{\prime}}[t-1], Y}\le \max_{j\in \calN[t-1]} Dist\pth{x_j[t-1], Y}=Dist\pth{z[t-1], Y}.$$


\vskip 1.5\baselineskip
Combining the above analysis for the case when $x_{j_{t-1}^{\prime}}[t-1]$ is a resilient point or the case when $x_{j_{t-1}^{\prime}}[t-1]$ is not a resilient point,  by (\ref{distance Y1}), (\ref{distance Y2}) and (\ref{distance Y4}), we obtain the following iteration relation
\begin{align}
\label{basic iteration alg1}
Dist\pth{z[t],Y}\le \max \left\{\lambda[t-1] L, ~Dist\pth{z[t-1],Y}\right\}.
\end{align}

\hrule

\vskip 2\baselineskip

Recall from (\ref{d1}) and (\ref{d2}) that $x[t-1]=x_{j_{t-1}^{\prime}}[t-1]$ and $g[t-1]=\frac{1}{\left |  \calR_{j_{t-1}^{\prime}}^{1}[t-1] \right |}\pth{\sum_{i\in \calR_{j_{t-1}^{\prime}}^{1}[t-1] } h_i^{\prime}(x_{j_{t-1}^{\prime}}[t-1])}$.
We consider two cases : case (i) there are infinitely many points in  $\{x[t]\}_{t=0}^{\infty}$ that are resilient with respect to $\{g[t]\}_{t=0}^{\infty}$, and case (ii) there are finitely many points in  $\{x[t]\}_{t=0}^{\infty}$ that are resilient with respect to $\{g[t]\}_{t=0}^{\infty}$, respectively.

%
%

\paragraph{{\bf Case (i):}} There are infinitely many points in  $\{x[t]\}_{t=0}^{\infty}$ that are resilient with respect to $\{g[t]\}_{t=0}^{\infty}$.

Let $\{t_i\}_{i=0}^{\infty}$ be the maximal sequence of such indices. Since $x[t_i]$ is a resilient point with respect to $g[t_i]$ for each $i$, then for each $t_i$, by (\ref{distance Y1}), we get
\begin{align}
\label{e1}
Dist\pth{z[t_i+1],Y}\le \lambda[t_i] L,
\end{align}
and for each $t\not=t_i$ for any $i$, by (\ref{distance Y2}) and (\ref{distance Y4}), we get
\begin{align}
\label{e2}
Dist\pth{z[t+1],Y}
\le Dist\pth{z[t],Y}.
\end{align}
Taking limit sup on both sides of (\ref{e1}) over $i$, we get
\begin{align}
\label{alg1 casei}
0\le \limsup_{i\diverge} ~Dist\pth{z[t_i+1],Y}&\le \limsup_{i\diverge}\lambda[t_i] L=0.
\end{align}
%
For each $\tau> t_0$ and $\tau \notin \{t_i\}_{i=0}^{\infty}$, there exists $t_{i(\tau)}$ such that $t_{i(\tau)}< \tau\le t_{i(\tau)+1}$. Then, we get
\begin{align}
\label{alg1 casei'}
Dist\pth{z[\tau], Y}&\le~Dist\pth{z[t_{i(\tau)}+1],Y}~~~\text{due to}~(\ref{e2})~\text{and that}~ \tau\ge t_{i(\tau)}+1\\
&\le \lambda[t_{i(\tau)}] L~~~\text{by}~(\ref{e1})
\end{align}
Taking the limit sup on both sides of (\ref{alg1 casei'}) over $\tau$, where $\tau>t_0$ and $\tau \notin \{t_i\}_{i=0}^{\infty}$, we get
\begin{align}
\label{alg1 caseii}
\limsup_{\tau\diverge} Dist\pth{z[\tau], Y}\le \limsup_{\tau \diverge}\lambda[t_{i(\tau)}] L=\lim_{\tau\diverge}\lambda[t_{i(\tau)}] L =0.
\end{align}

\vskip \baselineskip

From (\ref{alg1 casei}), we know that $\forall \, \epsilon>0, \exists \, i_0$ such that for all $i\ge i_0$, the following holds.
\begin{align}
\label{converge 1}
\sup \{Dist\pth{z[t_j], Y}, \, t_j\in \{t_i\}_{i=0}^{\infty}, \, j\ge i_0\}=\left  | \sup \{Dist\pth{z[t_j], Y}, \, t_j\in \{t_i\}_{i=0}^{\infty}, \, j\ge i_0\}-0\right |< \epsilon.
\end{align}
From (\ref{alg1 caseii}), we know that $\forall \, \epsilon>0, \exists \, \tau^*, \tau^*\notin \{t_i\}_{i=0}^{\infty}$ such that for all $\tau\ge \tau^*, \tau\notin  \{t_i\}_{i=0}^{\infty}$, the following holds.
\begin{align}
\label{converge 2}
\sup \{Dist\pth{z[\tau], Y}, \, \tau\ge \tau^*, \tau \notin  \{t_i\}_{i=0}^{\infty}\}=\left  | \sup \{Dist\pth{z[\tau], Y}, \, \tau\ge \tau^*, \tau \notin  \{t_i\}_{i=0}^{\infty}\}-0\right |< \epsilon.
\end{align}
Let $t^*=\max\{t_{i_0}, \tau^*\}$. Then for each $\epsilon>0$ and $t\ge t^*$, we have
\begin{align*}
&\sup \{Dist\pth{z[t], Y}, \, t\ge t^*\}\\
&=\sup \pth{\{Dist\pth{z[t], Y}, \, t\in \{t_i\}_{i=0}^{\infty}, t\ge t_{i_0}\}\cup \{Dist\pth{z[t], Y}, \, t\notin \{t_i\}_{i=0}^{\infty}, t\ge \tau^*\}}\\
&=\max \left\{\sup\{Dist\pth{z[t], Y}, \, t\in \{t_i\}_{i=0}^{\infty}, t\ge t_{i_0}\}, \sup\{Dist\pth{z[t], Y}, \, t\notin \{t_i\}_{i=0}^{\infty}, t\ge \tau^*\}\right \}\\
&<\max \{\epsilon, \epsilon \}=\epsilon ~~~\text{by (\ref{converge 1}) and (\ref{converge 2})}.
\end{align*}
Thus, we have
$$ \limsup_{t\diverge} Dist\pth{z[t], Y}=0.$$

Therefore, the limit of $Dist\pth{z[t], Y}$ exists, and
\begin{align*}
\lim_{t\diverge} Dist\pth{z[t], Y}=0.
\end{align*}

\paragraph{{\bf Case (ii):}} There are finitely many points in  $\{x[t]\}_{t=0}^{\infty}$ that are resilient with respect to $\{g[t]\}_{t=0}^{\infty}$.

By the assumption of case (ii) we know that there exists a time index $m_0$ such that for all $t\ge m_0$, each $x[t]$ is not a resilient point with respect to $g[t]$. Then, for $t\ge m_0$, (\ref{e2}) holds. Thus, by MCT, the limit of $Dist\pth{z[t], Y}$ exists. Let $c\ge 0$ be a nonnegative constant such that
\begin{align}
\label{crash case 2 limit alg1}
\lim_{t\diverge} Dist\pth{z[t], Y}=c.
\end{align}
Since $Dist\pth{z[t], Y}\le Dist\pth{z[m_0], Y}$ holds for each $t\ge m_0$, we know that $c<\infty$. \\

\paragraph{Case (ii.a):}
Assume that there are infinitely many time indices $t\ge m_0$ such that
$$x[t]-\lambda[t] g[t]~=~x_{j_{t}^{\prime}}[t]-\frac{\lambda[t]}{\left |  \calR_{j_{t}^{\prime}}^{1}[t] \right |}\pth{\sum_{i\in \calR_{j_{t}^{\prime}}^{1}[t] } h_i^{\prime}(x_{j_{t}^{\prime}}[t])}~\in ~Y.$$
Let $\{t_k\}_{k=0}^{\infty}$ be the maximal sequence of such time indices. By (\ref{distance Y2'}), we have
\begin{align*}
Dist\pth{z[t_k+1], Y}\le 0.
\end{align*}
Thus, the limit of $Dist\pth{z[t_k+1], Y}$ exists and
$$\lim_{k\diverge} Dist\pth{z[t_k+1], Y}=0.$$

Recall from (\ref{crash case 2 limit alg1}) that the limit of $Dist\pth{z[t], Y}$ exists. The limit of $Dist\pth{z[t], Y}$ and the limit of $Dist\pth{z[t_k+1], Y}$ should be identical, i.e.,
\begin{align}
\label{lim bs in}
c=\lim_{t\diverge} Dist\pth{z[t], Y}=\lim_{k\diverge} Dist\pth{z[t_k+1], Y}=0,
\end{align}
proving the theorem.\\

\paragraph{Case (ii.b):}
Assume that there are only finitely many time indices $t\ge m_0$ such that
$$x[t]-\lambda[t] g[t]~=~x_{j_{t}^{\prime}}[t]-\frac{\lambda[t]}{\left |  \calR_{j_{t}^{\prime}}^{1}[t] \right |}\pth{\sum_{i\in \calR_{j_{t}^{\prime}}^{1}[t] } h_i^{\prime}(x_{j_{t}^{\prime}}[t])}~\in ~Y.$$
Then, there exists\footnote{Recall that $m_0$ is the time index such that for each $t\ge m_0$, $x[t]$ is not a resilient point with respect to $g[t]$.} $m^{\prime}\ge m_0$ such that for each $t\ge m^{\prime}\ge m_0$, $x[t]$ is not a resilient point with respect to $g[t]$, and $x[t]-\lambda[t] g[t]\notin Y$.
Thus, for each $t\ge m^{\prime}\ge m_0$, (\ref{distance Y3}) holds, i.e.,
\begin{align*}
Dist\pth{z[t+1], Y}\le Dist\pth{z[t], Y}-\lambda[t]\left|\frac{1}{\left |  \calR_{j_{t}^{\prime}}^{1}[t] \right |}\pth{\sum_{i\in \calR_{j_{t}^{\prime}}^{1}[t] } h_i^{\prime}(x_{j_{t}^{\prime}}[t])}\right|.
\end{align*}


Recall that $0\le c< \infty$ is a nonnegative constant such that
$\lim_{t\diverge}~Dist\pth{z[t],Y}=c$.
{\bf Next we show that $c=0$.} We prove this by contradiction. Suppose $c>0$.
By Lemma \ref{accu point}, we know that either (A.1) is true or (A.2) is true.\\
(A.1) There exists a subsequence $\{z[t_k]\}_{k=0}^{\infty}$ such that $z[t_k]<\min Y$ for all $k\ge 0$.\\
(A.2) There exists a subsequence $\{z[t_k^{\prime}]\}_{k=0}^{\infty}$ such that $z[t_k^{\prime}]>\max Y$ for all $k\ge 0$.\\
We also know that at least one of $(\min Y-c)$ or $(\max Y+c)$ is an accumulation point of $\{z[t]\}_{t=0}^{\infty}$, and no other accumulation points exist.

Let $a=\min Y$, $b=\max Y$ and $\epsilon=\frac{c}{2}$. 
%
%
It can be seen from the proof of Lemma \ref{accu point} that there exists $m$ such that $z[t]\notin Y$ for each $t\ge m$.
We consider three scenarios: (A.1) is true but (A.2) is not true, (A.2) is true but (A.1) is not true, both (A.1) and (A.2) are true.
\paragraph{{\bf When (A.1) holds but (A.2) does not hold:}} That is, there exists a subsequence $\{z[t_k]\}_{k=0}^{\infty}$ such that $z[t_k]<\min Y$ for all $k\ge 0$; and there does not exist a subsequence $\{z[t_k^{\prime}]\}_{k=0}^{\infty}$ such that $z[t_k^{\prime}]>\max Y$ for all $k\ge 0$. Then there exists $m_1\ge m$ such that $z[t]<\min Y$ for each $t\ge m_1\ge m$. From the proof of Lemma \ref{accu point}, we know
\begin{align}
\label{A1-A2}
\lim_{t\diverge} z[t]~=~\min Y -c~=~a-c.
\end{align}
Since (\ref{A1-A2}) holds, there exists $m_1^*\ge m_1\ge m$ such that for all $t\ge m_1^*\ge m_1\ge m$, the following holds.
\begin{align}
\label{limit trapped A1-A2}
|z[t]-\pth{a-c}|\le \epsilon=\frac{c}{2}~~~\iff~~~a-\frac{3c}{2}\le z[t]\le a-\frac{c}{2}.
\end{align}
Since $c>0$, we have $a-\frac{c}{2}<a$. Then, for each $p(\cdot)\in \calC$, $p^{\prime}(a-\frac{c}{2})<0$. Thus,
$$\rho^*\triangleq \sup_{p(\cdot)\in \calC} p^{\prime}(a-\frac{c}{2})\le 0.$$
Let $K=\sum_{j\in \calF} {\bf 1}\{h_j^{\prime}(a-\frac{c}{2})\ge 0\}$.
Define $q(x)$ as follows,
$$q(x)=\frac{1}{|\calN|+K}\pth{\sum_{j\in \calN} h_j(x)+\sum_{j\in \calF} h_j(x){\bf 1}\{h_j^{\prime}(a-\frac{c}{2})\ge 0\}}.$$
It can be easily seen that $q(\cdot)\in \calC$ is a valid function and $$\rho^*=\sup_{p(\cdot)\in \calC} p^{\prime}(a-\frac{c}{2})=q^{\prime}(a-\frac{c}{2})<0.$$

Note that when $t\ge m_1^*\ge m_1\ge m$, (\ref{distance Y3}) may not hold, since it is possible that $z[t]-\lambda[t]g[t]\in Y$. Let $\tilde{t}_1 =\max\{m_1^*, m^{\prime}\}$. For each $t\ge \tilde{t}_1 =\max\{m_1^*, m^{\prime}\}$, (\ref{distance Y3}), (\ref{A1-A2}) and (\ref{limit trapped A1-A2}) hold. We have
\begin{align}
\nonumber
Dist\pth{z[t+1], Y}&\le Dist\pth{z[t], Y}-\lambda[t]\left|\frac{1}{\left |  \calR_{j_{t}^{\prime}}^{1}[t] \right |}\pth{\sum_{i\in \calR_{j_{t}^{\prime}}^{1}[t] } h_i^{\prime}(x_{j_{t}^{\prime}}[t])}\right|~~~\text{by}~(\ref{distance Y3})\\
\nonumber
&=Dist\pth{z[t], Y}-\lambda[t]\left|\frac{1}{\left |  \calR_{j_{t}^{\prime}}^{1}[t] \right |}\pth{\sum_{i\in \calR_{j_{t}^{\prime}}^{1}[t] } h_i^{\prime}(z[t])}+\frac{1}{\left |  \calR_{j_{t}^{\prime}}^{1}[t] \right |}\sum_{i\in \calR_{j_{t}^{\prime}}^{1}[t] } \pth{h_i^{\prime}(x_{j_{t}^{\prime}}[t])-h_i^{\prime}(z[t])}\right|\\
\nonumber
&\overset{(a)}{\le} Dist\pth{z[t], Y}-\lambda[t]\left|\frac{1}{\left |  \calR_{j_{t}^{\prime}}^{1}[t] \right |}\pth{\sum_{i\in \calR_{j_{t}^{\prime}}^{1}[t] } h_i^{\prime}(z[t])}\right |+\lambda[t](M[t]-m[t])L\\
&\le Dist\pth{z[t], Y}-\lambda[t]|\rho^*|+\lambda[t](M[t]-m[t])L.\label{case 2 a1}
\end{align}
Inequality $(a)$ holds because gradient $h_k^{\prime}(\cdot)$ is $L$--Lipschitz for each $k\in \calV$,
$$\left |x_{j_{t+1}^{\prime}}[t]-x_k[t]\right |\le \max_{i,\, j\in \calN[t]} \pth{x_i[t]-x_j[t]}=\max_{i\in \calN[t]} x_i[t]- \min_{j\in \calN[t]}x_j[t]=M[t]-m[t],$$
and the fact that
$$\frac{1}{\left | \calR_{j_{t}^{\prime}}[t]\right |}\sum_{k\in \calR_{j_{t}^{\prime}}[t]}1=1.$$
Next we show that the last inequality holds. Since $h_i^{\prime}(\cdot)$ is non-decreasing for each $i\in \calV$, then the function
$$\frac{1}{\left |  \calR_{j_{t}^{\prime}}^{1}[t] \right |}\pth{\sum_{i\in \calR_{j_{t}^{\prime}}^{1}[t] } h_i^{\prime}(\cdot)}$$
is non-decreasing. In addition, by (\ref{limit trapped A1-A2}) we know that
$a-\frac{3c}{2}\le z[t]\le a-\frac{c}{2}$. We get
\begin{align*}
\frac{1}{\left |  \calR_{j_{t}^{\prime}}^{1}[t] \right |}\pth{\sum_{i\in \calR_{j_{t}^{\prime}}^{1}[t] } h_i^{\prime}(z[t])}\le \frac{1}{\left |  \calR_{j_{t}^{\prime}}^{1}[t] \right |}\pth{\sum_{i\in \calR_{j_{t}^{\prime}}^{1}[t] } h_i^{\prime}(a-\frac{c}{2})}\le \sup_{p(\cdot)\in \calC} p^{\prime}(a-\frac{c}{2})=\rho^*<0.
\end{align*}
Thus,
\begin{align}
\label{g1}
\left|\frac{1}{\left |  \calR_{j_{t}^{\prime}}^{1}[t] \right |}\pth{\sum_{i\in \calR_{j_{t}^{\prime}}^{1}[t] } h_i^{\prime}(z[t])}\right|\ge |\rho^*|,
\end{align}
proving the last inequality in (\ref{case 2 a1}).  Repeatedly apply (\ref{case 2 a1}) for $t\ge \tilde{t}_1 =\max\{m_1^*, m^{\prime}\}$, we get

\begin{align}
\label{case 2 a11}
Dist\pth{z[t+1], Y}\le Dist\pth{z[\tilde{t}_1], ~Y}-\pth{\sum_{r=\tilde{t}_1}^{t}\lambda[r]}|\rho^*|+\sum_{r=\tilde{t}_1}^{t}\lambda[r](M[r]-m[r])L.
\end{align}
%
%
Taking limit on both sides of (\ref{case 2 a11}), we obtain
\begin{align}
\label{f1}
\nonumber
\lim_{t\diverge}Dist\pth{z[t+1], Y}&\le Dist\pth{z[\tilde{t}_1], ~Y}-\pth{\sum_{r=\tilde{t}_1}^{\infty}\lambda[r]}|\rho^*|+\sum_{r=\tilde{t}_1}^{\infty}\lambda[r](M[r]-m[r])L\\
\nonumber
&\le Dist\pth{z[\tilde{t}_1], ~Y}-\pth{\sum_{r=\tilde{t}_1}^{\infty}\lambda[r]}|\rho^*|+\sum_{r=0}^{\infty}\lambda[r](M[r]-m[r])L\\
\nonumber
&\overset{(a)}{=} Dist\pth{z[\tilde{t}_1], ~Y}-\infty+C_1\\
&=-\infty.
\end{align}
Equality $(a)$ is true due to (\ref{alg11}),  the fact that $|\rho^*|>0$ and that
\begin{align*}
\sum_{r=\tilde{t}_1}^{\infty}\lambda[r]=\sum_{t=0}^{\infty}\lambda[t]-\sum_{r=0}^{\tilde{t}_1-1}\lambda[r]=
\infty-\sum_{r=0}^{\tilde{t}_1-1}\lambda[r]=\infty.
\end{align*}
On the other hand, we know $ \lim_{t\diverge}Dist\pth{z[t], Y}= c>0$. This is a contradiction.
Thus,
$$ \lim_{t\diverge} Dist\pth{z[t], Y}=c=0.$$

\paragraph{{\bf When (A.2) holds but (A.1) does not hold:}} That is, there does not exist a subsequence $\{z[t_k]\}_{k=0}^{\infty}$ such that $z[t_k]<\min Y$ for all $k\ge 0$; and there exists a subsequence $\{z[t_k^{\prime}]\}_{k=0}^{\infty}$ such that $z[t_k^{\prime}]>\max Y$ for all $k\ge 0$. Recall that $z[t]\notin Y$ for each $t\ge m$. Then there exists $m_2\ge m$ such that $z[t]>\max Y$ for each $m_2$. From the proof of Lemma \ref{accu point}, we get
\begin{align}
\label{A2-A1}
\lim_{t\diverge} z[t]~=~\max Y +c=b+c.
\end{align}

Since (\ref{A2-A1}) holds, there exists $m_2^*\ge m_2\ge m$ such that for all $t\ge m_2^*\ge m_2\ge m$, the following holds.
\begin{align}
\label{limit trapped A2-A1}
|z[t]-\pth{b+c}|\le \epsilon=\frac{c}{2}~~~\iff~~~b+\frac{c}{2}\le z[t]\le b+\frac{3c}{2}.
\end{align}
Since $c>0$, we have $b+\frac{c}{2}>b$. Then, for each $p(\cdot)\in \calC$, $p^{\prime}(b+\frac{c}{2})>0$. Then,
$$\tilde{\rho}\triangleq \inf_{p(\cdot)\in \calC} p^{\prime}(b+\frac{c}{2})\ge 0.$$
Let $K=\sum_{j\in \calF} {\bf 1}\{h_j^{\prime}(b+\frac{c}{2})\le 0\}$.
Define $\tilde{q}(x)$ as follows,
$$\tilde{q}(x)=\frac{1}{|\calN|+K}\pth{\sum_{j\in \calN} h_j(x)+\sum_{j\in \calF} h_j(x){\bf 1}\{h_j^{\prime}(b+\frac{c}{2})\le 0\}}.$$
It can be easily seen that $\tilde{q}(\cdot)\in \calC$ is a valid function and $$\inf_{p(\cdot)\in \calC} p^{\prime}(b+\frac{c}{2})={\tilde{q}}^{\prime}(b+\frac{c}{2})>0.$$
Then, $\tilde{\rho}={\tilde{q}}^{\prime}(b+\frac{c}{2})>0$.\\

Note that when $t\ge m_2^*\ge m_2\ge m$, (\ref{distance Y3}) may not hold, since it is possible that $z[t]-\lambda[t]g[t]\in Y$. Let $\tilde{t}_2 =\max\{m_2^*, m^{\prime}\}$. For each $t\ge \tilde{t}_2 =\max\{m_2^*, m^{\prime}\}$, (\ref{distance Y3}), (\ref{A2-A1}) and (\ref{limit trapped A2-A1}) hold. We have
\begin{align}
\label{case 2 a2}
\nonumber
Dist\pth{z[t+1], Y}&\le Dist\pth{z[t], Y}-\lambda[t]\left|\frac{1}{\left |  \calR_{j_{t}^{\prime}}^{1}[t] \right |}\pth{\sum_{i\in \calR_{j_{t}^{\prime}}^{1}[t] } h_i^{\prime}(x_{j_{t}^{\prime}}[t])}\right|~~~\text{by}~(\ref{distance Y3})\\
\nonumber
&\le Dist\pth{z[t], Y}-\lambda[t]\left|\frac{1}{\left |  \calR_{j_{t}^{\prime}}^{1}[t] \right |}\pth{\sum_{i\in \calR_{j_{t}^{\prime}}^{1}[t] } h_i^{\prime}(z[t])}\right |+\lambda[t](M[t]-m[t])L\\
&\le Dist\pth{z[t], Y}-\lambda[t]|\tilde{\rho}|+\lambda[t]L(M[t]-m[t]).
\end{align}
Next we show that the last inequality holds. Recall that the function
$$\frac{1}{\left |  \calR_{j_{t}^{\prime}}^{1}[t] \right |}\pth{\sum_{i\in \calR_{j_{t}^{\prime}}^{1}[t] } h_i^{\prime}(\cdot)}$$
is non-decreasing. 
We get
\begin{align}
\label{g2}
\nonumber
\frac{1}{\left |  \calR_{j_{t}^{\prime}}^{1}[t] \right |}\pth{\sum_{i\in \calR_{j_{t}^{\prime}}^{1}[t] } h_i^{\prime}(z[t])}&\ge \frac{1}{\left |  \calR_{j_{t}^{\prime}}^{1}[t] \right |}\pth{\sum_{i\in \calR_{j_{t}^{\prime}}^{1}[t] } h_i^{\prime}(b+\frac{c}{2})}~~~\text{by}~(\ref{limit trapped A2-A1})\\
&\ge \inf_{p(\cdot)\in \calC} p^{\prime}(b+\frac{c}{2})=\tilde{\rho}>0,
\end{align}
proving the last inequality in (\ref{case 2 a2}).  Repeatedly apply (\ref{case 2 a2}) for $t\ge \tilde{t}_2 =\max\{m_2^*, m^{\prime}\}$, we get

\begin{align}
\label{case 2 a21}
Dist\pth{z[t+1], Y}\le Dist\pth{z[\tilde{t}_2], ~Y}-\pth{\sum_{r=\tilde{t}_2}^{t}\lambda[r]}|\tilde{\rho}|+\sum_{r=\tilde{t}_2}^{t}\lambda[r]L(M[r]-m[r]).
\end{align}
%

%

Taking limit on both sides of (\ref{case 2 a21}), we obtain
\begin{align*}
\lim_{t\diverge}Dist\pth{z[t+1], Y}&\le Dist\pth{z[\tilde{t}_2], ~Y}-\pth{\sum_{r=\tilde{t}_2}^{\infty}\lambda[r]}|\tilde{\rho}|+\sum_{r=\tilde{t}_2}^{\infty}\lambda[r](M[r]-m[r])L\\
\nonumber
&\le Dist\pth{z[\tilde{t}_2], ~Y}-\pth{\sum_{r=\tilde{t}_2}^{\infty}\lambda[r]}|\tilde{\rho}|+\sum_{r=0}^{\infty}\lambda[r](M[r]-m[r])L\\
&\le Dist\pth{z[\tilde{t}_2], ~Y}-\infty+C_1\\
&=-\infty.
\end{align*}
This inequality is obtained similarly to the inequality (\ref{f1}).
On the other hand, we know $ \lim_{t\diverge}Dist\pth{z[t], Y}= c>0$. This is a contradiction.
Thus,
$$ \lim_{t\diverge} Dist\pth{z[t], Y}=c=0.$$

\paragraph{{\bf Both (A.1) and (A.2) hold:}}
Let $\{z[t_k]\}_{k=0}^{\infty}$ be a maximal subsequence of $\{z[t]\}_{t=0}^{\infty}$ such that $t_k\ge m^{\prime}$ and $z[t_k]<\min Y$ for all $k\ge 0$. Let $\{z[t_k^{\prime}]\}_{k=0}^{\infty}$ be a maximal subsequence of $\{z[t]\}_{t=0}^{\infty}$ such that $t_k^{\prime}\ge m^{\prime}$ and $t_k^{\prime}>\max Y$ for all $k\ge 0$.
Recall that $z[t]\notin Y$ for each $t\ge m$. Then,
$$
\{z[t_k]\}_{k=0}^{\infty}\cup \{z[t_k^{\prime}]\}_{k=0}^{\infty}=\{z[t]\}_{t\ge m}^{\infty}.$$

By Lemma \ref{accu point}, we know
\begin{align}
\label{A1+A2}
\lim_{k\diverge} z[t_k]=\min Y-c=a-c~~~~\text{and}~~~\lim_{k\diverge} z[t_k^{\prime}]=\max Y+c=b+c.
\end{align}
Since $$\{z[t_k]\}_{k=0}^{\infty}\cup \{z[t_k^{\prime}]\}_{k=0}^{\infty}=\{z[t]\}_{t\ge m}^{\infty},$$
 there exist $m_3\ge m$ such that for each $t\ge m_3$,
\begin{align}
\label{limit trapped A1+A2}
a-\frac{3c}{2}\le z[t]\le a-\frac{c}{2}~~~\text{or}~~b+\frac{c}{2}\le z[t]\le b+\frac{3c}{2}.
\end{align}
Recall that
\begin{align*}
\rho^*= \sup_{p(\cdot)\in \calC} p^{\prime}(a-\frac{c}{2})=q^{\prime}(a-\frac{c}{2})~~~\text{and}~~~\tilde{\rho}\triangleq \inf_{p(\cdot)\in \calC} p^{\prime}(b+\frac{c}{2})={\tilde{q}}^{\prime}(b+\frac{c}{2}).
\end{align*}

Recall that for each $t\ge m^{\prime}\ge m_0,$ $z[t]$ is not a resilient point with respect to $g[t]$, and $z[t]-\lambda[t]g[t]\notin Y$. Thus (\ref{distance Y3}) holds, i.e.,
\begin{align*}
Dist\pth{z[t+1], Y}\le Dist\pth{z[t], Y}-\lambda[t]\left|\frac{1}{\left |  \calR_{j_{t}^{\prime}}^{1}[t] \right |}\pth{\sum_{i\in \calR_{j_{t}^{\prime}}^{1}[t] } h_i^{\prime}(x_{j_{t}^{\prime}}[t])}\right|.
\end{align*}

Since (\ref{limit trapped A1+A2}), by (\ref{g1}) and (\ref{g2}), we get
\begin{align*}
\left |\frac{1}{\left |  \calR_{j_{t}^{\prime}}^{1}[t] \right |}\pth{\sum_{i\in \calR_{j_{t}^{\prime}}^{1}[t] } h_i^{\prime}(z[t])}\right |\ge |\rho^*|~~~\text{or}~~~\left |\frac{1}{\left |  \calR_{j_{t}^{\prime}}^{1}[t] \right |}\pth{\sum_{i\in \calR_{j_{t}^{\prime}}^{1}[t] } h_i^{\prime}(z[t])}\right |\ge |\tilde{\rho}|.
\end{align*}
Thus,
\begin{align}
\label{g3}
~\left |\frac{1}{\left |  \calR_{j_{t}^{\prime}}^{1}[t] \right |}\pth{\sum_{i\in \calR_{j_{t}^{\prime}}^{1}[t] } h_i^{\prime}(z[t])}\right |\ge \min\{|\rho^*|, |\tilde{\rho}|\}.
\end{align}

We get
\begin{align}
\label{case 2 a3}
\nonumber
Dist\pth{z[t+1], Y}&\le Dist\pth{z[t], Y}-\lambda[t]\left|\frac{1}{\left |  \calR_{j_{t}^{\prime}}^{1}[t] \right |}\pth{\sum_{i\in \calR_{j_{t}^{\prime}}^{1}[t] } h_i^{\prime}(x_{j_{t}^{\prime}}[t])}\right|~~~\text{by}~(\ref{distance Y3})\\
\nonumber
&\le Dist\pth{z[t], Y}-\lambda[t]\left|\frac{1}{\left |  \calR_{j_{t}^{\prime}}^{1}[t] \right |}\pth{\sum_{i\in \calR_{j_{t}^{\prime}}^{1}[t] } h_i^{\prime}(z[t])}\right |+\lambda[t](M[t]-m[t])L\\
&\le Dist\pth{z[t], Y}-\lambda[t]\min\{|\rho^*|, |\tilde{\rho}|\}+\lambda[t]L(M[t]-m[t])~~~\text{by}~(\ref{g3})
\end{align}

Repeatedly apply (\ref{case 2 a3}) for $t\ge \tilde{t}_3 =\max\{m_3^*, m^{\prime}\}$, we get
\begin{align}
\label{case 2 a31}
Dist\pth{z[t+1], Y}\le Dist\pth{z[\tilde{t}_3], ~Y}-\pth{\sum_{r=\tilde{t}_3}^{t}\lambda[r]}\min\{|\rho^*|, |\tilde{\rho}|\} +\sum_{r=\tilde{t}_3}^{t}\lambda[r]L(M[r]-m[r]).
\end{align}

Taking limit on both sides of (\ref{case 2 a3}), we obtain
\begin{align*}
\lim_{t\diverge}Dist\pth{z[t+1], Y}&\le Dist\pth{z[\tilde{t}_3], ~Y}-\pth{\sum_{r=\tilde{t}_3}^{\infty}\lambda[r]}\min\{|\rho^*|, |\tilde{\rho}|\}+\sum_{r=\tilde{t}_3}^{\infty}\lambda[r]L(M[r]-m[r])\\
&\le Dist\pth{z[\tilde{t}_3], ~Y}-\infty+C_1\\
&=-\infty.
\end{align*}
This inequality is obtained similarly to the inequality (\ref{f1}).

On the other hand, we know $\lim_{t\diverge}Dist\pth{z[t], Y}= c>0$. A contradiction is proved.
Thus,
$$ \lim_{t\diverge} Dist\pth{z[t], Y}=c=0.$$

The proof is complete.

\eproof
\end{proof}

\subsection{Algorithm 2}

In Algorithm 1, in each iteration $t\ge 1$, there are two rounds of information exchange. Next we will present a simple algorithm which only requires one message sent by each agent per iteration. In this algorithm, each agent $j$ maintains one local estimate $x_j$, where $x_j[0]$ is an arbitrary input at agent $j$.

%

\paragraph{}
\hrule
~
\vspace*{4pt}

{\bf Algorithm 2} for agent $j$ for iteration $t\ge 1$:
~
\vspace*{4pt}\hrule

\begin{list}{}{}

\item[{\bf Step 1:}]
Compute $h_j^{\prime}(x_j[t-1])$-- the gradient of local function $h_j(\cdot)$ at point $x_j[t-1]$, and send the tuple $\pth{x_j[t-1],h_j^{\prime}(x_j[t-1])}$ to all the agents (including agent $j$ itself).\\
~
\item[{\bf Step 2:}]
Let $R_j[t-1]$ denote the set of tuples of the form $\pth{x_i[t-1],h_i^{\prime}(x_i[t-1])}$ received as a result of step 1. 
Update $x_j$ as
\begin{align}
\label{update x crash 2}
x_j[t]=\frac{1}{\left | \calR_j[t-1]\right |} \pth{\sum_{i\in \calR_j[t-1]} \pth{x_i[t-1]-\lambda[t-1] h_i^{\prime}(x_i[t-1])}}.
\end{align}
~

\end{list}

~
\hrule

~

~

Note that $\calR_j[t-1]\subseteq \calN[t-1]$. In addition, set $Y$ is the same as that defined earlier for Algorithm 1.
\begin{lemma}
\label{consensus alg2}
Under Algorithm 2, the sequence $\{M[t]-m[t]\}_{t=0}^{\infty}$ converges and
$$\lim_{t\diverge} \pth{M[t]-m[t]}=0.$$
\end{lemma}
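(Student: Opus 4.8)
The plan is to mirror the argument for Lemma~\ref{consensus alg1} almost verbatim; the only structural difference is that Algorithm~2 fuses the gradient step and the averaging step into a single update. For each $k\in\calN[t-1]$ introduce the quantity $v_k[t]\triangleq x_k[t-1]-\lambda[t-1]h_k^{\prime}(x_k[t-1])$, which plays the role that $s_k[t]$ did in Algorithm~1, so that (\ref{update x crash 2}) reads $x_j[t]=\frac{1}{|\calR_j[t-1]|}\sum_{i\in\calR_j[t-1]}v_i[t]$. Since $|h_k^{\prime}(x)|\le L$ for all $x$ and all $k\in\calV$ (admissibility of every agent's function, faulty or not), the analogue of (\ref{alg1 consensus 1})--(\ref{alg1 consensus 2}) holds: for $k\in\calN[t-1]$,
$$m[t-1]-\lambda[t-1]L~\le~v_k[t]~\le~M[t-1]+\lambda[t-1]L,$$
where the upper bound uses $x_k[t-1]\le M[t-1]$, valid precisely because $\calR_j[t-1]\subseteq\calN[t-1]$. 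Also $|\calR_j[t-1]|\ge n-f$ and $|\calR_i[t-1]\setminus\calR_j[t-1]|\le|\calF|\le f$ for any non-faulty $i,j$.

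Next I would pick $i,j\in\calN[t]$ with $x_i[t]=M[t]$ and $x_j[t]=m[t]$, write $M[t]-m[t]=x_i[t]-x_j[t]$ via (\ref{update x crash 2}), and, WLOG assuming $|\calR_i[t-1]|\ge|\calR_j[t-1]|$ (the other case being symmetric), carry out exactly the set-decomposition bookkeeping of (\ref{iterative m crash 1})--(\ref{a2}) with $v_k[t]$ in place of $s_k[t]$ and $\calR_j[t-1]$ in place of $\calR^2_j[t-1]$: split the two weighted sums over $\calR_i[t-1]\cap\calR_j[t-1]$ and the exclusive parts, cancel the common part, and bound the surviving terms using the two-sided estimate on $v_k[t]$. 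This produces, just as in (\ref{alg ub}),
$$M[t]-m[t]~\le~\frac{|\calR_i[t-1]\setminus\calR_j[t-1]|}{|\calR_i[t-1]|}\pth{M[t-1]-m[t-1]+2\lambda[t-1]L}~\le~\frac{f}{n-f}\pth{M[t-1]-m[t-1]+2\lambda[t-1]L},$$
and unrolling the recursion,
$$M[t]-m[t]~\le~\pth{\frac{f}{n-f}}^{t}\pth{M[0]-m[0]}+2L\sum_{r=0}^{t-1}\pth{\frac{f}{n-f}}^{t-r}\lambda[r].$$

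Finally I would conclude exactly as in Lemma~\ref{consensus alg1}: since $n>3f$ we have $b\triangleq\frac{f}{n-f}<1$, so $\pth{\frac{f}{n-f}}^{t}(M[0]-m[0])\to 0$, and by Proposition~\ref{crash sum 0} the sum $\sum_{r=0}^{t-1}b^{\,t-r}\lambda[r]\to 0$; hence $\limsup_{t\diverge}(M[t]-m[t])\le 0$, while $M[t]-m[t]\ge 0$ for every $t$ gives $\liminf_{t\diverge}(M[t]-m[t])\ge 0$, so the limit exists and equals $0$. There is no essential obstacle here; the only point requiring care is the combinatorial bookkeeping in the set decomposition when $|\calR_i[t-1]|\ne|\calR_j[t-1]|$, together with noting that the bound on $v_k[t]$ relies only on the uniform gradient bound $L$ (which holds for all agents) and on $k\in\calN[t-1]$, not on any consensus property that we are in the middle of proving.
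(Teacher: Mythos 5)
Your proposal is correct and is essentially the proof the paper intends: the paper omits the proof of Lemma \ref{consensus alg2}, stating only that it is similar to that of Lemma \ref{consensus alg1}, and your adaptation—substituting $v_k[t]=x_k[t-1]-\lambda[t-1]h_k^{\prime}(x_k[t-1])$ for $s_k[t]$, using $\calR_j[t-1]\subseteq\calN[t-1]$ and the uniform gradient bound $L$ to get the two-sided estimate, and then repeating the set-decomposition and Proposition \ref{crash sum 0} argument verbatim—is exactly that similar proof.
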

Recall that $M[t]=\max_{i\in \calN[t]} x_i[t]$ and $m[t]=\min_{i\in \calN[t]} x_i[t]$. Lemma \ref{consensus alg2} implies that asymptotic consensus is achieved under Algorithm 2. The proof of Lemma \ref{consensus alg2} is similar to the proof of Lemma \ref{consensus alg1}, and is omitted. 

\begin{lemma}
\label{finite ub crash}
Under Algorithm 2, the following holds.
$$\sum_{t=0}^{\infty} \lambda[t]\pth{M[t]-m[t]}<\infty.$$
\end{lemma}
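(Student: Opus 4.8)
The plan is to reprove Lemma~\ref{finite ub crash} by mirroring the proof of Lemma~\ref{alg1 finite ub crash} almost verbatim; the only new ingredient is the counterpart of the contraction estimate (\ref{alg ub}) for Algorithm~2. So first I would establish that, under Algorithm~2,
\begin{align}
\label{alg2 ub proposal}
M[t]-m[t] \le \pth{\frac{f}{n-f}}^{t}\pth{M[0]-m[0]} + 2L\pth{\sum_{r=0}^{t-1}\pth{\frac{f}{n-f}}^{t-r}\lambda[r]}.
\end{align}
This is exactly the estimate underlying the (omitted) proof of Lemma~\ref{consensus alg2}: writing $y_i[t-1]\triangleq x_i[t-1]-\lambda[t-1]h_i^{\prime}(x_i[t-1])$, the update (\ref{update x crash 2}) makes each $x_j[t]$ a weighted average (weights $1/|\calR_j[t-1]|$) of the values $y_i[t-1]$ over $i\in\calR_j[t-1]\subseteq\calN[t-1]$; and since $|h_i^{\prime}|\le L$ one has $m[t-1]-\lambda[t-1]L\le y_i[t-1]\le M[t-1]+\lambda[t-1]L$ for $i\in\calN[t-1]$. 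Thus $y_i[t-1]$ plays precisely the role that $s_k[t]$ plays in the proof of Lemma~\ref{consensus alg1}, and the same argument (using $|\calR_i[t-1]-\calR_j[t-1]|\le f$ and $|\calR_i[t-1]|\ge n-f$) gives $M[t]-m[t]\le \frac{f}{n-f}\pth{M[t-1]-m[t-1]+2\lambda[t-1]L}$, which unrolls to (\ref{alg2 ub proposal}).

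Next, as in Lemma~\ref{alg1 finite ub crash}, I would observe that $\lambda[0]\pth{M[0]-m[0]}<\infty$, so it suffices to bound $\sum_{t=1}^{\infty}\lambda[t]\pth{M[t]-m[t]}$. Substituting (\ref{alg2 ub proposal}) and using $\lambda[t]\lambda[r]\le\frac12\pth{\lambda^2[t]+\lambda^2[r]}$ decomposes this sum into the same three pieces as in (\ref{alg1 finiteness CB}): (i) $\pth{M[0]-m[0]}\sum_{t\ge1}\lambda[t]\pth{\frac{f}{n-f}}^{t}$; (ii) $L\sum_{t\ge1}\lambda^2[t]\sum_{r=0}^{t-1}\pth{\frac{f}{n-f}}^{t-r}$; and (iii) $L\sum_{t\ge1}\sum_{r=0}^{t-1}\pth{\frac{f}{n-f}}^{t-r}\lambda^2[r]$. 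Exactly the estimates (\ref{1CB finiteness t1}), (\ref{1CB finiteness t2}), (\ref{1CB finiteness t3}) then apply word for word: piece~(i) is at most $\pth{M[0]-m[0]}\lambda[0]\frac{n-f}{n-2f}$ via $\lambda[t]\le\lambda[0]$ and the geometric series; piece~(ii) is at most $\frac{n-f}{n-2f}L\sum_t\lambda^2[t]$; and piece~(iii), after swapping the order of summation, is at most $\frac{n-f}{n-2f}L\sum_r\lambda^2[r]$. All three are finite because $\sum\pth{\frac{f}{n-f}}^k=\frac{n-f}{n-2f}<\infty$ (as $n>3f$) and $\sum_t\lambda^2[t]<\infty$; adding the three bounds proves the lemma.

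The only real obstacle is making (\ref{alg2 ub proposal}) rigorous, i.e.\ redoing for Algorithm~2 the consensus estimate whose proof was suppressed for Lemma~\ref{consensus alg2}. But since the single-round update (\ref{update x crash 2}) is structurally identical to the second-round update (\ref{update x crash 1}) of Algorithm~1 with $y_i[t-1]$ in place of $s_i[t]$ (and with the gradient step already folded in, so that the $\pm\lambda[t-1]L$ slack is the same), this step is routine; everything after it is a literal repetition of the summation bookkeeping in the proof of Lemma~\ref{alg1 finite ub crash}.
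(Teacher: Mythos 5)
Your proposal is correct and follows exactly the route the paper intends: the paper omits this proof precisely because it is the argument of Lemma \ref{alg1 finite ub crash} with the contraction estimate (\ref{alg ub}) rederived for the update (\ref{update x crash 2}), and your observation that $y_i[t-1]=x_i[t-1]-\lambda[t-1]h_i^{\prime}(x_i[t-1])$ plays the role of $s_i[t]$ (with the same $\pm\lambda[t-1]L$ slack, the same cancellation over $\calR_i[t-1]\cap\calR_j[t-1]$, and the same bounds $|\calR_i[t-1]-\calR_j[t-1]|\le f$, $|\calR_i[t-1]|\ge n-f$) is the right way to supply the missing step. The subsequent three-term decomposition and the finiteness bounds are, as you say, a verbatim repetition of (\ref{1CB finiteness t1})--(\ref{1CB finiteness t3}).
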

The proof of Lemma \ref{finite ub crash} is the similar to the proof of Lemma \ref{alg1 finite ub crash}, and is omitted.
By Lemma \ref{finite ub crash}, we know there exists some constant $C_2$ such that for any constant $t\ge 0$,
\begin{align}
\label{alg22}
\sum_{\tau=t}^{\infty}\lambda[\tau]L\pth{M[\tau]-m[\tau]}\le \sum_{\tau=0}^{\infty}\lambda[\tau]L\pth{M[\tau]-m[\tau]}\le C_2.
\end{align}
The following corollary is an immediate consequence of Lemma \ref{finite ub crash}.
\begin{corollary}
\label{cor1}
Under Algorithm 2,
$$
\lim_{t\diverge} \lambda[t]\pth{M[t]-m[t]}=0,
$$
and
$$
\lim_{t\diverge} \sum_{\tau=t}^{\infty}\lambda[\tau]\pth{M[\tau]-m[\tau]}=0.
$$
\end{corollary}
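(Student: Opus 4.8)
The plan is to mirror the argument already used for Corollary \ref{alg1 cor1}, since Corollary \ref{cor1} stands in exactly the same relation to Lemma \ref{finite ub crash} as Corollary \ref{alg1 cor1} does to Lemma \ref{alg1 finite ub crash}. The entire statement is an elementary consequence of the summability established in Lemma \ref{finite ub crash}, namely $\sum_{t=0}^{\infty}\lambda[t]\pth{M[t]-m[t]}<\infty$, together with the nonnegativity $M[t]-m[t]\ge 0$ for every $t\ge 0$, which holds by the definitions $M[t]=\max_{i\in\calN[t]}x_i[t]$ and $m[t]=\min_{i\in\calN[t]}x_i[t]$.

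For the first limit I would simply invoke the fact that the general term of a convergent series of nonnegative reals tends to zero: since $\sum_{t=0}^{\infty}\lambda[t]\pth{M[t]-m[t]}$ converges by Lemma \ref{finite ub crash} and each summand $\lambda[t]\pth{M[t]-m[t]}$ is nonnegative, it follows that $\lim_{t\diverge}\lambda[t]\pth{M[t]-m[t]}=0$.

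For the second limit I would introduce the partial sums $F_t\triangleq\sum_{\tau=0}^{t-1}\lambda[\tau]\pth{M[\tau]-m[\tau]}$ and the total sum $F\triangleq\sum_{\tau=0}^{\infty}\lambda[\tau]\pth{M[\tau]-m[\tau]}$. Nonnegativity of the summands makes $\{F_t\}$ nondecreasing and bounded above by $F$, which is finite by Lemma \ref{finite ub crash}; hence by the monotone convergence theorem $F_t\to F$. Writing $R_t\triangleq F-F_t=\sum_{\tau=t}^{\infty}\lambda[\tau]\pth{M[\tau]-m[\tau]}$, which is well defined precisely because $F<\infty$, I would conclude $\lim_{t\diverge}R_t=F-\lim_{t\diverge}F_t=F-F=0$, which is exactly the claimed second limit.

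There is no real obstacle here; the only points requiring a little care are that the finiteness $F<\infty$ must be imported from Lemma \ref{finite ub crash} (whose proof is omitted as "similar to" that of Lemma \ref{alg1 finite ub crash}) before the tail series $R_t$ can be manipulated, and that the telescoping $R_t=F-F_t$ should be asserted only after $F$ is known to be finite. Everything else is the standard fact that tails of a convergent series vanish.
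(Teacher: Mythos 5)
Your proof is correct and follows exactly the route the paper intends: the paper omits the proof of Corollary \ref{cor1} as ``similar to the proof of Corollary \ref{alg1 cor1}'', and that earlier proof uses precisely your argument --- the vanishing of the general term of a convergent nonnegative series for the first limit, and the partial-sum/monotone-convergence/tail decomposition $R_t=F-F_t$ for the second. No discrepancies to report.
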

The proof of Corollary \ref{cor1} is similar to the proof of Corollary \ref{alg1 cor1}, and is omitted.

In our convergence analysis, we will use the well-know ``almost supermartingale" convergence theorem in \cite{Robbins1985}, which can also be found as Lemma 11, in Chapter 2.2 \cite{poljak1987introduction}. We present a simpler deterministic version of the theorem in the next lemma.

\begin{lemma}\cite{Robbins1985}
\label{SB stochatic convergence}
Let $\{ a_t\}_{t=0}^{\infty}, \{ b_t\}_{t=0}^{\infty}$, and $\{ c_t\}_{t=0}^{\infty}$ be non-negative sequences. Suppose that
$$a_{t+1}\le a_{t}-b_{t}+c_{t}~~~~~\text{for all}~t\ge 0,$$
and $\sum_{t=0}^{\infty} c_t<\infty$. Then $\sum_{t=0}^{\infty} b_t<\infty$ and the sequence $\{a_t\}_{t=0}^{\infty}$ converges to a non-negative value.
\end{lemma}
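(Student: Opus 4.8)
The plan is to convert the given one-step inequality into a genuinely non-increasing sequence by absorbing the tails of $\{c_t\}$ into $\{a_t\}$, and then read off both conclusions from the monotone convergence theorem (MCT) and a telescoping sum. Concretely, I would define, for each $t\ge 0$,
$$d_t \triangleq a_t + \sum_{s=t}^{\infty} c_s.$$
This is well-defined and finite precisely because $\sum_{s=0}^{\infty} c_s<\infty$, and $d_t\ge 0$ since it is a sum of non-negative terms. Using the hypothesis $a_{t+1}\le a_t - b_t + c_t$, a one-line computation gives
$$d_{t+1} = a_{t+1} + \sum_{s=t+1}^{\infty} c_s \le \pth{a_t - b_t + c_t} + \sum_{s=t+1}^{\infty} c_s = d_t - b_t \le d_t,$$
where the last step uses $b_t\ge 0$. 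Hence $\{d_t\}_{t=0}^{\infty}$ is non-increasing and bounded below by $0$, so by MCT it converges to some $d_\infty\ge 0$.

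Next I would recover the two claimed conclusions. Since $\sum_{s=0}^{\infty} c_s<\infty$, the tail $\sum_{s=t}^{\infty} c_s\to 0$ as $t\diverge$; combining this with $a_t = d_t - \sum_{s=t}^{\infty} c_s$ yields $\lim_{t\diverge} a_t = d_\infty\ge 0$, which is the stated convergence of $\{a_t\}$ to a non-negative value. For the summability of $\{b_t\}$, I would telescope the inequality $b_t\le d_t - d_{t+1}$: for any finite $T$,
$$\sum_{t=0}^{T} b_t \le \sum_{t=0}^{T}\pth{d_t - d_{t+1}} = d_0 - d_{T+1} \le d_0 < \infty,$$
and letting $T\diverge$ gives $\sum_{t=0}^{\infty} b_t \le d_0 < \infty$.

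There is no substantive obstacle in this argument; it is the standard telescoping trick. The only point deserving a moment's care is verifying that $d_t$ is finite and that its tail correction $\sum_{s=t}^{\infty}c_s$ vanishes as $t\diverge$ — this is exactly where the assumption $\sum_t c_t<\infty$ enters, and it is what lets the limit of the monotone sequence $\{d_t\}$ transfer back to $\{a_t\}$.
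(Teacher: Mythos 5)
Your proof is correct. Note that the paper does not actually prove this lemma---it imports it by citation from Robbins and Siegmund (and points to Lemma~11, Chapter~2.2 of Poljak's book), so there is no in-paper argument to compare against. What you have written is the standard proof of the deterministic Robbins--Siegmund lemma: absorbing the tail $\sum_{s=t}^{\infty}c_s$ into $a_t$ to manufacture the non-increasing sequence $d_t$, invoking monotone convergence, and telescoping $b_t\le d_t-d_{t+1}$ to get summability of $\{b_t\}$. All steps check out, including the two points that actually need the hypothesis $\sum_t c_t<\infty$: finiteness of $d_t$ and the vanishing of the tail, which transfers the limit of $d_t$ back to $a_t$. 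Your write-up would serve as a complete, self-contained substitute for the citation.
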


Recall that set $Y$ is the same as that defined earlier for Algorithm 1.
We define $z[t]$ and $x_{j_t}$ similar to that for Algorithm 1. In particular, let $\{z[t]\}_{t=0}^{\infty}$ be a sequence of estimates such that
\begin{align}
\label{alg2 crash sequence z}
z[t]=x_{j_{t}}[t], ~~\text{where}~j_{t}\in \argmax_{j\in \calN[t]} Dist\pth{x_j[t], Y}.
\end{align}
From the definition, there is a sequence of agents $\{j_t\}_{t=0}^{\infty}$ associated with the sequence $\{z[t]\}_{t=0}^{\infty}$.

\begin{theorem}
\label{optimal alg2}
The sequence $\{Dist\pth{z[t], Y}\}_{t=0}^{\infty}$ converges and $$\lim_{t\diverge} Dist\pth{z[t], Y}=0.$$
\end{theorem}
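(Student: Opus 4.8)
The plan is to follow the proof of Theorem~\ref{optimal alg1} almost verbatim: the set $Y$, Definition~\ref{BS def resilient}, and the consensus facts carry over unchanged (Lemma~\ref{consensus alg2}, Lemma~\ref{finite ub crash}, Corollary~\ref{cor1}), so the only genuinely new issue is that the single-round update (\ref{update x crash 2}) aggregates gradients evaluated at \emph{different} points $x_i[t-1]$ and hence is only \emph{approximately} a gradient step of a valid function. I would absorb the mismatch into a perturbation of size $O\pth{\lambda[t-1]L\pth{M[t-1]-m[t-1]}}$ and show it is harmless.

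First I would rewrite (\ref{update x crash 2}) exactly as a perturbed gradient step of a valid function. For $j\in\calN[t]$ put $\bar x_j[t-1]\triangleq\frac{1}{|\calR_j[t-1]|}\sum_{i\in\calR_j[t-1]}x_i[t-1]$. Since every non-faulty agent sends its tuple in Step~1, $\calN\subseteq\calR_j[t-1]\subseteq\calN[t-1]$, so $\hat p_j\triangleq\frac{1}{|\calR_j[t-1]|}\sum_{i\in\calR_j[t-1]}h_i$ is a valid function (with $C=1/|\calR_j[t-1]|\in[1/n,1/|\calN|]$, coefficient $1$ on each received agent and $0$ on the rest). Then (\ref{update x crash 2}) reads $x_j[t]=\bar x_j[t-1]-\lambda[t-1]\cdot\frac{1}{|\calR_j[t-1]|}\sum_{i\in\calR_j[t-1]}h_i'(x_i[t-1])$, and because all of $x_i[t-1],\bar x_j[t-1]$ lie in $[m[t-1],M[t-1]]$ and each $h_i'$ is $L$--Lipschitz, the aggregated gradient differs from $\hat p_j'(\bar x_j[t-1])$ by at most $L\pth{M[t-1]-m[t-1]}$; hence, using $\left|Dist\pth{a,Y}-Dist\pth{b,Y}\right|\le|a-b|$,
\begin{align*}
Dist\pth{x_j[t],Y}\le Dist\pth{\bar x_j[t-1]-\lambda[t-1]\hat p_j'(\bar x_j[t-1]),\,Y}+\lambda[t-1]L\pth{M[t-1]-m[t-1]}.
\end{align*}
Convexity of $Dist\pth{\cdot,Y}$ together with $\bar x_j[t-1]\in[m[t-1],M[t-1]]$ also yields $Dist\pth{\bar x_j[t-1],Y}\le\max_{i\in\calN[t-1]}Dist\pth{x_i[t-1],Y}=Dist\pth{z[t-1],Y}$.

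Next I would replay the dichotomy of Theorem~\ref{optimal alg1}. Choosing the agent attaining $z[t]$ and defining surrogate sequences $x[t-1],g[t-1]$ as there but with $\bar x_{j}$ and $\hat p_{j}'$ playing the roles of $x_j$ and the aggregated gradient, the resilient-point analysis of Definition~\ref{BS def resilient} applied to the valid function $\hat p_{j}$ reproduces the branches (\ref{distance Y1}), (\ref{distance Y2'})--(\ref{distance Y2}), (\ref{r3})--(\ref{distance Y4}), each now carrying the extra additive term $\lambda[t-1]L\pth{M[t-1]-m[t-1]}$; in particular, when $x[t-1]$ is not resilient and the step leaves $Y$,
\begin{align*}
Dist\pth{z[t],Y}\le Dist\pth{z[t-1],Y}-\lambda[t-1]\left|g[t-1]\right|+\lambda[t-1]L\pth{M[t-1]-m[t-1]}.
\end{align*}
I would then split as in Theorem~\ref{optimal alg1}. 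If infinitely many $x[t-1]$ are resilient, the resilient branch gives $Dist\pth{z[t+1],Y}\le\lambda[t]L\pth{1+M[t]-m[t]}\to 0$ (by $\lambda[t]\to 0$ and Lemma~\ref{consensus alg2}), while between consecutive such indices $Dist\pth{\cdot,Y}$ grows only by the perturbation, whose tail sums vanish by Corollary~\ref{cor1}; hence $\limsup_t Dist\pth{z[t],Y}=0$, i.e. $Dist\pth{z[t],Y}\to 0$. If only finitely many are resilient, then past the last one the recursion has the form $a_{t+1}\le a_t-b_t+c_t$ with $a_t=Dist\pth{z[t],Y}$, $b_t=\lambda[t]|g[t]|\cdot\1{\text{step leaves }Y}$, $c_t=\lambda[t]L\pth{M[t]-m[t]}$, and $\sum_t c_t<\infty$ by Lemma~\ref{finite ub crash}; Lemma~\ref{SB stochatic convergence} then gives $Dist\pth{z[t],Y}\to c\ge 0$ and $\sum_t b_t<\infty$. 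If the step lands in $Y$ infinitely often, then $Dist\pth{z[t_k+1],Y}\le\lambda[t_k]L(M[t_k]-m[t_k])\to 0$ forces $c=0$. Otherwise, suppose $c>0$: by Lemma~\ref{accu point} (and its proof), in each of its three sub-cases $z[t]$ converges to $\min Y-c$ or to $\max Y+c$, so for all large $t$ both $z[t]$ and $\bar x_j[t]$ (within $M[t]-m[t]\to 0$ of $z[t]$, Lemma~\ref{consensus alg2}) lie in a fixed neighbourhood of $\min Y-c$ or $\max Y+c$ on which every valid function has derivative of absolute value at least some $\delta>0$ (the $\rho^*,\tilde\rho$ computation in the proof of Theorem~\ref{optimal alg1}); since $\lambda[t]L\to 0$ the step indeed leaves $Y$ for all large $t$, so $b_t\ge\delta\lambda[t]$ and $\sum_t b_t\ge\delta\sum_{t\ge T}\lambda[t]=\infty$, a contradiction. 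Thus $c=0$ in every case, and Lemma~\ref{optimial z} transfers the conclusion to each non-faulty agent.

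The step I expect to be the main obstacle is the reduction in the second paragraph: unlike for Algorithm~1, the aggregated gradient here is a sum of gradients at distinct points rather than $\hat p_j'$ at a common point, so the whole argument hinges on the resulting $\lambda[t-1]L\pth{M[t-1]-m[t-1]}$ error being summable with vanishing tails --- exactly the content of the Algorithm~2 analogues Lemma~\ref{finite ub crash} and Corollary~\ref{cor1} --- so that the ``$\to-\infty$'' contradiction (equivalently, the summability hypothesis of Lemma~\ref{SB stochatic convergence}) still goes through. A secondary care point is keeping the surrogate $\bar x_j[t-1]$ inside $[m[t-1],M[t-1]]$, which is what makes it no farther from $Y$ than $z[t-1]$ and lets it inherit the trapping behaviour of $z[t]$ when $c>0$.
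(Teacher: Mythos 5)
Your proposal is correct and follows essentially the same route as the paper: reduce the one\-/round update to a gradient step of the valid function $\frac{1}{|\calR_j[t-1]|}\sum_{i\in\calR_j[t-1]}h_i$ perturbed by $\lambda[t-1]L\pth{M[t-1]-m[t-1]}$, then rerun the resilient\-/point dichotomy of Theorem~\ref{optimal alg1} using Lemma~\ref{finite ub crash}, Corollary~\ref{cor1} and Lemma~\ref{SB stochatic convergence} to absorb the perturbation. The only (immaterial) difference is that you evaluate the surrogate gradient at the averaged point $\bar x_j[t-1]$, whereas the paper uses convexity of $Dist\pth{\cdot,Y}$ to pass to the worst single agent's estimate $x_{j_{t+1}^{\prime}}[t]$ and evaluates the aggregated gradient there; both incur the same Lipschitz error and lead to the identical case analysis.
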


\begin{proof}

We first try to derive an iteration relation similar to that in (\ref{basic iteration alg1}).
%
%
\begin{align}
\label{alg2 iter1}
\nonumber
&Dist\pth{z[t+1], Y}=Dist\pth{x_{j_{t+1}}[t+1], Y}~~~\text{by (\ref{alg2 crash sequence z})}\\
\nonumber
&=Dist\pth{\frac{1}{\left | \calR_{j_{t+1}}[t]\right |}\sum_{i\in \calR_{j_{t+1}}[t]} \pth{x_i[t]-\lambda[t]h_i^{\prime}(x_i[t])},~ Y}~~~\text{by}~(\ref{update x crash 2})\\
\nonumber
&=Dist\pth{\frac{1}{\left | \calR_{j_{t+1}}[t]\right |}\sum_{i\in \calR_{j_{t+1}}[t]} \pth{x_i[t]-
\lambda[t]\frac{1}{\left | \calR_{j_{t+1}}[t]\right |}\sum_{k\in \calR_{j_{t+1}}[t]}h_k^{\prime}(x_k[t])},~ Y}\\
\nonumber
&\le  \frac{1}{\left | \calR_{j_{t+1}}[t]\right |}\sum_{i\in \calR_{j_{t+1}}[t]} \, Dist\pth{x_i[t]-
\lambda[t]\frac{1}{\left | \calR_{j_{t+1}}[t]\right |}\sum_{k\in \calR_{j_{t+1}}[t]}h_k^{\prime}(x_k[t]), ~Y}~~~\text{by convexity of}~Dist\pth{\cdot, Y}\\
&\le \max_{i\in \calR_{j_{t+1}}[t]}Dist\pth{x_i[t]-
\lambda[t]\frac{1}{\left | \calR_{j_{t+1}}[t]\right |}\sum_{k\in \calR_{j_{t+1}}[t]}h_k^{\prime}(x_k[t]), ~Y}
\end{align}

Let
\begin{align}
\label{aux 2}
j_{t+1}^{\prime}\in \argmax_{i\in \calR_{j_{t+1}}[t]}
Dist\pth{x_i[t]-
\lambda[t]\frac{1}{\left | \calR_{j_{t+1}}[t]\right |}\sum_{k\in \calR_{j_{t+1}}[t]}h_k^{\prime}(x_k[t]), ~Y}.
\end{align}

Note that $j_{t+1}^{\prime}\in \calR_{j_{t+1}}[t]\subseteq \calN[t]$, i.e.,  $j_{t+1}^{\prime}\in \calN[t]$.

We get
\begin{align}
\label{alg2 iter}
\nonumber
Dist\pth{z[t+1], Y}&\le \max_{i\in \calR_{j_{t+1}}[t]}Dist\pth{x_i[t]-
\lambda[t]\frac{1}{\left | \calR_{j_{t+1}}[t]\right |}\sum_{k\in \calR_{j_{t+1}}[t]}h_k^{\prime}(x_k[t]), ~Y}~~~\text{by}~(\ref{alg2 iter1})\\
\nonumber
&=Dist\pth{x_{j_{t+1}^{\prime}}[t]-\lambda[t]\frac{1}{\left | \calR_{j_{t+1}}[t]\right |}\sum_{k\in \calR_{j_{t+1}}[t]}h_k^{\prime}(x_k[t]), ~Y}~~~\text{by}~(\ref{aux 2})\\
&=\inf_{y\in Y} \left |    x_{j_{t+1}^{\prime}}[t]-\lambda[t]\frac{1}{\left | \calR_{j_{t+1}}[t]\right |}\sum_{k\in \calR_{j_{t+1}}[t]}h_k^{\prime}(x_k[t])-y      \right |.
\end{align}
For each $y\in Y$, we have
\begin{align}
\label{crash distance y2}
\nonumber
&\left |    x_{j_{t+1}^{\prime}}[t]-\lambda[t]\frac{1}{\left | \calR_{j_{t+1}}[t]\right |}\sum_{k\in \calR_{j_{t+1}}[t]}h_k^{\prime}(x_k[t])-y      \right |\\
\nonumber
&=\left |  x_{j_{t+1}^{\prime}}[t]-\lambda[t]\frac{1}{\left | \calR_{j_{t+1}}[t]\right |}\sum_{k\in \calR_{j_{t+1}}[t]}h_k^{\prime}(x_{j_{t+1}^{\prime}}[t])-y + \lambda[t]\frac{1}{\left | \calR_{j_{t+1}}[t]\right |}\sum_{k\in \calR_{j_{t+1}}[t]}\pth{h_k^{\prime}(x_{j_{t+1}^{\prime}}[t])-h_k^{\prime}(x_k[t])} \right | \\
\nonumber
&\le \left | x_{j_{t+1}^{\prime}}[t]-\lambda[t]\frac{1}{\left | \calR_{j_{t+1}}[t]\right |}\sum_{k\in \calR_{j_{t+1}}[t]}h_k^{\prime}(x_{j_{t+1}^{\prime}}[t])-y \right |+ \left |\lambda[t]\frac{1}{\left | \calR_{j_{t+1}}[t]\right |}\sum_{k\in \calR_{j_{t+1}}[t]}\pth{h_k^{\prime}(x_{j_{t+1}^{\prime}}[t])-h_k^{\prime}(x_k[t])} \right |\\
\nonumber
\nonumber
&\le \left | x_{j_{t+1}^{\prime}}[t]-\lambda[t]\frac{1}{\left | \calR_{j_{t+1}}[t]\right |}\sum_{k\in \calR_{j_{t+1}}[t]}h_k^{\prime}(x_{j_{t+1}^{\prime}}[t])-y \right |+ \lambda[t]\frac{1}{\left | \calR_{j_{t+1}}[t]\right |}\sum_{k\in \calR_{j_{t+1}}[t]}\left | h_k^{\prime}(x_{j_{t+1}^{\prime}}[t])-h_k^{\prime}(x_k[t]) \right |\\
\nonumber
&\overset{(a)}{\le} \left | x_{j_{t+1}^{\prime}}[t]-\lambda[t]\frac{1}{\left | \calR_{j_{t+1}}[t]\right |}\sum_{k\in \calR_{j_{t+1}}[t]}h_k^{\prime}(x_{j_{t+1}^{\prime}}[t])-y \right |+ \lambda[t]\frac{1}{\left | \calR_{j_{t+1}}[t]\right |}\sum_{k\in \calR_{j_{t+1}}[t]} L\left |x_{j_{t+1}^{\prime}}[t]-x_k[t] \right |\\
&\overset{(b)}{\le}\left | x_{j_{t+1}^{\prime}}[t]-\lambda[t]\frac{1}{\left | \calR_{j_{t+1}}[t]\right |}\sum_{k\in \calR_{j_{t+1}}[t]}h_k^{\prime}(x_{j_{t+1}^{\prime}}[t])-y \right |+\lambda[t] L (M[t]-m[t]).
\end{align}

Inequality $(a)$ holds because gradient $h_k^{\prime}(\cdot)$ is $L$--Lipschitz for each $k\in \calV$. Inequality $(b)$ holds from the fact that
$$\left |x_{j_{t+1}^{\prime}}[t]-x_k[t]\right |\le \max_{i,\, j\in \calN[t]} \pth{x_i[t]-x_j[t]}=\max_{i\in \calN[t]} x_i[t]- \min_{j\in \calN[t]}x_j[t]=M[t]-m[t],$$
and that
$$\frac{1}{\left | \calR_{j_{t+1}}[t]\right |}\sum_{k\in \calR_{j_{t+1}}[t]}1=1.$$ 
Using (\ref{crash distance y2}), the inequality (\ref{alg2 iter}) can be further bounded as
\begin{align}
\label{alg2 iter2}
\nonumber
Dist\pth{z[t+1], Y}&\le \inf_{y\in Y} \left |    x_{j_{t+1}^{\prime}}[t]-\lambda[t]\frac{1}{\left | \calR_{j_{t+1}}[t]\right |}\sum_{k\in \calR_{j_{t+1}}[t]}h_k^{\prime}(x_k[t])-y      \right |\\
&\le \inf_{y\in Y}\left | x_{j_{t+1}^{\prime}}[t]-\lambda[t]\frac{1}{\left | \calR_{j_{t+1}}[t]\right |}\sum_{k\in \calR_{j_{t+1}}[t]}h_k^{\prime}(x_{j_{t+1}^{\prime}}[t])-y \right |+\lambda[t] L (M[t]-m[t])~~~\text{by}~(\ref{crash distance y2}).
\end{align}

\vskip 2\baselineskip
Note that for each $t\ge 0$, there exists a non-faulty agent $j_{t+1}^{\prime}$ such that (\ref{crash distance y2}) holds, and there exists a sequence of agents $\{j_{t}^{\prime}\}_{t=1}^{\infty}$.
Let $\{x[t]\}_{t=0}^{\infty}$ be a sequence of estimates such that
\begin{align}
\label{alg2 a1}
x[t]=x_{j_{t+1}^{\prime}}[t].
\end{align}
Let $\{g[t]\}_{t=0}^{\infty}$ be a sequence of gradients such that
\begin{align}
\label{alg2 a2}
g[t]=\frac{1}{\left | \calR_{j_{t+1}}[t]\right |}\sum_{k\in \calR_{j_{t+1}}[t]}h_k^{\prime}(x_{j_{t+1}^{\prime}}[t]).
\end{align}
\paragraph{\bf If $x[t]=x_{j_{t+1}^{\prime}}[t]$ is a resilient point} with respect to the gradient $g[t]=\frac{1}{\left | \calR_{j_{t+1}}[t]\right |}\sum_{k\in \calR_{j_{t+1}}[t]}h_k^{\prime}(x_{j_{t+1}^{\prime}}[t]))$, by Definition \ref{BS def resilient}, we bound (\ref{alg2 iter2}) further as
\begin{align}
\label{alg2 distance Y1}
\nonumber
&Dist\pth{z[t+1], Y}\le \inf_{y\in Y}\left |x_{j_{t+1}^{\prime}}[t]-\lambda[t]\frac{1}{\left | \calR_{j_{t+1}}[t]\right |}\sum_{k\in \calR_{j_{t+1}}[t]}h_k^{\prime}(x_{j_{t+1}^{\prime}}[t]))-y\right |+L \lambda[t] \pth{M[t]-m[t]} \\
&\le L \lambda[t]+L \lambda[t] \pth{M[t]-m[t]}.
\end{align}

\paragraph{\bf If $x[t]=x_{j_{t+1}^{\prime}}[t]$ is not a resilient point}
 with respect to the gradient $g[t]=\frac{1}{\left | \calR_{j_{t+1}}[t]\right |}\sum_{k\in \calR_{j_{t+1}}[t]}h_k^{\prime}(x_{j_{t+1}^{\prime}}[t])$, then from Definition \ref{BS def resilient}, we know that
 \begin{itemize}
\item[$C1$:] if $x_{j_{t+1}^{\prime}}[t]\in Y$, then $x_{j_{t+1}^{\prime}}[t]-\frac{\lambda[t]}{\left | \calR_{j_{t+1}}[t]\right |}\sum_{k\in \calR_{j_{t+1}}[t]}h_k^{\prime}(x_{j_{t+1}^{\prime}}[t])\in Y,$
\item[$C2$:] if $x_{j_{t+1}^{\prime}}[t]< \min Y$, then $x_{j_{t+1}^{\prime}}[t]-\frac{\lambda[t]}{\left | \calR_{j_{t+1}}[t]\right |}\sum_{k\in \calR_{j_{t+1}}[t]}h_k^{\prime}(x_{j_{t+1}^{\prime}}[t])\le \max Y,$
\item[$C3$:] if $x_{j_{t+1}^{\prime}}[t]> \max Y$, then $x_{j_{t+1}^{\prime}}[t]-\frac{\lambda[t]}{\left | \calR_{j_{t+1}}[t]\right |}\sum_{k\in \calR_{j_{t+1}}[t]}h_k^{\prime}(x_{j_{t+1}^{\prime}}[t])\ge \min Y$.
\end{itemize}

\vskip \baselineskip

We consider two scenarios: scenario 1
$$x_{j_{t+1}^{\prime}}[t]-\frac{\lambda[t]}{\left | \calR_{j_{t+1}}[t]\right |}\sum_{k\in \calR_{j_{t+1}}[t]}h_k^{\prime}(x_{j_{t+1}^{\prime}}[t])\in Y,$$
and scenario 2
$$x_{j_{t+1}^{\prime}}[t]-\frac{\lambda[t]}{\left | \calR_{j_{t+1}}[t]\right |}\sum_{k\in \calR_{j_{t+1}}[t]}h_k^{\prime}(x_{j_{t+1}^{\prime}}[t])\notin Y.$$
The first scenario can possibly appear in each of $C1, C2,$ and $C3$. In contrast, the second scenario can only appear in $C2$ and $C3$.

\paragraph{Scenario 1:} Assume that $$x_{j_{t+1}^{\prime}}[t]-\frac{\lambda[t]}{\left | \calR_{j_{t+1}}[t]\right |}\sum_{k\in \calR_{j_{t+1}}[t]}h_k^{\prime}(x_{j_{t+1}^{\prime}}[t])\in Y, $$
it holds that
$$\inf_{y\in Y} \left | x_{j_{t+1}^{\prime}}[t]-\frac{\lambda[t]}{\left | \calR_{j_{t+1}}[t]\right |}\sum_{k\in \calR_{j_{t+1}}[t]}h_k^{\prime}(x_{j_{t+1}^{\prime}}[t])-y\right |=0\le Dist\pth{z[t], Y}. $$
Thus, (\ref{alg2 iter2}) can be bounded as
\begin{align}
\nonumber
&Dist\pth{z[t+1], Y}\le \inf_{y\in Y}\left |x_{j_{t+1}^{\prime}}[t]-\lambda[t]\frac{1}{\left | \calR_{j_{t+1}}[t]\right |}\sum_{k\in \calR_{j_{t+1}}[t]}h_k^{\prime}(x_{j_{t+1}^{\prime}}[t]))-y\right |+L \lambda[t] \pth{M[t]-m[t]} \\
&\le 0+L \lambda[t] \pth{M[t]-m[t]}\label{alg2 distance Y2'}\\
&\le Dist\pth{z[t], Y}+L \lambda[t] \pth{M[t]-m[t]}\label{alg2 distance Y2}.
\end{align}

\paragraph{Scenario 2:}
Assume that
$$x_{j_{t+1}^{\prime}}[t]-\frac{\lambda[t]}{\left | \calR_{j_{t+1}}[t]\right |}\sum_{k\in \calR_{j_{t+1}}[t]}h_k^{\prime}(x_{j_{t+1}^{\prime}}[t])~\notin~ Y=[\min Y, \max Y].$$
As commented earlier, either $C 2$ holds or $C 3$ holds. In addition, from the assumption of scenario 2, $C 2$ and $C 3$ can be further refined as follows.

\begin{itemize}
\item[$C 2^{\prime}$:]
$x_{j_{t+1}^{\prime}}[t]<\min Y~~~\text{and}~~~ x_{j_{t+1}^{\prime}}[t]-\frac{\lambda[t]}{\left | \calR_{j_{t+1}}[t]\right |}\sum_{k\in \calR_{j_{t+1}}[t]}h_k^{\prime}(x_{j_{t+1}^{\prime}}[t])~<\min Y$
\item[$C 3^{\prime}$:]
$x_{j_{t+1}^{\prime}}[t]>\max Y~~~\text{and}~~~ x_{j_{t+1}^{\prime}}[t]-\frac{\lambda[t]}{\left | \calR_{j_{t+1}}[t]\right |}\sum_{k\in \calR_{j_{t+1}}[t]}h_k^{\prime}(x_{j_{t+1}^{\prime}}[t])~>\max Y$
\end{itemize}

Similar to (\ref{negative gradient}), it can be shown that for both $C 2^{\prime}$ and $C 3^{\prime}$, the following holds.
\begin{align}
\label{alg2 negative gradient}
\left | x_{j_{t+1}^{\prime}}[t]-\frac{\lambda[t]}{\left | \calR_{j_{t+1}}[t]\right |}\sum_{k\in \calR_{j_{t+1}}[t]}h_k^{\prime}(x_{j_{t+1}^{\prime}}[t])-y\right |=\left | x_{j_{t+1}^{\prime}}[t]-y\right |-\lambda[t]\left |\frac{1}{\left | \calR_{j_{t+1}}[t]\right |}\sum_{k\in \calR_{j_{t+1}}[t]}h_k^{\prime}(x_{j_{t+1}^{\prime}}[t])\right |.
\end{align}

Thus, under scenario 2, we can bound (\ref{alg2 iter2}) as
\begin{align}
\nonumber
&Dist\pth{z[t+1], Y}\le \inf_{y\in Y}\left |x_{j_{t+1}^{\prime}}[t]-\lambda[t]\frac{1}{\left | \calR_{j_{t+1}}[t]\right |}\sum_{k\in \calR_{j_{t+1}}[t]}h_k^{\prime}(x_{j_{t+1}^{\prime}}[t]))-y\right |+L \lambda[t] \pth{M[t]-m[t]} \\
\nonumber
&= \inf_{y\in Y}\left | x_{j_{t+1}^{\prime}}[t]-y\right |-\lambda[t]\left |\frac{1}{\left | \calR_{j_{t+1}}[t]\right |}\sum_{k\in \calR_{j_{t+1}}[t]}h_k^{\prime}(x_{j_{t+1}^{\prime}}[t])\right |+L \lambda[t] \pth{M[t]-m[t]}~~~\text{by}~(\ref{alg2 negative gradient})\\
\nonumber
&=Dist\pth{x_{j_{t+1}^{\prime}}[t], Y}-\lambda[t]\left |\frac{1}{\left | \calR_{j_{t+1}}[t]\right |}\sum_{k\in \calR_{j_{t+1}}[t]}h_k^{\prime}(x_{j_{t+1}^{\prime}}[t])\right |+L \lambda[t] \pth{M[t]-m[t]}\\
&\le Dist\pth{z[t], Y}-\lambda[t]\left |\frac{1}{\left | \calR_{j_{t+1}}[t]\right |}\sum_{k\in \calR_{j_{t+1}}[t]}h_k^{\prime}(x_{j_{t+1}^{\prime}}[t])\right |+L \lambda[t] \pth{M[t]-m[t]}\label{alg2 distance Y3}\\
&\le Dist\pth{z[t], Y}+L \lambda[t] \pth{M[t]-m[t]}\label{alg2 distance Y4}.
\end{align}

By (\ref{alg2 distance Y1}), (\ref{alg2 distance Y2}) and (\ref{alg2 distance Y4}), for each $t\ge 0$, we obtain the following iteration relation
\begin{align}
\label{basic iteration alg2 crash}
Dist\pth{z[t+1], Y}\le \max \left \{\lambda[t] L, ~Dist\pth{z[t],Y}\right\}+\lambda[t]L\pth{M[t]-m[t]}.
\end{align}

\hrule

\vskip 2\baselineskip
Recall (\ref{alg2 a1}), (\ref{alg2 a2}) that $x[t]=x_{j_{t+1}^{\prime}}[t]$ and $g[t]=\frac{1}{\left | \calR_{j_{t+1}}[t]\right |}\sum_{k\in \calR_{j_{t+1}}[t]}h_k^{\prime}(x_{j_{t+1}^{\prime}}[t])$. Similar to the proof of Theorem \ref{optimal alg1}, we consider two cases : case (i) there are infinitely many points in  $\{x[t]\}_{t=0}^{\infty}$ that are resilient with respect to $\{g[t]\}_{t=0}^{\infty}$, and case (ii) there are finitely many points in  $\{x[t]\}_{t=0}^{\infty}$ that are resilient with respect to $\{g[t]\}_{t=0}^{\infty}$, respectively.

\paragraph{{\bf Case (i):}} There are infinitely many points in  $\{x[t]\}_{t=0}^{\infty}$ that are resilient with respect to $\{g[t]\}_{t=0}^{\infty}$.

Let $\{t_i\}_{i=0}^{\infty}$ be the maximal sequence of such indices. Since $x[t_i]$ is a resilient point with respect to $g[t]$ for each $i$, then for each $t_i$, by (\ref{alg2 distance Y1}), we have
\begin{align}
\label{crash case 1 s1}
Dist\pth{z[t_i+1],Y}\le \lambda[t_i] L+\lambda[t_i]L\pth{M[t_i]-m[t_i]},
\end{align}
and for each $t\not= t_i\, \forall i$, by (\ref{alg2 distance Y2}) and (\ref{alg2 distance Y4}), we get
\begin{align}
\label{crash case 1 s2}
Dist\pth{z[t+1],Y}\le Dist\pth{z[t], Y}+\lambda[t]L\pth{M[t]-m[t]},
\end{align}

Taking limit sup on both sides of (\ref{crash case 1 s1}), we get
\begin{align}
\label{alg2 casei}
\nonumber
\limsup_{i\diverge} ~Dist\pth{z[t_i+1],Y}&\le \limsup_{i\diverge}\lambda[t_i] L+\limsup_{i\diverge}\lambda[t_i]L\pth{M[t_i]-m[t_i]}\\
&=~0+0=0~~~\text{by Corollary \ref{cor1}}.
\end{align}
In addition, $\liminf_{i\diverge} ~Dist\pth{z[t_i+1],Y}\ge 0$. Thus, the limit of $Dist\pth{z[t_i+1],Y}$ exists, and
$$\lim_{i\diverge} ~Dist\pth{z[t_i+1],Y}=0.$$
For each $\tau>t_0$ and $\tau \notin \{t_i\}_{i=0}^{\infty}$, there exists $t_{i(\tau)}$ such that $t_{i(\tau)}< \tau< t_{i(\tau)+1}$. Repeatedly applying (\ref{crash case 1 s2}), we get
\begin{align}
\label{alg2 a3}
\nonumber
Dist\pth{z[\tau+1], Y}
&\le ~Dist\pth{z[t_{i(\tau)}+1],Y}+\sum_{r=t_{i(\tau)}+1}^{\tau}\lambda[r]L\pth{M[r]-m[r]}\\
\nonumber
&\le \lambda[t_{i(\tau)}] L+\lambda[t_{i(\tau)}]\pth{M[t_{i(\tau)}]-m[t_{i(\tau)}]}L +\sum_{r=t_{i(\tau)}+1}^{\tau} \lambda[r]\pth{M[r]-m[r]}L~~~\text{by}~(\ref{crash case 1 s1})\\
\nonumber
&=\lambda[t_{i(\tau)}] L +\sum_{r=t_{i(\tau)}}^{\tau} \lambda[r]\pth{M[r]-m[r]}L\\
&\le \lambda[t_{i(\tau)}] L +\sum_{r=t_{i(\tau)}}^{\infty} \lambda[r]\pth{M[r]-m[r]}L ~~~\text{since}~\lambda[r]\pth{M[r]-m[r]}L\ge 0, \forall\, r
\end{align}
Taking limit sup on both sides of (\ref{alg2 a3}), we get
\begin{align*}
\limsup_{\tau\diverge} Dist\pth{z[\tau+1], Y}&\le =\lim_{\tau\diverge}\lambda[t_{i(\tau)}] L+\lim_{\tau\diverge}\sum_{r=t_{i(\tau)}}^{\infty} \lambda[r]\pth{M[r]-m[r]}L\\
& =0+0=0~~~~\text{by Corollary~\ref{cor1}}
\end{align*}
To apply Corollary \ref{cor1} here we have to have $t_{i(\tau)} \diverge$  when $\tau \diverge$. This is true since there are infinite resilient points.

Using a similar argument used earlier in the proof of Theorem \ref{optimal alg1}, we conclude that $\lim_{t\diverge} Dist\pth{z[t], Y}$ exists and  $\lim_{t\diverge} Dist\pth{z[t], Y}=0$.

\paragraph{{\bf Case (ii):}} There are finitely many points in  $\{x[t]\}_{t=0}^{\infty}$ that are resilient with respect to $\{g[t]\}_{t=0}^{\infty}$.

By the assumption in case (ii) we know that there exists a time index $m_0$ such that for all $t\ge m_0$, each $x[t]$ is not a resilient point with respect to $g[t]$. Thus, for $t\ge m_0$, either (\ref{alg2 distance Y2}) or (\ref{alg2 distance Y4}) holds. Thus, for $t\ge m_0$, we have
\begin{align}
\label{CS t case ii 1}
Dist\pth{z[t+1], Y}\le ~Dist\pth{z[t],Y}+\lambda[t]L\pth{M[t]-m[t]}.
\end{align}
 Define $\{a_r\}_{r=0}^{\infty}$, $\{b_r\}_{r=0}^{\infty}$, and $\{c_r\}_{r=0}^{\infty}$ as follows.
\begin{align*}
&a_r=Dist\pth{z[m_0+r],Y},\\
&b_r=0,\\
&c_r=\lambda[m_0+r]L\pth{M[m_0+r]-m[m_0+r]}.
\end{align*}
By Lemma \ref{finite ub crash} and Lemma \ref{SB stochatic convergence}, we know the limit of $Dist\pth{z[t], Y}$ exists. Let $c\ge 0$ be a nonnegative constant such that
\begin{align}
\label{crash case 2 limit}
\lim_{t\diverge} Dist\pth{z[t], Y}=c.
\end{align}

Repeatedly applying (\ref{CS t case ii 1}), we get
\begin{align}
\label{crash case 2 bounded}
\nonumber
Dist\pth{z[t+1], Y}&\le ~Dist\pth{z[t],Y}+\lambda[t]L\pth{M[t]-m[t]}\\
\nonumber
&\le ~Dist\pth{z[m_0],Y}+\sum_{r=m_0}^{t}\lambda[r]L\pth{M[r]-m[r]}\\
\nonumber
&\le ~Dist\pth{z[m_0],Y}+\sum_{r=m_0}^{\infty}\lambda[r]L\pth{M[r]-m[r]}\\
\nonumber
&\le ~Dist\pth{z[m_0],Y}+\sum_{r=0}^{\infty}\lambda[r]L\pth{M[r]-m[r]}\\
&\le  ~Dist\pth{z[m_0],Y}+C_2~~~\text{by (\ref{alg22})}.
\end{align}
Thus, by (\ref{crash case 2 bounded}), we know that for each $t\ge m_0$
$$Dist\pth{z[t+1], Y}\le Dist\pth{z[m_0], Y}+C_2.$$
Thus, $$\lim_{t\diverge} Dist\pth{z[t], Y}=c<\infty.$$

\paragraph{Case (ii.a):}
Assume that there are infinitely many time indices $t\ge m_0$ such that
$$x[t]-\lambda[t] g[t]~=~x_{j_{t+1}^{\prime}}[t]-\lambda[t]\frac{1}{\left | \calR_{j_{t+1}}[t]\right |}\sum_{k\in \calR_{j_{t+1}}[t]}h_k^{\prime}(x_{j_{t+1}^{\prime}}[t])\in ~Y.$$
Let $\{t_k\}_{k=0}^{\infty}$ be the maximal sequence of such indices. By (\ref{alg2 distance Y2'}), we have
\begin{align}
\label{alg2 a4}
Dist\pth{z[t_k+1], Y}&\le 0+L \lambda[t_k] \pth{M[t_k]-m[t_k]}.
\end{align}
Taking limit on both sides of (\ref{alg2 a4}), we get
\begin{align*}
\lim_{k\diverge}Dist\pth{z[t_k+1], Y}&\le 0+L \lim_{k\diverge}\lambda[t_k] \pth{M[t_k]-m[t_k]}\\
&=0+0=0 ~~~\text{by Corollary \ref{cor1}}
\end{align*}
On the other hand, $\lim_{k\diverge}Dist\pth{z[t_k+1], Y}=c\ge 0$. Thus,
$$c=\lim_{t\diverge} Dist\pth{z[t], Y}=\lim_{k\diverge} Dist\pth{z[t_k+1], Y}=0,$$
proving the theorem.



\paragraph{Case (ii.b):}
Assume that there are only finitely many time indices $t\ge m_0$ such that
$$x[t]-\lambda[t] g[t]~=~x_{j_{t+1}^{\prime}}[t]-\lambda[t]\frac{1}{\left | \calR_{j_{t+1}}[t]\right |}\sum_{k\in \calR_{j_{t+1}}[t]}h_k^{\prime}(x_{j_{t+1}^{\prime}}[t])\in ~Y.$$
Then, there exists $m^{\prime}\ge m_0$ such that for each $t\ge m^{\prime}\ge m_0$, $x[t]$ is not a resilient point with respect to $g[t]$, and
$$x[t]-\lambda[t] g[t]~=~x_{j_{t+1}^{\prime}}[t]-\lambda[t]\frac{1}{\left | \calR_{j_{t+1}}[t]\right |}\sum_{k\in \calR_{j_{t+1}}[t]}h_k^{\prime}(x_{j_{t+1}^{\prime}}[t])\notin ~Y.$$
Thus, for each $t\ge m^{\prime}\ge m_0$, (\ref{alg2 distance Y3}) holds, i.e.,
\begin{align*}
Dist\pth{z[t+1], Y}\le Dist\pth{z[t], Y}-\lambda[t]\left |\frac{1}{\left | \calR_{j_{t+1}}[t]\right |}\sum_{k\in \calR_{j_{t+1}}[t]}h_k^{\prime}(x_{j_{t+1}^{\prime}}[t])\right |+L \lambda[t] \pth{M[t]-m[t]}.
\end{align*}

Recall that $0\le c< \infty$ is a nonnegative constant such that
$\lim_{t\diverge}~Dist\pth{z[t],Y}=c$.
{\bf Next we show that $c=0$.} We prove this by contradiction. Suppose $c>0$.
By Lemma \ref{accu point}, we know that either (A.1) is true or (A.2) is true.\\
(A.1) There exists a subsequence $\{z[t_k]\}_{k=0}^{\infty}$ such that $z[t_k]<\min Y$ for all $k\ge 0$.\\
(A.2) There exists a subsequence $\{z[t_k^{\prime}]\}_{k=0}^{\infty}$ such that $z[t_k^{\prime}]>\max Y$ for all $k\ge 0$.\\
In addition, at least one of $(\min Y-c)$ or $(\max Y+c)$ is an accumulation point of $\{z[t]\}_{t=0}^{\infty}$, and no other accumulation points exist.

Let $a=\min Y$, $b=\max Y$ and $\epsilon=\frac{c}{2}$. 
%
%
It can be seen from the proof of Lemma \ref{accu point} that there exists $m$ such that $z[t]\notin Y$ for each $t\ge m$.
We consider three scenarios: (A.1) is true but (A.2) is not true, (A.2) is true but (A.1) is not true, both (A.1) and (A.2) are true.
\paragraph{{\bf When (A.1) holds but (A.2) does not hold:}} That is, there exists a subsequence $\{z[t_k]\}_{k=0}^{\infty}$ such that $z[t_k]<\min Y$ for all $k\ge 0$; and there does not exist a subsequence $\{z[t_k^{\prime}]\}_{k=0}^{\infty}$ such that $z[t_k^{\prime}]>\max Y$ for all $k\ge 0$. Then there exists $m_1\ge m$ such that $z[t]<\min Y$ for each $t\ge m_1\ge m$. From the proof of Lemma \ref{accu point}, we know
\begin{align*}
\lim_{t\diverge} z[t]~=~\min Y -c~=~a-c.
\end{align*}
Since (\ref{A1-A2}) holds, there exists $m_1^*\ge m_1\ge m$ such that for all $t\ge m_1^*\ge m_1\ge m$, the following holds.
\begin{align}
|z[t]-\pth{a-c}|\le \epsilon=\frac{c}{2}~~~\iff~~~a-\frac{3c}{2}\le z[t]\le a-\frac{c}{2}.
\end{align}
Since $c>0$, we have $a-\frac{c}{2}<a$. Then, for each $p(\cdot)\in \calC$, $p^{\prime}(a-\frac{c}{2})<0$. Then,
$$\rho^*\triangleq \sup_{p(\cdot)\in \calC} p^{\prime}(a-\frac{c}{2})\le 0.$$
Let $K=\sum_{j\in \calF} {\bf 1}\{h_j^{\prime}(a-\frac{c}{2})\ge 0\}$.
Define $q(x)$ as follows,
$$q(x)=\frac{1}{|\calN|+K}\pth{\sum_{j\in \calN} h_j(x)+\sum_{j\in \calF} h_j(x){\bf 1}\{h_j^{\prime}(a-\frac{c}{2})\ge 0\}}.$$
It can be easily seen that $q(\cdot)\in \calC$ is a valid function and $$\rho^*=\sup_{p(\cdot)\in \calC} p^{\prime}(a-\frac{c}{2})=q^{\prime}(a-\frac{c}{2})<0.$$

Note that when $t\ge m_1^*\ge m_1\ge m$, (\ref{alg2 distance Y3}) may not hold, since it is possible that $z[t]-\lambda[t]g[t]\in Y$. Let $\tilde{t}_1 =\max\{m_1^*, m^{\prime}\}$. For each $t\ge \tilde{t}_1 =\max\{m_1^*, m^{\prime}\}$, (\ref{alg2 distance Y3}), (\ref{A1-A2}) and (\ref{limit trapped A1-A2}) hold. We have
\begin{align}
\label{alg2 case 2 a1}
\nonumber
Dist\pth{z[t+1], Y}&\le Dist\pth{z[t], Y}-\lambda[t]\left|\frac{1}{\left |  \calR_{j_{t+1}}[t] \right |}\pth{\sum_{i\in \calR_{j_{t+1}}[t] } h_i^{\prime}(x_{j_{t+1}^{\prime}}[t])}\right|+\lambda[t]L\pth{M[t]-m[t]}~~~\text{by}~(\ref{alg2 distance Y3})\\
\nonumber
&\le Dist\pth{z[t], Y}-\lambda[t]\left|\frac{1}{\left |  \calR_{j_{t+1}}[t] \right |}\pth{\sum_{i\in \calR_{j_{t+1}}[t] } h_i^{\prime}(z[t])}\right|+2\lambda[t]L\pth{M[t]-m[t]}\\
&\le Dist\pth{z[t], Y}-\lambda[t]|\rho^*|+2\lambda[t]L\pth{M[t]-m[t]}.
\end{align}
Repeatedly applying (\ref{alg2 case 2 a1}) for $t\ge \tilde{t}_1 =\max\{m_1^*, m^{\prime}\}$, we get

\begin{align}
\label{alg2 case 2 a11}
Dist\pth{z[t+1], Y}\le Dist\pth{z[\tilde{t}_1], ~Y}-\pth{\sum_{r=\tilde{t}_1}^{t}\lambda[r]}|\rho^*|+2\sum_{r=\tilde{t}_1}^{t}\lambda[r]L\pth{M[r]-m[r]}.
\end{align}
%
%
Taking limit on both sides of (\ref{alg2 case 2 a11}), we obtain
\begin{align}
\label{alg2 f1}
\nonumber
\lim_{t\diverge}Dist\pth{z[t+1], Y}&\le Dist\pth{z[\tilde{t}_1], ~Y}-\pth{\sum_{r=\tilde{t}_1}^{\infty}\lambda[r]}|\rho^*|+2\sum_{r=\tilde{t}_1}^{\infty}\lambda[r]L\pth{M[r]-m[r]}\\
\nonumber
&\le Dist\pth{z[\tilde{t}_1], ~Y}-\pth{\sum_{r=\tilde{t}_1}^{\infty}\lambda[r]}|\rho^*|+2C_2~~~\text{by (\ref{alg22})}\\
\nonumber
&= Dist\pth{z[\tilde{t}_1], ~Y}-\infty+2C_2\\
&=-\infty.
\end{align}
%
On the other hand, we know $ \lim_{t\diverge}Dist\pth{z[t], Y}= c>0$. This is a contradiction.
Thus,
$$ \lim_{t\diverge} Dist\pth{z[t], Y}=c=0.$$

\vskip 2\baselineskip

Similarly, we can show the case when (A.2) holds but (A.1) does not hold, and the case when both (A.1) and (A.2) hold. \\

The proof of the theorem is complete.

\eproof
\end{proof}

\section{Synchronous Byzantine Iterative Algorithm}
\label{sec: algorithm byzantine}
In this section, we present an iterative algorithm, in which each non-faulty agent sends only one message per iteration, and keeps minimal memory across iterations. We assume each local cost function $h_j(\cdot)$ has $L$--Lipschitz continuous derivative. 

\paragraph{}
\vspace*{8pt}\hrule
~

{\bf Algorithm 3} for agent $j$ for iteration $t\ge 1$:
~
\vspace*{4pt}\hrule

\begin{list}{}{}
\item[{\bf Step 1:}]
Compute $h_j^{\prime}\pth{x_j[t-1]}$ -- the gradient of the local cost function $h_j(\cdot)$ at point $x_j[t-1]$, and send the estimate and gradient pair $(x_j[t-1], h_j^{\prime}\pth{x_j[t-1]})$ to all the agents (including itself). \\

~
\item[{\bf Step 2:}]
Let $\calR_j[t-1]$ denote the set of tuples of the form $\pth{x_i[t-1], \, h_i^{\prime}(x_i[t-1])}$ received as a result of step 1.\\

In step 2, agent $j$ should be able to receive a tuple $(w_i[t-1], g_i[t-1])$ from each agent $i\in \calV$. For non-faulty agent $i\in \calN$, $w_i[t-1]=x_i[t-1]$ and $g_i[t-1]=h_i^{\prime}\pth{x_i[t-1]}$. If a faulty agent $k\in \calF$ does not send a tuple to agent $j$, then agent $j$ assumes $(w_k[t-1], g_k[t-1])$ to be some default tuple. \footnote{In contrast to Algorithms 1, 2 and 3 in \cite{su2015byzantine}, the adopted default tuple in Algorithm 3 here is not necessarily known to all agents. In addition, the default tuple may vary across iterations. }\\

~

\item[{\bf Step 3:}] Sort the first entries of the received tuples in $\calR_j[t-1]$ in a non-increasing order (breaking ties arbitrarily), and erase the smallest $f$ values and the largest $f$ values. Let $\calR_j^{1}[t-1]$ be the identifiers of the $n-2f$ agents from whom the remaining first entries were received. Similarly, sort the second entries of the received tuples in $\calR_j[t-1]$ in a non-increasing order (breaking ties arbitrarily), and erase the smallest $f$ values and the largest $f$ values. Let $\calR_j^{2}[t-1]$ be the identifiers of the $n-2f$ agents from whom the remaining second entries were received.
Denote the largest and smallest gradients among the remaining values by $\hat{g}_j[t-1]$ and $\check{g}_j[t-1]$, respectively. Set $\widetilde{g}_j[t-1]=\frac{1}{2}\pth{\hat{g}_j[t-1]+\check{g}_j[t-1]}$.

Update its state as follows.
\begin{align}
\label{Byzantine Iterative}
x_j[t]=\frac{1}{n-2f} \pth{\sum_{i\in \calR_j^1[t-1]}w_i[t-1]}-\lambda[t-1]\widetilde{g}_j[t-1].
\end{align}

\end{list}

\hrule

~

Let $\widetilde{\calC}$ be the collection of functions defined as follows:
\begin{align}
\nonumber
\widetilde{\calC}\triangleq \{~~~p(x): p(x)&=\sum_{i\in \calN} \alpha_i h_i(x), ~~\forall i\in\calN, ~ \alpha_i\geq 0,\\
\nonumber
&\sum_{i\in \calN}\alpha_i=1,\text{~~and~~}\\
&\sum_{i\in\calN} {\bf 1}\left(\alpha_i\ge \frac{1}{2(|\calN|-f)}\right) ~ \geq ~|\calN|-f ~~~ \}
\label{BS valid collection}
\end{align}
Each $p(x)\in \widetilde{\calC}$ is called a valid function. Note that the function $\frac{1}{|\calN|}\sum_{i\in \calN}h_i(x)\in \widetilde{\calC}$ since $n\ge 3f+1$ and $|\calN|\ge 2f+1$. For ease of future reference, we let $\widetilde{p}(x)=\frac{1}{|\calN|}\sum_{i\in \calN}h_i(x)$. Define $\widetilde{Y}\triangleq \cup_{p(x)\in \widetilde{\calC}} \argmin~ p(x)$.

\begin{lemma}\cite{su2015byzantine}
\label{BS valid convex}
$\widetilde{Y}$ is a convex set.
\end{lemma}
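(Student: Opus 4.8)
The plan is to prove directly that $\widetilde Y$ is an interval of $\mathbb R$, which is the same as convexity. First I would rewrite membership in $\widetilde Y$ in terms of derivatives. Every valid function $p(x)=\sum_{i\in\calN}\alpha_i h_i(x)$ is convex and continuously differentiable with $p'=\sum_{i\in\calN}\alpha_i h_i'$; since each $h_i$ has compact $\argmin h_i$ and the constraints in \eqref{BS valid collection} force at least $|\calN|-f\ge 1$ of the weights to be positive, $p'$ is continuous, strictly positive for all large $x$, and strictly negative for all small $x$. Hence $p$ attains its minimum and $\argmin p=\sth{x\in\mathbb R:\ p'(x)=0}$, so
$$\widetilde Y=\sth{y\in\mathbb R:\ 0\in A(y)},\qquad A(y):=\sth{\textstyle\sum_{i\in\calN}\alpha_i h_i'(y):\ \alpha\ \text{a valid weight vector}},$$
where ``valid'' refers to the three constraints listed in \eqref{BS valid collection}.

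The heart of the matter is the geometry of the set $V$ of valid weight vectors. It is \emph{not} convex, because the cardinality constraint $\sum_{i\in\calN}\mathbf 1\pth{\alpha_i\ge\frac{1}{2(|\calN|-f)}}\ge|\calN|-f$ is not closed under averaging; instead, $V$ is the finite union of the polytopes $P_S=\sth{\alpha\ge 0:\ \sum_{i\in\calN}\alpha_i=1,\ \alpha_i\ge\frac{1}{2(|\calN|-f)}\ \forall i\in S}$ over all subsets $S\subseteq\calN$ with $|S|=|\calN|-f$, so $V$ is compact. The key observation I would isolate is that the uniform vector $\frac{1}{|\calN|}\ones$ lies in \emph{every} $P_S$: since $n>3f$ forces $|\calN|\ge 2f+1$, we have $\frac{1}{|\calN|}\ge\frac{1}{2(|\calN|-f)}$, so the uniform vector meets the heaviness requirement on all coordinates. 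Consequently $V$ is path-connected --- each of its points is joined to $\frac{1}{|\calN|}\ones$ by a line segment contained in some $P_S$. Therefore, for each fixed $y$, the set $A(y)$ is the image of the compact connected set $V$ under the continuous linear map $\alpha\mapsto\sum_{i\in\calN}\alpha_i h_i'(y)$, hence a compact interval; write $A(y)=[\underline G(y),\overline G(y)]$ with $\underline G(y)=\min_{\alpha\in V}\sum_{i\in\calN}\alpha_i h_i'(y)$ and $\overline G(y)=\max_{\alpha\in V}\sum_{i\in\calN}\alpha_i h_i'(y)$.

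Finally I would record two routine facts about $\underline G$ and $\overline G$: each is continuous in $y$ (a minimum, resp.\ maximum, over the compact set $V$ of the jointly continuous function $(y,\alpha)\mapsto\sum_{i\in\calN}\alpha_i h_i'(y)$), and each is non-decreasing in $y$ (for fixed $\alpha\ge 0$ the map $y\mapsto\sum_{i\in\calN}\alpha_i h_i'(y)$ is non-decreasing because every $h_i'$ is, being the derivative of a convex function, and a pointwise minimum/maximum of non-decreasing functions is non-decreasing). Then $0\in A(y)$ iff $\underline G(y)\le 0\le\overline G(y)$, so
$$\widetilde Y=\sth{y:\ \underline G(y)\le 0}\cap\sth{y:\ \overline G(y)\ge 0},$$
and each set on the right is an interval: the sublevel set $\sth{\underline G\le 0}$ of a continuous non-decreasing function is $(-\infty,d]$, all of $\mathbb R$, or empty, and likewise the superlevel set $\sth{\overline G\ge 0}$ is $[c,\infty)$, all of $\mathbb R$, or empty. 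An intersection of two intervals in $\mathbb R$ is an interval, so $\widetilde Y$ is convex. The step I expect to be the main obstacle is the middle paragraph: the natural attempt --- take a convex combination of two valid functions and apply the intermediate value theorem to its derivative --- breaks because the combined weight vector can violate the cardinality constraint, so one must replace ``convexity of $V$'' by ``path-connectedness of $V$'', which is exactly where the hypothesis $|\calN|\ge 2f+1$ (equivalently $n>3f$) is used.
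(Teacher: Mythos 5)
Your proof is correct, and it takes a different route from the one the paper relies on. (Note that the paper does not reprove Lemma~\ref{BS valid convex} here --- it cites Part I --- but it gives the fully analogous argument for the crash-fault set $Y$ in Appendix~\ref{appendix: valid convex crash}, and the cited proof follows the same pattern.) The paper's argument is local and two-point: given $x_1,x_2\in\tY$ with witnesses $p_1,p_2$ and an intermediate point $x_\alpha$, it observes $p_1'(x_\alpha)$ and $p_2'(x_\alpha)$ have opposite signs, and then applies the intermediate value theorem to the mixing coefficient in $\zeta p_1+(1-\zeta)\widetilde{p}$, where $\widetilde{p}$ is the uniform average; the point is that mixing any valid function with $\widetilde{p}$ preserves validity. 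Your argument is global: you characterize $\tY$ as $\{y: \underline G(y)\le 0\le \overline G(y)\}$ for the monotone continuous envelope functions $\underline G,\overline G$ obtained by optimizing the gradient over the (nonconvex but path-connected) set $V$ of valid weight vectors, and conclude that $\tY$ is an intersection of two intervals. Both proofs hinge on exactly the same structural fact --- the uniform vector $\frac{1}{|\calN|}\ones$ satisfies the heaviness constraint on every coordinate because $|\calN|\ge 2f$, so it serves as a hub connecting all of $V$ --- and you correctly identify this as the step where $n>3f$ enters; the segment from a valid weight vector to the uniform one that makes $V$ path-connected is precisely the mixing path the paper's IVT argument traverses. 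What your packaging buys is uniformity: your $\overline G$ is exactly the auxiliary function $\tr(\cdot)$ that the paper introduces separately in Appendix~\ref{app: BS closed} to prove closedness (Lemma~\ref{BS valid closed}), so your envelope characterization delivers convexity and closedness of $\tY$ in one pass (closedness follows since $\{\underline G\le 0\}$ and $\{\overline G\ge 0\}$ are closed by continuity), whereas the paper's pointwise argument is shorter if one only wants convexity and requires no discussion of the geometry of $V$.
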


\begin{lemma}
\label{BS valid closed}
$\widetilde{Y}$ is a closed set.
\end{lemma}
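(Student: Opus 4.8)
The plan is to mimic the crash-fault argument of Lemma~\ref{valid closed crash}: parametrize $\widetilde{\calC}$ by its coefficient vectors, check that the admissible coefficient set is compact, show $\widetilde{Y}$ lies in a fixed bounded interval, and then close $\widetilde{Y}$ by a subsequence argument. Write $\beta=\frac{1}{2(|\calN|-f)}$ and let
$$A=\Big\{\,\alpha=(\alpha_i)_{i\in\calN}\ :\ \alpha_i\ge 0,\ \textstyle\sum_{i\in\calN}\alpha_i=1,\ \sum_{i\in\calN}{\bf 1}\{\alpha_i\ge\beta\}\ge|\calN|-f\,\Big\},$$
so that $\widetilde{\calC}=\{\sum_{i\in\calN}\alpha_i h_i:\alpha\in A\}$. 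The set $A$ is the union, over subsets $S\subseteq\calN$ with $|S|\ge|\calN|-f$, of the sets $\{\alpha\in\Delta^{|\calN|-1}:\alpha_i\ge\beta\text{ for }i\in S\}$; each such piece is closed and bounded, and there are finitely many of them, so $A$ is compact.

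First I would bound $\widetilde{Y}$. For each $i\in\calN$, admissibility of $h_i$ gives that $\argmin h_i$ is a nonempty compact interval $[a_i,b_i]$, and since $h_i'$ is nondecreasing one has $h_i'(x)<0$ for $x<a_i$ and $h_i'(x)>0$ for $x>b_i$ (if $h_i'(x)=0$ with $x<a_i$, monotonicity would force $h_i'\equiv 0$ on $[x,a_i]$, putting $x$ in $\argmin h_i$, a contradiction). Put $\underline a=\min_{i\in\calN}a_i$ and $\overline b=\max_{i\in\calN}b_i$. For any $p=\sum_{i\in\calN}\alpha_i h_i\in\widetilde{\calC}$ and any $x<\underline a$ we get $p'(x)=\sum_{i\in\calN}\alpha_i h_i'(x)<0$, since every $i$ with $\alpha_i>0$ contributes a strictly negative term and at least $|\calN|-f\ge 1$ of them have $\alpha_i\ge\beta>0$; hence $\argmin p\subseteq[\underline a,\infty)$, and symmetrically $\argmin p\subseteq(-\infty,\overline b]$. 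Therefore $\widetilde{Y}\subseteq[\underline a,\overline b]$; combined with Lemma~\ref{BS valid convex}, $\widetilde{Y}$ is a bounded interval.

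For closedness, take a sequence $\{y_k\}\subseteq\widetilde{Y}$ with $y_k\to y$; I want $y\in\widetilde{Y}$. For each $k$ choose $\alpha^{(k)}\in A$ with $y_k\in\argmin p_k$ where $p_k=\sum_{i\in\calN}\alpha_i^{(k)}h_i$. Since $p_k$ is convex and differentiable on $\reals$, Fermat's condition gives $p_k'(y_k)=\sum_{i\in\calN}\alpha_i^{(k)}h_i'(y_k)=0$. By compactness of $A$, pass to a subsequence along which $\alpha^{(k)}\to\alpha^*\in A$, and set $p^*=\sum_{i\in\calN}\alpha_i^* h_i\in\widetilde{\calC}$. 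Each $h_i'$ is continuous and $y_k\to y$, so letting $k\diverge$ along the subsequence yields $\sum_{i\in\calN}\alpha_i^* h_i'(y)=0$, i.e.\ $(p^*)'(y)=0$; since $p^*$ is convex this forces $y\in\argmin p^*\subseteq\widetilde{Y}$. Thus $\widetilde{Y}$ is closed (and, with the bound above, compact).

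The only real obstacle is bookkeeping: verifying that the coefficient set $A$ is genuinely compact (it is a finite union of pieces indexed by which $|\calN|-f$ weights are ``large'') and that the uniform lower bound $\beta>0$ on those weights makes $p'$ strictly signed off $[\underline a,\overline b]$. Everything else --- extracting a convergent subsequence, continuity of the derivatives, and the first-order characterization of minimizers of a convex function --- is routine and parallels the crash-fault proof of Lemma~\ref{valid closed crash}.
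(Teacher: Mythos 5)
Your proof is correct, but it takes a genuinely different route from the paper's. The paper constructs an explicit auxiliary function $\tr(x)$ --- essentially the pointwise largest derivative achievable by a valid function, obtained by sorting the $h_i'(x)$ and assigning the maximal admissible weight to the smallest ones --- proves $\tr$ is continuous and nondecreasing, locates a zero $x_0$ of $\tr$, and then shows by a monotonicity/contradiction argument that the minimizer of the corresponding extremal valid function realizes $\min\widetilde{Y}$ (symmetrically for $\max\widetilde{Y}$), handling ties in the ordering via the finite collection $\calO$. You instead parametrize $\widetilde{\calC}$ by the coefficient set $A$, observe that $A$ is compact precisely because the indicator constraint in \prettyref{eq:BS valid collection} uses the non-strict inequality $\alpha_i\ge\frac{1}{2(|\calN|-f)}$ (so $A$ is a finite union of closed, bounded faces of the simplex), and close $\widetilde{Y}$ by extracting a convergent subsequence of coefficient vectors and passing to the limit in the first-order condition $\sum_{i\in\calN}\alpha_i^{(k)}h_i'(y_k)=0$. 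Your argument is shorter, avoids the ordering bookkeeping, and yields boundedness (hence compactness) of $\widetilde{Y}$ as a byproduct; what the paper's construction buys in exchange is an explicit identification of $\min\widetilde{Y}$ and $\max\widetilde{Y}$ as minimizers of concrete extremal valid functions, mirroring its proof of Lemma~\ref{valid closed crash}. One step worth spelling out in your version is that for a convex differentiable $p^*$ the condition $(p^*)'(y)=0$ already forces $y\in\argmin p^*$ (via $p^*(z)\ge p^*(y)+(p^*)'(y)(z-y)$), so the limit point indeed lands in $\widetilde{Y}$ without needing to know in advance that $\argmin p^*$ is nonempty.
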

Lemma \ref{BS valid closed} is proved in Appendix \ref{app: BS closed}. By Lemma \ref{BS valid closed}, Definition \ref{BS def resilient} is well-defined over $\tY$.

\subsection{Update Dynamic -- Matrix Representation}

\begin{definition}\cite{vaidya2012iterative}
For a given graph $G(\calV, \calE)$, a reduced graph $\calH$ is a subgraph of $G(\calV, \calE)$ obtained by (i) removing all the faulty agents from $\calV$ along with their edges; (ii) removing any additional up to $f$ incoming edges at each non-faulty agent.
\end{definition}
Let us denote the collection of all the reduced graphs for a given $G(\calV, \calE)$ by $R_\calF$. Thus, $\calV-\calF$ is the set of agents in each element in $R_\calF$. Let $\tau=|R_\calF|$. It is easy to see that
$\tau$ depends on $\calF$, and it is finite.\\

Without loss of generality, assume agents indexed from 1 through $n-\phi$ are non-faulty, and agents indexed from $n-\phi+1$ to $n$ are faulty.
Let ${\bf x}[t-1]\in \reals^{n-\phi}$ be a real vector of the local estimates at the beginning of iteration $t$ with ${\bf x}_j[t-1]=x_j[t-1]$ being the local estimate of agent $j\in \calN$, and let $\widetilde{\bf g}[t-1]\in  \reals^{n-\phi}$ be a vector of the local gradients at iteration $t$ with $\widetilde{\bf g}_j[t-1]=\widetilde{ g}_j[t-1], j\in \calN$.
Since the underlying communication network is a complete  graph with $n\ge 3f+1$, as shown in \cite{Vaidya2012MatrixConsensus}, the update of ${\bf x}\in \reals^{n-\phi}$ in each iteration can be written compactly in a matrix form.
\begin{align}
\label{matrix representation}
{\bf x}[t]={\bf M}[t-1]{\bf x}[t-1]-\lambda[t-1]\widetilde{\bf g}[t-1].
\end{align}

The construction of ${\bf M}[t]$ and relevant properties are given in \cite{Vaidya2012MatrixConsensus}. Let $\calH\in \calR_{\calF}$ be a reduced graph of the given communication graph, with ${\bf H}$ as the adjacency matrix. It is shown in \cite{Vaidya2012MatrixConsensus} that in every iteration $t$, and for every ${\bf M}[t]$, there exists a reduced graph $\calH[t]\in \calR_{\calF}$ with adjacency matrix ${\bf H}[t]$ such that
\begin{align}
\label{iterative BS matrix lb}
{\bf M}[t]\ge \beta {\bf H}[t],
\end{align}
where $0<\beta<1$ is a constant. The definition of $\beta$ can be found in \cite{Vaidya2012MatrixConsensus}.

Equation (\ref{matrix representation}) can be further expanded out as

\begin{align}
\label{MR evo BS}
\nonumber
{\bf x}[t]~&=~{\bf M}[t-1]{\bf x}[t-1]-\lambda[t-1] \widetilde{\bf g}[t-1]\\
\nonumber
&=~{\bf M}[t-1]\pth{ {\bf M}[t-2]{\bf x}[t-2]-\lambda[t-2] \widetilde{\bf g}[t-2]}-\lambda[t-1] \widetilde{\bf g}[t-1]\\
\nonumber
&=~{\bf M}[t-1]{\bf M}[t-2]{\bf x}[t-2]-\lambda[t-2] {\bf M}[t-1]\widetilde{\bf g}[t-2]-\lambda[t-1] \widetilde{\bf g}[t-1]\\
\nonumber
&=~\cdots\\
\nonumber
&=~\pth{{\bf M}[t-1]{\bf M}[t-2]\cdots {\bf M}[0]{\bf x}[0]}-\lambda[0] \pth{{\bf M}[t-1]{\bf M}[t-2]\cdots {\bf M}[1] \widetilde{\bf g}[0]}-\cdots-\\
\nonumber
&\quad ~-\lambda[t-1]\widetilde{\bf g}[t-1]\\
&={\bf  \Phi}(t-1, 0){\bf x}[0]-\sum_{r=0}^{t-1} \lambda[r]{\bf \Phi}(t-1, r+1)\widetilde{\bf g}[r],
\end{align}
where ${\bf \Phi}(t-1, r)={\bf M}[t-1]{\bf M}[t-2]\cdots {\bf M}[r]$ is a backward product, and by convention, ${\bf \Phi}(t-1, t-1)={\bf M}[t-1]$ and ${\bf \Phi}(t-1, t)={\bf I}$.

\subsection{Correctness of Algorithm 3}
Using coefficients of ergodicity theorem, it is showed in \cite{Vaidya2012MatrixConsensus} that ${\bf \Phi}(t,r)$ is weak-ergodic \cite{Vaidya2012MatrixConsensus}, and that the rate of the convergence  is exponential \cite{Anthonisse1977360}, as formally stated in Theorem \ref{convergencerate}. Recall that $\tau=|R_{\calF}|$, $n-\phi$ is the total number of non-faulty agents, and $0<\beta<1$ is a constant for which (\ref{iterative BS matrix lb}) holds.

\begin{theorem}\cite{Anthonisse1977360}
\label{convergencerate}
Let $\nu=\tau(n-\phi)$ and $\gamma=1-\beta^{\nu}$. For any sequence ${\bf \Phi}(t, r)$,
\begin{align}
\left | {\bf \Phi}_{ik}(t, r)-{\bf \Phi}_{jk}(t, r)\right |\le \gamma^{\lceil\frac{t-r+1}{\nu}\rceil},
\end{align}
for all $t\ge r$.
\end{theorem}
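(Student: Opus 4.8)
The quantity $\left|{\bf \Phi}_{ik}(t,r)-{\bf \Phi}_{jk}(t,r)\right|$ is controlled by the classical machinery of coefficients of ergodicity for inhomogeneous products of row-stochastic matrices. For a row-stochastic matrix ${\bf A}$, set the column-spread $d({\bf A})\triangleq\max_k\max_{i,j}|{\bf A}_{ik}-{\bf A}_{jk}|$ and the Dobrushin coefficient $\delta({\bf A})\triangleq 1-\min_{i,j}\sum_k\min\{{\bf A}_{ik},{\bf A}_{jk}\}$. The two standard facts I would invoke are: (i) $d({\bf A}{\bf B})\le \delta({\bf A})\,d({\bf B})$, obtained by splitting each row difference ${\bf A}_{il}-{\bf A}_{jl}$ into positive and negative parts of equal total mass and rewriting ${\bf \bigl(A B\bigr)}_{ik}-{\bf \bigl(A B\bigr)}_{jk}$ as a scaled difference of two convex combinations of the $k$-th column of ${\bf B}$; and (ii) submultiplicativity $\delta({\bf A}{\bf B})\le \delta({\bf A})\,\delta({\bf B})$. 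Since ${\bf M}[s]$ is row-stochastic for every $s$ (this is part of the construction in \cite{Vaidya2012MatrixConsensus}), these apply to every subproduct appearing inside ${\bf \Phi}(t,r)$.

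\textbf{Key lemma (scrambling after $\nu$ steps).} The heart of the proof is: for any $s$, the product ${\bf M}[s+\nu-1]{\bf M}[s+\nu-2]\cdots {\bf M}[s]$ of $\nu=\tau(n-\phi)$ consecutive factors satisfies $\delta(\cdot)\le 1-\beta^{\nu}=\gamma$. To see this, use the entrywise bound ${\bf M}[\ell]\ge \beta{\bf H}[\ell]$ from (\ref{iterative BS matrix lb}), where each ${\bf H}[\ell]$ is the $0$--$1$ adjacency matrix of some reduced graph in $R_{\calF}$ and, crucially, each ${\bf H}[\ell]$ has a self-loop at every non-faulty node (so ${\bf H}[\ell]\ge {\bf I}$ entrywise on the non-faulty block). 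Consequently the product of $0$--$1$ matrices is monotone under passing to subsequences: ${\bf H}[s+\nu-1]\cdots{\bf H}[s]\ge {\bf H}[\ell_{p}]{\bf H}[\ell_{p-1}]\cdots {\bf H}[\ell_1]$ for any $s\le \ell_1<\cdots<\ell_p\le s+\nu-1$. Since only $\tau$ distinct reduced graphs exist, by pigeonhole some reduced graph $\calH^{*}$ with adjacency matrix ${\bf H}^{*}$ occurs among these $\nu$ factors at least $n-\phi$ times; choosing the subsequence to be those occurrences gives ${\bf H}[s+\nu-1]\cdots{\bf H}[s]\ge ({\bf H}^{*})^{\,n-\phi}$. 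Now invoke the structural property of reduced graphs (from \cite{vaidya2012iterative,Vaidya2012MatrixConsensus}, valid because $n>3f$): every reduced graph contains a non-empty ``source'' set of nodes that have directed paths to all $n-\phi$ non-faulty nodes, of length at most $n-\phi-1$. Hence $({\bf H}^{*})^{\,n-\phi}$ has at least one strictly positive column, and therefore ${\bf M}[s+\nu-1]\cdots{\bf M}[s]\ge \beta^{\nu}({\bf H}^{*})^{\,n-\phi}$ has a column all of whose entries are $\ge\beta^{\nu}$. This forces $\min_{i,j}\sum_k\min\{\cdot,\cdot\}\ge \beta^{\nu}$, i.e.\ $\delta(\cdot)\le 1-\beta^{\nu}=\gamma$.

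\textbf{Conclusion.} Write ${\bf \Phi}(t,r)={\bf M}[t]{\bf M}[t-1]\cdots {\bf M}[r]$, a product of $t-r+1$ factors, and partition the factors into $\lceil\tfrac{t-r+1}{\nu}\rceil$ consecutive blocks, all of size $\nu$ except possibly the last. By the key lemma each full block has Dobrushin coefficient $\le\gamma$; using $d({\bf A}{\bf B})\le\delta({\bf A})d({\bf B})$ on the leftmost block and then submultiplicativity of $\delta$ across the remaining full blocks, together with $d(\cdot)\le 1$ for the (possibly short) final block, yields $d({\bf \Phi}(t,r))\le\gamma^{\lceil (t-r+1)/\nu\rceil}$; and $\left|{\bf \Phi}_{ik}(t,r)-{\bf \Phi}_{jk}(t,r)\right|\le d({\bf \Phi}(t,r))$ finishes the argument. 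The main obstacle is the combinatorial scrambling lemma — specifically combining the pigeonhole count over the $\tau$ reduced graphs, the self-loop monotonicity of products of reduced-graph adjacency matrices, and the source-reachability property of reduced graphs to conclude that $\nu$ consecutive factors produce a positive column; once that is in hand, the rest is the routine coefficient-of-ergodicity bookkeeping above. (If one is careful only to claim the weaker exponent $\lfloor(t-r+1)/\nu\rfloor$, the final-block step is immediate; matching the $\lceil\cdot\rceil$ form as stated is exactly the refinement carried out in \cite{Anthonisse1977360}.)
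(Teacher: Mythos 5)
The paper offers no proof of this statement at all --- Theorem~\ref{convergencerate} is imported verbatim from \cite{Anthonisse1977360} (via \cite{Vaidya2012MatrixConsensus}), so there is nothing internal to compare against. Your reconstruction is the standard coefficient-of-ergodicity argument and its core is sound: the pigeonhole over the $\tau$ reduced graphs, the self-loop monotonicity giving ${\bf H}[s+\nu-1]\cdots{\bf H}[s]\ge ({\bf H}^{*})^{\,n-\phi}$, and the source-reachability of reduced graphs together correctly yield a column of the $\nu$-fold product bounded below by $\beta^{\nu}$, hence $\delta(\cdot)\le\gamma$ for each full block. The one place you fall short of the stated bound is the final partial block: bounding it by $d(\cdot)\le 1$ gives only the exponent $\lfloor (t-r+1)/\nu\rfloor$, and you defer the $\lceil\cdot\rceil$ form to the citation. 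You do not need to: every single factor ${\bf M}[\ell]$ already satisfies $\delta({\bf M}[\ell])\le 1-\beta\le 1-\beta^{\nu}=\gamma$, because any two rows of ${\bf M}[\ell]$ each have at least $(n-\phi)-f$ entries bounded below by $\beta$ (each non-faulty agent discards at most $f$ non-faulty incoming edges in its reduced graph), and $2\pth{(n-\phi)-f}-(n-\phi)=n-\phi-2f\ge n-3f\ge 1$ forces a common column $k$ with $\min\{{\bf M}_{ik},{\bf M}_{jk}\}\ge\beta$. By submultiplicativity the partial block then also has $\delta\le\gamma$, and chaining $d\le\delta$ across all $\lceil (t-r+1)/\nu\rceil$ blocks recovers exactly the stated exponent. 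With that one extra observation your proof is complete and self-contained; note only that it leans on two structural facts you assert but do not prove (self-loops in every reduced graph, and the source set reaching all non-faulty nodes in at most $n-\phi-1$ hops), both of which are indeed established in \cite{vaidya2012iterative,Vaidya2012MatrixConsensus}. For what it is worth, the weaker floor-exponent bound would already suffice for every use the paper makes of this theorem, since it differs from the ceiling version only by the constant factor $\gamma^{-1}$.
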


\begin{lemma}
\label{SB asymptotic consensus fb}
For all $i, j\in \calN$ and for each $t\ge 1$,
\begin{align*}
|x_i[t]-x_j[t]|\le (n-\phi)  \max\{|u|, |U|\} \gamma^{\lceil \frac{t}{\nu}\rceil}+L\sum_{r=0}^{t-1}\lambda[r](n-\phi) \gamma^{\lceil \frac{t-1-r}{\nu}\rceil},
\end{align*}
and for all $i, j\in \calN$ and for $t=0$,
\begin{align*}
|x_i[0]-x_j[0]|\le U-u.
\end{align*}
\end{lemma}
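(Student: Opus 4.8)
The plan is to read off the deviation $x_i[t]-x_j[t]$ directly from the unrolled matrix recursion (\ref{MR evo BS}) and to control the entrywise differences of the transition matrices by Theorem~\ref{convergencerate}. Two elementary observations come first. For $t=0$ there is nothing to do: since every initial estimate lies in $[u,U]$, we get $|x_i[0]-x_j[0]|\le U-u$ for all $i,j\in\calN$. For the gradient terms, note that $\widetilde g_j[t-1]=\tfrac12(\hat g_j[t-1]+\check g_j[t-1])$ is sandwiched between gradients reported by non-faulty agents: among the $f+1$ largest (resp.\ smallest) of the received second entries at least one originates from a non-faulty agent, so $\check g_j[t-1]$ is no smaller than the minimum, and $\hat g_j[t-1]$ no larger than the maximum, of $\{h_k'(x_k[t-1]):k\in\calN\}$. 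Since $|h_k'(\cdot)|\le L$ for all $k\in\calV$, this yields $|\widetilde g_j[t-1]|\le L$ for every $j\in\calN$ and every $t\ge 1$.

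Now fix $t\ge 1$ and $i,j\in\calN$. From the expansion (\ref{MR evo BS}),
\begin{align*}
x_i[t]-x_j[t] &= \sum_{k\in\calN}\pth{{\bf \Phi}_{ik}(t-1,0)-{\bf \Phi}_{jk}(t-1,0)}\,x_k[0] \\
&\quad - \sum_{r=0}^{t-1}\lambda[r]\sum_{k\in\calN}\pth{{\bf \Phi}_{ik}(t-1,r+1)-{\bf \Phi}_{jk}(t-1,r+1)}\,\widetilde g_k[r].
\end{align*}
Taking absolute values and using the triangle inequality together with $|x_k[0]|\le\max\{|u|,|U|\}$ and $|\widetilde g_k[r]|\le L$, we obtain
\begin{align*}
|x_i[t]-x_j[t]| &\le \max\{|u|,|U|\}\sum_{k\in\calN}\left|{\bf \Phi}_{ik}(t-1,0)-{\bf \Phi}_{jk}(t-1,0)\right| \\
&\quad + L\sum_{r=0}^{t-1}\lambda[r]\sum_{k\in\calN}\left|{\bf \Phi}_{ik}(t-1,r+1)-{\bf \Phi}_{jk}(t-1,r+1)\right|.
\end{align*}

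It remains to bound each entrywise difference by Theorem~\ref{convergencerate}. For the first sum, applying the theorem with first index $t-1$ and second index $0$ gives each term at most $\gamma^{\lceil\frac{(t-1)-0+1}{\nu}\rceil}=\gamma^{\lceil\frac{t}{\nu}\rceil}$; with $|\calN|\le n-\phi$ summands this contributes $(n-\phi)\max\{|u|,|U|\}\gamma^{\lceil\frac{t}{\nu}\rceil}$. For the second sum, when $r\le t-2$ the product ${\bf \Phi}(t-1,r+1)$ is genuine (i.e.\ $t-1\ge r+1$), so Theorem~\ref{convergencerate} bounds each entrywise difference by $\gamma^{\lceil\frac{(t-1)-(r+1)+1}{\nu}\rceil}=\gamma^{\lceil\frac{t-1-r}{\nu}\rceil}$; when $r=t-1$ one has ${\bf \Phi}(t-1,t)={\bf I}$ by convention, so the entrywise difference is at most $1=\gamma^{0}=\gamma^{\lceil\frac{t-1-r}{\nu}\rceil}$, which is exactly the $r=t-1$ instance of the same bound. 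Summing over $k\in\calN$ and over $r$ gives $L\sum_{r=0}^{t-1}\lambda[r](n-\phi)\gamma^{\lceil\frac{t-1-r}{\nu}\rceil}$, and adding the two contributions yields the claimed inequality. There is no deep obstacle; the only places needing care are the uniform gradient bound $|\widetilde g_j[t-1]|\le L$ produced by the trimming step, and the boundary term $r=t-1$, where one must notice that the exponent $\lceil\frac{t-1-r}{\nu}\rceil$ degenerates to $0$ so that the trivial bound on the identity matrix is already consistent with the stated estimate.
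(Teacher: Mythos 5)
Your proof is correct and follows essentially the same route as the paper's: unroll the matrix recursion (\ref{MR evo BS}), apply the triangle inequality to the difference of rows, and bound the entrywise differences $|{\bf \Phi}_{ik}-{\bf \Phi}_{jk}|$ via Theorem~\ref{convergencerate} together with $|x_k[0]|\le\max\{|u|,|U|\}$ and $|\widetilde g_k[r]|\le L$. You in fact supply two details the paper leaves implicit --- the trimming argument showing $|\widetilde g_j[t-1]|\le L$ despite Byzantine inputs, and the degenerate $r=t-1$ term where ${\bf \Phi}(t-1,t)={\bf I}$ --- both of which are handled correctly.
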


The proof of Lemma \ref{SB asymptotic consensus fb} can be found in Appendix \ref{app: SB asymptotic consensus fb}.

\begin{corollary}
\label{BS consensus}
For $i,j\in \calN$,
$$\lim_{t\diverge}|x_i[t]-x_j[t]|=0.$$
\end{corollary}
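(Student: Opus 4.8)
The plan is to read off the result directly from Lemma~\ref{SB asymptotic consensus fb}, which for $t\ge 1$ bounds $|x_i[t]-x_j[t]|$ by a sum of two explicit terms, so it suffices to show that each of those two terms tends to $0$ as $t\diverge$ (the $t=0$ bound $|x_i[0]-x_j[0]|\le U-u$ is a fixed finite number and plays no role in the limit). For the first term, $(n-\phi)\max\{|u|,|U|\}\,\gamma^{\lceil t/\nu\rceil}$, I would simply note that $0<\gamma<1$ and $\lceil t/\nu\rceil\diverge$, while the prefactor $(n-\phi)\max\{|u|,|U|\}$ is a fixed constant; hence this term vanishes.

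For the second term, $L(n-\phi)\sum_{r=0}^{t-1}\lambda[r]\,\gamma^{\lceil (t-1-r)/\nu\rceil}$, the key step is to remove the ceiling by the elementary bound $\lceil (t-1-r)/\nu\rceil\ge (t-1-r)/\nu$. Setting $b\triangleq\gamma^{1/\nu}$, which satisfies $0<b<1$ since $0<\gamma<1$, this gives $\gamma^{\lceil (t-1-r)/\nu\rceil}\le b^{\,t-1-r}=\frac{1}{b}\,b^{\,t-r}$, and therefore
$$\sum_{r=0}^{t-1}\lambda[r]\,\gamma^{\lceil (t-1-r)/\nu\rceil}\;\le\;\frac{1}{b}\sum_{r=0}^{t-1}\lambda[r]\,b^{\,t-r}.$$
The right-hand side converges to $0$ by Proposition~\ref{crash sum 0} applied with this $b$, and multiplying through by the fixed constant $L(n-\phi)$ preserves the limit.

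Combining the two estimates, $\limsup_{t\diverge}|x_i[t]-x_j[t]|\le 0$, while $|x_i[t]-x_j[t]|\ge 0$ for every $t$ forces $\liminf_{t\diverge}|x_i[t]-x_j[t]|\ge 0$; hence the limit exists and equals $0$. I do not expect a genuine obstacle here, since all the real work is carried by Lemma~\ref{SB asymptotic consensus fb} and Proposition~\ref{crash sum 0}; the only point requiring a moment's care is the passage through the ceiling function, which is dispatched by the monotonicity bound above.
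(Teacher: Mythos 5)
Your proposal is correct and follows essentially the same route as the paper's own proof: take the bound from Lemma~\ref{SB asymptotic consensus fb}, remove the ceiling via $\gamma^{\lceil x\rceil}\le\gamma^{x}$ for $0<\gamma<1$, and apply Proposition~\ref{crash sum 0} with $b=\gamma^{1/\nu}$ to kill the convolution term. Your explicit handling of the index offset (the factor $1/b$ converting $b^{\,t-1-r}$ into $b^{\,t-r}$) is in fact slightly more careful than the paper, which cites the proposition directly on the sum with exponent $(t-1-r)/\nu$.
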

We present the proof of Corollary \ref{BS consensus} in Appendix \ref{app: BS consensus}.

%
%
%
%
%

  Let $M[t]=\max_{i\in \calN}x_i[t]$ and $m[t]=\min_{i\in \calN}x_i[t]$. The following lemma holds.
\begin{lemma}
\label{finiteness of series SB}
Under Algorithm 3, the following holds.
$$\sum_{t=0}^{\infty} \lambda[t]\pth{M[t]-m[t]}<\infty.$$
\end{lemma}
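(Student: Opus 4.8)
The plan is to follow the same strategy as in the proof of Lemma \ref{alg1 finite ub crash}, but starting from the consensus estimate of Lemma \ref{SB asymptotic consensus fb} rather than from an elementary one-step recursion. Since $M[t]=\max_{i\in\calN}x_i[t]$ and $m[t]=\min_{i\in\calN}x_i[t]$, there exist $i,j\in\calN$ with $M[t]-m[t]=x_i[t]-x_j[t]$, so Lemma \ref{SB asymptotic consensus fb} gives, for $t\ge 1$,
$$M[t]-m[t]\le (n-\phi)\max\{|u|,|U|\}\,\gamma^{\lceil t/\nu\rceil}+L(n-\phi)\sum_{r=0}^{t-1}\lambda[r]\,\gamma^{\lceil (t-1-r)/\nu\rceil},$$
while $M[0]-m[0]\le U-u<\infty$. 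Multiplying by $\lambda[t]$ and summing, and discarding the finite $t=0$ term, it suffices to show
$$\sum_{t=1}^\infty \lambda[t]\gamma^{\lceil t/\nu\rceil}<\infty\qquad\text{and}\qquad \sum_{t=1}^\infty\sum_{r=0}^{t-1}\lambda[t]\lambda[r]\,\gamma^{\lceil(t-1-r)/\nu\rceil}<\infty .$$

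For the first sum I would use $\lambda[t]\le\lambda[0]$ together with the elementary identity $\sum_{k=0}^\infty \gamma^{\lceil k/\nu\rceil}=1+\nu\gamma+\nu\gamma^2+\cdots=1+\tfrac{\nu\gamma}{1-\gamma}<\infty$, which holds since $0<\gamma<1$ and since each value $\lceil k/\nu\rceil$ is attained by at most $\nu$ consecutive indices $k$. For the double sum I would first apply $\lambda[t]\lambda[r]\le\tfrac12(\lambda^2[t]+\lambda^2[r])$ to split it into
$$\tfrac12\sum_{t=1}^\infty\lambda^2[t]\sum_{r=0}^{t-1}\gamma^{\lceil(t-1-r)/\nu\rceil}\;+\;\tfrac12\sum_{t=1}^\infty\sum_{r=0}^{t-1}\gamma^{\lceil(t-1-r)/\nu\rceil}\lambda^2[r].$$
In the first piece, substituting $k=t-1-r$ the inner sum equals $\sum_{k=0}^{t-1}\gamma^{\lceil k/\nu\rceil}\le 1+\tfrac{\nu\gamma}{1-\gamma}$ uniformly in $t$, so the piece is at most a constant times $\sum_t\lambda^2[t]<\infty$. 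In the second piece I would truncate at $T$, swap the order of summation to get $\sum_{r=0}^{T-1}\lambda^2[r]\sum_{t=r+1}^{T}\gamma^{\lceil(t-1-r)/\nu\rceil}\le\big(1+\tfrac{\nu\gamma}{1-\gamma}\big)\sum_{r=0}^{T-1}\lambda^2[r]$, and let $T\diverge$, again using $\sum_r\lambda^2[r]<\infty$.

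Combining the $t=0$ term with these bounds yields $\sum_{t=0}^\infty\lambda[t]\pth{M[t]-m[t]}<\infty$. The only mildly delicate point is the bookkeeping of the ceiling $\lceil\cdot/\nu\rceil$ when summing the geometric-type series, but since $\lceil k/\nu\rceil$ repeats each value at most $\nu$ times, those sums differ from ordinary geometric series only by the harmless factor $\nu$; so there is no genuine obstacle, and the argument is essentially that of Lemma \ref{alg1 finite ub crash} with the contraction $f/(n-f)$ replaced by the exponential consensus rate furnished by Theorem \ref{convergencerate}.
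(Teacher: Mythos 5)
Your proposal is correct and follows essentially the same route as the paper's proof in the appendix: bound $M[t]-m[t]$ via Lemma \ref{SB asymptotic consensus fb}, multiply by $\lambda[t]$, split off the $t=0$ term, and control the single and double sums using $\lambda[t]\lambda[r]\le\tfrac12(\lambda^2[t]+\lambda^2[r])$, a truncation-and-swap for the second piece, and $\sum_t\lambda^2[t]<\infty$. The only (immaterial) difference is that you sum the ceiling-geometric series exactly as $1+\tfrac{\nu\gamma}{1-\gamma}$, whereas the paper bounds $\gamma^{\lceil k/\nu\rceil}\le\gamma^{k/\nu}$ and uses $\tfrac{1}{1-\gamma^{1/\nu}}$; both are valid since $0<\gamma<1$.
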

The proof of Lemma \ref{finiteness of series SB} is similar to the proof of Lemma \ref{finite ub crash}. For completeness, we present the proof in Appendix \ref{app: finiteness of series SB}.

\begin{proposition}
\label{p1}
Let $a, b, c, d\in \reals$ such that
$b<a, b\le c\le \frac{1}{2}\pth{a+b}, \frac{1}{2}\pth{a+b}<a \le d,$ and there exists $0\le \xi\le 1$, for which $\frac{1}{2}\pth{a+b}=\xi d+(1-\xi)c$ holds. Then
$$\frac{1}{2}\le \xi \le 1.$$
\end{proposition}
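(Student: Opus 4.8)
The plan is to view the defining relation $\frac12(a+b)=\xi d+(1-\xi)c$ as a linear equation in the single unknown $\xi$, solve it explicitly, and then read off the two bounds from where $\frac12(a+b)$ sits in the interval with endpoints $c$ and $d$.

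First I would record that the hypotheses line up as $c\le\frac12(a+b)<a\le d$, so in particular $c<d$. Rewriting the relation as $\frac12(a+b)-c=\xi(d-c)$ and dividing by $d-c>0$ gives
\[
\xi=\frac{\tfrac12(a+b)-c}{d-c}.
\]
Since $c\le\frac12(a+b)$, the numerator is nonnegative, so $\xi\ge 0$; and since also $\frac12(a+b)\le d$, such a $\xi$ indeed lies in $[0,1]$, consistent with the hypothesis.

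For the upper bound, $\frac12(a+b)<a\le d$ makes the numerator at most $d-c$, hence $\xi\le 1$; this is immediate. The lower bound $\xi\ge\frac12$ is the real content: clearing denominators, it is equivalent to $d\le a+b-c$, i.e.\ to saying $\frac12(a+b)$ lies in the upper half of $[c,d]$. I would aim to get $d\le a+b-c$ by combining the constraint $b\le c$ with the control on $d$ coming from $a\le d$ and from the way $c$ and $d$ straddle $\frac12(a+b)$; substituting back then yields $\xi\ge\frac12$, with equality exactly when $c=b$ and $d=a$.

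The crux, and the step I expect to be hardest, is establishing $d\le a+b-c$. Nonnegativity and $\xi\le1$ fall out of the ordering $c\le\frac12(a+b)\le d$ with no effort, but here the naive use of $b\le c$ only gives $a+b-c\le a\le d$, the wrong direction, so a genuine upper bound on $d$ must be brought in. I would therefore scrutinize the role of $c$ and $d$ in the statement — in the intended application they are the trimmed largest and smallest gradients flanking the midpoint gradient $\frac12(a+b)$ — and, should the four stated inequalities not by themselves force $d\le a+b-c$, isolate and invoke the extra bound on $d$ that the context provides before substituting back into the closed form for $\xi$ to conclude.
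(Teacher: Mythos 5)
Your approach---solving the convex-combination relation explicitly for $\xi$---is sound, and you have correctly located the obstruction: from the hypotheses as literally printed, $b\le c$ gives $a+b-c\le a\le d$, which is the \emph{opposite} of the inequality $d\le a+b-c$ that your formula $\xi=\frac{\frac{1}{2}(a+b)-c}{d-c}$ would need for $\xi\ge\frac{1}{2}$. No ``extra bound on $d$'' from the context will rescue this: as literally stated the proposition is false (take $a=1$, $b=0$, $c=0$, $d=2$; all four order hypotheses hold, yet $\frac{1}{2}(a+b)=\frac{1}{2}=2\xi$ forces $\xi=\frac{1}{4}$). The resolution is that the displayed relation has $c$ and $d$ transposed: the intended hypothesis is $\frac{1}{2}(a+b)=\xi c+(1-\xi)d$, i.e.\ $\xi$ is the weight on the \emph{smaller} endpoint $c$. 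This is the form actually invoked in Lemma \ref{BS valid gradient} (where $\widetilde g_j[t-1]=\xi_k g_k[t-1]+(1-\xi_k)g_{j'_k}[t-1]$ with $c=g_k[t-1]$ and $d=g_{j'_k}[t-1]$), and it is the form the paper's own proof manipulates, since the chain (\ref{c3}) terminates in $\xi c+(1-\xi)d=\frac{1}{2}(a+b)$.

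With that correction your method closes immediately and is arguably cleaner than the paper's: one gets $\xi=\frac{d-\frac{1}{2}(a+b)}{d-c}$, and $\xi\ge\frac{1}{2}$ is equivalent to $c+d\ge a+b$, which follows at once from $c\ge b$ and $d\ge a$---exactly the paper's inequality (\ref{cc1}). (The paper instead argues by contradiction: assuming $\xi<\frac{1}{2}$ it derives $\frac{1}{2}(c+d)<\frac{1}{2}(a+b)$, contradicting (\ref{cc1}).) As submitted, however, your proposal never establishes the lower bound $\xi\ge\frac{1}{2}$, so it is incomplete; the missing step is precisely the recognition that the statement's convex combination is mis-transcribed rather than that some additional bound on $d$ is available.
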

\begin{proof}
Suppose, on the contrary, that $0\le \xi<\frac{1}{2}$. Since $c\ge b$ and $d\ge a$, we have
\begin{align}
\label{cc1}
\frac{1}{2}\pth{c+d}\ge \frac{1}{2}\pth{a+b}
\end{align}
On the other hand, by the assumptions that $a>b$, and that $\frac{1}{2}\pth{a+b}\ge c$,  it holds that
\begin{align}
\label{c2}
d\ge a >\frac{1}{2}\pth{a+b}\ge c,
\end{align}
i.e., $d> c$. Then
\begin{align}
\label{c3}
\nonumber
\frac{1}{2}\pth{c+d}
&=\frac{1}{2}c+\frac{1}{2}d\\
\nonumber
&=\xi c+\pth{\frac{1}{2}-\xi}c+\frac{1}{2}d\\
\nonumber
&< \xi c+\pth{\frac{1}{2}-\xi}d+\frac{1}{2}d~~~\text{by (\ref{c2})}\\
\nonumber
&=\xi c+\pth{\frac{1}{2}-\xi+\frac{1}{2}}d\\
\nonumber
&=\xi c+\pth{1-\xi}d\\
&=\frac{1}{2}\pth{a+b},
\end{align}
i.e., $\frac{1}{2}\pth{c+d}<\frac{1}{2}\pth{a+b}$.
The relations in (\ref{cc1}) and (\ref{c3}) contradict each other. Thus, the assumption that $0\le \xi<\frac{1}{2}$ does not hold, i.e., $\frac{1}{2}\le \xi\le 1$, proving the proposition.

\eproof
\end{proof}

\begin{lemma}
\label{BS valid gradient}
For each non-faulty agent $j\in \calN$ and each iteration $t\ge 1$, there exists a valid function $p(x)=\sum_{i\in \calN} \alpha_i\, h_i(x)\in \calC$ such that
$$\widetilde{g}_j[t-1]=\sum_{i\in \calN} \alpha_i\, h_i^{\prime}(x_i[t-1]).$$
\end{lemma}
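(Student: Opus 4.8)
The plan is to build the weight vector $(\alpha_i)_{i\in\calN}$ explicitly out of the quantities produced by Step~3 of Algorithm~3 and then check that it lies in $\widetilde{\calC}$. Write $N=|\calN|$, and sort the non-faulty gradient values $h_i^{\prime}(x_i[t-1])$, $i\in\calN$, in non-decreasing order as $g_{(1)}\le g_{(2)}\le\cdots\le g_{(N)}$, with $\sigma(1),\dots,\sigma(N)$ the corresponding agents. Since $\widetilde{g}_j[t-1]=\tfrac12\pth{\hat{g}_j[t-1]+\check{g}_j[t-1]}$, it is enough to represent $\hat{g}_j[t-1]$ and $\check{g}_j[t-1]$ through non-faulty gradients and then average. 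The first step (the standard Byzantine ``sandwich'') is: because at most $\phi\le f$ agents are faulty and Step~3 discards the $f$ largest and the $f$ smallest received second entries, the $f$ discarded smallest values together with the $n-2f$ survivors are all $\le\hat{g}_j[t-1]$, and of these $\ge n-f$ values at most $\phi$ are faulty, so at least $n-f-\phi=N-f$ non-faulty agents have gradient $\le\hat{g}_j[t-1]$; moreover no $f+1$ received values can exceed $g_{(N)}$, so $g_{(N-f)}\le\hat{g}_j[t-1]\le g_{(N)}$. Symmetrically $g_{(1)}\le\check{g}_j[t-1]\le g_{(f+1)}$, whence $g_{(1)}\le\check{g}_j[t-1]\le\hat{g}_j[t-1]\le g_{(N)}$.

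Next I would introduce the two block averages $A_{\mathrm{hi}}=\tfrac{1}{N-f}\sum_{k=f+1}^{N}g_{(k)}$ and $A_{\mathrm{lo}}=\tfrac{1}{N-f}\sum_{k=1}^{N-f}g_{(k)}$ over the top and bottom $N-f$ non-faulty agents (the two blocks always overlap, since $N\ge 2f+1$). The sandwich bounds give $\check{g}_j[t-1]\le g_{(f+1)}\le A_{\mathrm{hi}}$ and $\hat{g}_j[t-1]\ge g_{(N-f)}\ge A_{\mathrm{lo}}$. I then split on where the midpoint $\widetilde{g}_j[t-1]$ lies. If $A_{\mathrm{lo}}\le\widetilde{g}_j[t-1]\le A_{\mathrm{hi}}$, write $\widetilde{g}_j[t-1]=\theta A_{\mathrm{hi}}+(1-\theta)A_{\mathrm{lo}}$ with $\theta\in[0,1]$; whichever of $\theta$ or $1-\theta$ is $\ge\tfrac12$ assigns weight $\ge\tfrac{1}{2(N-f)}$ to each of the $N-f$ agents in the corresponding block. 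If $\widetilde{g}_j[t-1]<A_{\mathrm{lo}}$, then since $\hat{g}_j[t-1]\ge A_{\mathrm{lo}}$ and $\check{g}_j[t-1]\ge g_{(1)}$ the midpoint satisfies $\widetilde{g}_j[t-1]\ge\tfrac12\pth{\check{g}_j[t-1]+A_{\mathrm{lo}}}\ge\tfrac12\pth{g_{(1)}+A_{\mathrm{lo}}}$, which is exactly the configuration of Proposition~\ref{p1}; it forces that, when $\widetilde{g}_j[t-1]$ is written as a convex combination of $A_{\mathrm{lo}}$ and a lower anchor (such as $\check{g}_j[t-1]$ or $g_{(1)}$), the total mass landing on the bottom block is $\ge\tfrac12$, so after expanding that lower anchor as a convex combination of $g_{(1)},\dots,g_{(f+1)}$ (which lie in the bottom block) each of the $N-f$ bottom-block agents again carries weight $\ge\tfrac{1}{2(N-f)}$. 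The case $\widetilde{g}_j[t-1]>A_{\mathrm{hi}}$ is symmetric, with the top block.

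In every case the resulting coefficients satisfy $\alpha_i\ge 0$, $\sum_{i\in\calN}\alpha_i=1$, $\widetilde{g}_j[t-1]=\sum_{i\in\calN}\alpha_i h_i^{\prime}(x_i[t-1])$, and at least $N-f$ of them are $\ge\tfrac{1}{2(N-f)}$; hence $p(x)=\sum_{i\in\calN}\alpha_i h_i(x)$ is a valid function of $\widetilde{\calC}$ and the lemma follows. I expect the main obstacle to be the third defining constraint of $\widetilde{\calC}$ --- guaranteeing that at least $|\calN|-f$ of the weights clear the threshold $\tfrac{1}{2(|\calN|-f)}$: the first two constraints drop out immediately from the trimming estimate, whereas this one genuinely requires exploiting that $\widetilde{g}_j[t-1]$ is the \emph{midpoint} of $\hat{g}_j[t-1]$ and $\check{g}_j[t-1]$ together with the averaging/convexity argument of Proposition~\ref{p1}. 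A secondary point requiring care is the counting in the sandwich step, i.e. verifying that at least $N-f$ of the received values bounded by $\hat{g}_j[t-1]$ (resp. $\check{g}_j[t-1]$) are non-faulty, and that the degenerate subcases (e.g. $\check{g}_j[t-1]=\hat{g}_j[t-1]$, or all non-faulty gradients equal) are handled by the obvious uniform weighting.
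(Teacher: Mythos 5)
Your proof is correct, and it reaches the conclusion by a genuinely different decomposition than the paper's. The paper works agent-by-agent: for each non-faulty survivor $k\in\calR_j^2[t-1]-\calF$ it picks a non-faulty anchor $j_k'$ in $\calL_j[t-1]\cup\{j^*\}$ (or $\calS_j[t-1]\cup\{i^*\}$) on the far side of $\widetilde{g}_j[t-1]$, applies the midpoint argument of Proposition~\ref{p1} to each pair separately to get $\xi_k\ge\tfrac12$, and then pads the heavy set up to exactly $|\calN|-f$ agents with carefully sized subsets $\calS_j^*[t-1],\calL_j^*[t-1]$ of the non-faulty trimmed agents, the heavy set being $\pth{\calR_j^2[t-1]-\calF}\cup\calS_j^*[t-1]$. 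You instead sandwich $\hat{g}_j[t-1]$ and $\check{g}_j[t-1]$ between order statistics of the non-faulty gradients ($g_{(N-f)}\le\hat{g}_j[t-1]\le g_{(N)}$, $g_{(1)}\le\check{g}_j[t-1]\le g_{(f+1)}$) and express $\widetilde{g}_j[t-1]$ through two overlapping block averages of $N-f$ non-faulty gradients each; the midpoint property then pushes mass at least $\tfrac12$ onto one whole block at once, so your heavy set is the top or bottom $N-f$ non-faulty agents by gradient value. Your route avoids the $\calS_j^*/\calL_j^*$ bookkeeping and the per-agent case split on whether $g_k[t-1]$ lies above or below $\widetilde{g}_j[t-1]$, at the cost of being tied to the global order statistics (the paper's per-pair construction is closer to what the algorithm actually computes and adapts more readily to non-complete communication graphs). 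Both arguments rest on the same two ingredients: the sandwich counting and the elementary fact that a point at least halfway from an anchor $v$ toward $A$ carries convex coefficient $\ge\tfrac12$ on $A$. One caution: you cite ``the configuration of Proposition~\ref{p1},'' but that proposition as literally stated puts $\xi$ on $d$ while its proof (and its use in the paper's own argument) requires $\xi$ to be the coefficient on $c$; your argument uses the intended content directly via the explicit formula $\mu=(\widetilde{g}-v)/(A_{\mathrm{lo}}-v)\ge\tfrac12$, which is the safe way to phrase it.
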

\begin{proof}
Recall that $\calR_j^2[t-1]$ denotes the set of agents from whom the remaining $n-2f$ gradient values (second entries of the tuples) were received in iteration $t$, and let us denote by $\calL_j[t-1]$ and $\calS_j[t-1]$ the set of agents from whom the largest $f$ gradient values and the smallest $f$ gradient values were received in iteration $t$. 

Let $i^*, j^*\in \calR_j^2[t-1]$ such that $g_{i^*}[t-1]=\check{g}_j[t-1]$ and $g_{j^*}[t-1]=\hat{g}_j[t-1]$.
Recall that $|\calF|=\phi$. Let $\calL_j^*[t-1]\subseteq \calL_j[t-1]-\calF$ and $\calS_j^*[t-1]\subseteq \calS_j[t-1]-\calF$ such that
$$|\calL_j^*[t-1]|=f-\phi+|\calR_j^2[t-1]\cap \calF|,$$
and
$$|\calS_j^*[t-1]|=f-\phi+|\calR_j^2[t-1]\cap \calF|.$$

We consider two cases: (i) $\hat{g}_j[t-1]>\check{g}_j[t-1]$ and (ii) $\hat{g}_j[t-1]=\check{g}_j[t-1]$, separately.

\paragraph{{\bf Case (i)}: $\hat{g}_j[t-1]>\check{g}_j[t-1]$.}

By definition of $\calL_j^*[t-1]$ and $\calS_j^*[t-1]$, we have
\begin{align}
\label{case 1 valid gradient}
\frac{1}{f-\phi+|\calR_j^2[t-1]\cap \calF|}\sum_{i\in \calS_j^*[t-1]}g_i[t-1]\le \widetilde{g}_j[t-1]\le \frac{1}{f-\phi+|\calR_j^2[t-1]\cap \calF|}\sum_{i\in \calL_j^*[t-1]}g_i[t-1].
\end{align}
Thus, there exists $0\le \xi\le 1$ such that
\begin{align}
\label{extrem nonfaulty}
\nonumber
\widetilde{g}_j[t-1]&=\xi\pth{\frac{1}{f-\phi+|\calR_j^2[t-1]\cap \calF|}\sum_{i\in \calS_j^*[t-1]}g_i[t-1]}+(1-\xi)\pth{\frac{1}{f-\phi+|\calR_j^2[t-1]\cap \calF|}\sum_{i\in \calL_j^*[t-1]}g_i[t-1]}\\
&=\frac{\xi}{f-\phi+|\calR_j^2[t-1]\cap \calF|}\sum_{i\in \calS_j^*[t-1]}g_i[t-1]+\frac{1-\xi}{f-\phi+|\calR_j^2[t-1]\cap \calF|}\sum_{i\in \calL_j^*[t-1]}g_i[t-1].
\end{align}
By symmetry, WLOG, assume $\xi\ge \frac{1}{2}$. \\

Let $k\in \calR_j^2[t-1]-\calF$. By symmetry, WLOG, assume $g_k[t-1]\le \widetilde{g}_j[t-1]$. Since $|\calL_j[t-1]\cup \{j^*\}|=f+1$, there exists a non-faulty agent $j^{\prime}_k\in \calL_j[t-1]\cup \{j^*\}$. Thus, $g_{j^{\prime}_k}[t-1]\ge \hat{g}_j[t-1]>\widetilde{g}_j[t-1]$, and there exists $0\le \xi_k\le 1$ such that
\begin{align}
\label{BS middle nonfaulty}
\frac{1}{2}\pth{\hat{g}_j[t-1]+\check{g}_j[t-1]}=\widetilde{g}_j[t-1]=\xi_k g_k[t-1]+(1-\xi_k) g_{j^{\prime}_k}[t-1].
\end{align}
Let $a=\hat{g}_j[t-1], b=\check{g}_j[t-1], c=g_k[t-1],$ and $d=g_{j^{\prime}_k}[t-1]$. By Proposition \ref{p1}, we know that
$\frac{1}{2}\le \xi_k\le 1$.


Since $|\calN|-f=n-\phi-f=n-2f+f-\phi=\left|\calR_j^2[t-1]\right|+f-\phi
=\left|\calR_j^2[t-1]-\calF\right|+\left|\calR_j^2[t-1]\cap\calF\right|+f-\phi$, we get
\begin{align*}
\nonumber
\widetilde{g}_j[t-1]&=\frac{|\calN|-f}{|\calN|-f}\widetilde{g}_j[t-1]\\
&=\frac{|\calR_j^2[t-1]-\calF|}{|\calN|-f}\widetilde{g}_j[t-1]+\frac{f-\phi+|\calR_j^2[t-1]\cap \calF|}{|\calN|-f}\widetilde{g}_j[t-1]\\
\nonumber
&=\frac{1}{|\calN|-f}\pth{\sum_{k\in \calR_j^2[t-1]-\calF}\widetilde{g}_j[t-1]}+\frac{f-\phi+|\calR_j^2[t-1]\cap \calF|}{|\calN|-f}\widetilde{g}_j[t-1]\\
\nonumber
&=\frac{1}{|\calN|-f}\sum_{k\in \calR_j^2[t-1]-\calF}\pth{\xi_k g_k[t-1]+(1-\xi_k) g_{j^{\prime}_k}[t-1]}\\
\nonumber
&\quad +\frac{\xi}{|\calN|-f}\sum_{i\in \calS_j^*[t-1]}g_i[t-1]+\frac{1-\xi}{|\calN|-f}\sum_{i\in \calL_j^*[t-1]}g_i[t-1]~~~\text{by}~(\ref{extrem nonfaulty})~\text{and}~(\ref{BS middle nonfaulty})\\
\nonumber
&=\frac{1}{|\calN|-f}\sum_{k\in \calR_j^2[t-1]-\calF}\pth{\xi_k\, h_k^{\prime}(x_k[t-1])+(1-\xi_k)\, h_{j^{\prime}_k}^{\prime}(x_{j^{\prime}_k}[t-1])}\\
&\quad +\frac{\xi}{|\calN|-f}\sum_{i\in \calS_j^*[t-1]}h_i^{\prime}(x_i[t-1])+\frac{1-\xi}{|\calN|-f}\sum_{i\in \calL_j^*[t-1]}h_i^{\prime}(x_i[t-1]).
\end{align*}

Define $q(x)$ as follows.
\begin{align}
\label{BS rewritten 1}
\nonumber
q(x)&=\frac{1}{|\calN|-f}\sum_{k\in \calR_j^2[t-1]-\calF}\pth{\xi_k\, h_k(x)+(1-\xi_k)\, h_{j^{\prime}_k}(x)}\\
&\quad +\frac{\xi}{|\calN|-f}\sum_{i\in \calS_j^*[t-1]}h_i(x)+\frac{1-\xi}{|\calN|-f}\sum_{i\in \calL_j^*[t-1]}h_i(x).
\end{align}
In (\ref{BS rewritten 1}), for each $k\in \calR_j^2[t-1]-\calF$, it holds that $\frac{\xi_k}{|\calN|-f}\ge \frac{1}{2\pth{|\calN|-f}}$. For each $i\in \calS_j^*[t-1]$, it holds that $\frac{\xi}{|\calN|-f}\ge \frac{1}{2\pth{|\calN|-f}}$. In addition, we have
\begin{align*}
|\pth{\calR_j^2[t-1]-\calF}\cup \calS_j^*[t-1]|&=|\calR_j^2[t-1]-\calF|+|\calS_j^*[t-1]|\\
&=|\calR_j^2[t-1]|-|\calR_j^2[t-1]\cap \calF|+|\calS_j^*[t-1]|\\
&=n-2f-|\calR_j^2[t-1]\cap \calF|+f-\phi+|\calR_j^2[t-1]\cap \calF|\\
&=n-\phi-f=|\calN|-f.
\end{align*}

Thus, in (\ref{BS rewritten 1}), at least $|\calN|-f$ non-faulty agents corresponding to agents $k\in \pth{\calR_j^2[t-1]-\calF}\cup \calS_j^*[t-1]$ are assigned with weights lower bounded by $\frac{1}{2(|\calN|-f)}$. 

\paragraph{{\bf Case (ii)}: $\hat{g}_j[t-1]=\check{g}_j[t-1]$.}
Let $k\in \calR_j^2[t-1]-\calF$. Since $\hat{g}_j[t-1]\ge g_k[t-1]\ge \check{g}_j[t-1]$ and $\hat{g}_j[t-1]=\check{g}_j[t-1]$, it holds that $\hat{g}_j[t-1]=g_k[t-1]=\check{g}_j[t-1]$.
Consequently, we have
$$\widetilde{g}_j[t-1]=\frac{1}{2}\pth{\hat{g}_j[t-1]+\check{g}_j[t-1]}=g_k[t-1].$$
So we can rewrite $\widetilde{g}_j[t-1]$ as follows.
\begin{align*}
\nonumber
\widetilde{g}_j[t-1]&=\frac{|\calN|-f}{|\calN|-f}\,\widetilde{g}_j[t-1]\\
\nonumber
&=\frac{1}{|\calN|-f}\pth{\sum_{k\in \calR_j^2[t-1]-\calF}\widetilde{g}_j[t-1]}+\frac{f-\phi+|\calR_j^2[t-1]\cap \calF|}{|\calN|-f}\widetilde{g}_j[t-1]\\
\nonumber
&=\frac{1}{|\calN|-f}\sum_{k\in \calR_j^2[t-1]-\calF}g_k[t-1]+\frac{\xi}{|\calN|-f}\sum_{i\in \calS_j^*[t-1]}g_i[t-1]+\frac{1-\xi}{|\calN|-f}\sum_{i\in \calL_j^*[t-1]}g_i[t-1]\\
\nonumber
&=\frac{1}{|\calN|-f}\sum_{k\in \calR_j^2[t-1]-\calF}h_k^{\prime}(x_k[t-1])\\
&\quad+\frac{\xi}{|\calN|-f}\sum_{i\in \calS_j^*[t-1]}h_i^{\prime}(x_i[t-1])+\frac{1-\xi}{|\calN|-f}\sum_{i\in \calL_j^*[t-1]}h_i^{\prime}(x_i[t-1]).
\end{align*}

Define $q(x)$ as follows.
\begin{align}
\label{BS rewritten 2}
q(x)=\frac{1}{|\calN|-f}\sum_{k\in \calR_j^2[t-1]-\calF}h_k(x)+\frac{\xi}{|\calN|-f}\sum_{i\in \calS_j^*[t-1]}h_i(x)+\frac{1-\xi}{|\calN|-f}\sum_{i\in \calL_j^*[t-1]}h_i(x).
\end{align}
In (\ref{BS rewritten 2}), for each $k\in \calR_j^2[t-1]-\calF$, it holds that $\frac{1}{|\calN|-f}\ge \frac{1}{2\pth{|\calN|-f}}$. For each $i\in \calS_j^*[t-1]$, it holds that $\frac{\xi}{|\calN|-f}\ge \frac{1}{2\pth{|\calN|-f}}$. In addition, we have
\begin{align*}
|\pth{\calR_j^2[t-1]-\calF}\cup \calS_j^*[t-1]|=|\calN|-f.
\end{align*}
Thus, in (\ref{BS rewritten 2}), at least $|\calN|-f$ non-faulty agents corresponding to $\pth{\calR_j^2[t-1]-\calF}\cup \calS_j^*[t-1]$ are assigned with weights lower bounded by $\frac{1}{2(|\calN|-f)}$.\\ 

Case (i) and Case (ii) together prove the lemma.

\eproof
\end{proof}

%

\begin{proposition}
\label{BS p1}
For each non-faulty agent $j\in \calN$ and each $t\ge 1$, there exists a set of convex coefficients $\beta_i$'s over non-faulty agents, i.e., $\beta_i\ge 0$ for each $i\in \calN$ and $\sum_{i\in \calN} \beta_i=1$, such that the following holds
$$\frac{1}{n-2f}\sum_{i\in \calR_j^1[t-1]} w_i[t-1]=\sum_{i\in \calN}\beta_i x_i[t-1].$$
\end{proposition}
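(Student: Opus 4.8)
The plan is to follow the same strategy as in the proof of Lemma~\ref{BS valid gradient}: every term of the trimmed average $\frac{1}{n-2f}\sum_{i\in\calR_j^1[t-1]}w_i[t-1]$ that is contributed by a faulty agent will be rewritten as a convex combination of the estimates of two non-faulty agents, and then the resulting coefficients will be collected. First I would fix notation mirroring that lemma: for a non-faulty agent $j$ and iteration $t$, let $\calL_j[t-1]$ and $\calS_j[t-1]$ denote the sets of agents from whom the largest $f$ and the smallest $f$ first entries $w_i[t-1]$ were received in Step~3, so that $\calR_j^1[t-1]$, $\calL_j[t-1]$, $\calS_j[t-1]$ form a partition of all $n$ senders, and by the sorting rule $w_p[t-1]\ge w_q[t-1]\ge w_r[t-1]$ for all $p\in\calL_j[t-1]$, $q\in\calR_j^1[t-1]$, $r\in\calS_j[t-1]$. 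If $\calR_j^1[t-1]\cap\calF=\emptyset$ the statement is immediate by taking $\beta_i=\frac{1}{n-2f}\indc{i\in\calR_j^1[t-1]}$, so I would henceforth assume $\calR_j^1[t-1]$ contains at least one faulty agent.

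The key step is a counting argument applied to each faulty $k\in\calR_j^1[t-1]\cap\calF$. Writing $\phi_L$ and $\phi_S$ for the numbers of faulty agents in $\calL_j[t-1]$ and $\calS_j[t-1]$ respectively, the presence of the faulty agent $k$ in $\calR_j^1[t-1]$ forces $\phi_L\le\phi-1\le f-1$ and $\phi_S\le\phi-1\le f-1$, where $\phi=|\calF|\le f$. Hence $\calL_j[t-1]\cup\{k\}$ has $f+1$ elements of which at most $\phi_L+1\le f$ are faulty, so it contains a non-faulty agent; since $k$ itself is faulty, this agent $a_k$ lies in $\calL_j[t-1]$, and therefore $x_{a_k}[t-1]=w_{a_k}[t-1]\ge w_k[t-1]$. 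Symmetrically, $\calS_j[t-1]\cup\{k\}$ contains a non-faulty agent $b_k\in\calS_j[t-1]$ with $x_{b_k}[t-1]=w_{b_k}[t-1]\le w_k[t-1]$. Consequently there is $\theta_k\in[0,1]$ with $w_k[t-1]=\theta_k\,x_{a_k}[t-1]+(1-\theta_k)\,x_{b_k}[t-1]$.

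Finally I would substitute: using $w_i[t-1]=x_i[t-1]$ for non-faulty $i\in\calR_j^1[t-1]$ and the convex decomposition above for each faulty $k\in\calR_j^1[t-1]\cap\calF$, the quantity $\frac{1}{n-2f}\sum_{i\in\calR_j^1[t-1]}w_i[t-1]$ becomes a nonnegative linear combination of $\{x_i[t-1]:i\in\calN\}$. Define $\beta_i$ to be $\frac{1}{n-2f}$ times the total multiplicity with which $x_i[t-1]$ occurs, namely $\beta_i=\frac{1}{n-2f}\bigl(\indc{i\in\calR_j^1[t-1]\setminus\calF}+\sum_{k:\,a_k=i}\theta_k+\sum_{k:\,b_k=i}(1-\theta_k)\bigr)$, the index $k$ ranging over $\calR_j^1[t-1]\cap\calF$; this is well defined even though the $a_k,b_k$ may be reused across different $k$ (they can never coincide, since $a_k\in\calL_j[t-1]$ and $b_k\in\calS_j[t-1]$ are in disjoint sets, both disjoint from $\calR_j^1[t-1]$). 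Then $\beta_i\ge0$ for every $i\in\calN$, and $\sum_{i\in\calN}\beta_i=\frac{1}{n-2f}\bigl(|\calR_j^1[t-1]\setminus\calF|+\sum_{k}(\theta_k+1-\theta_k)\bigr)=\frac{|\calR_j^1[t-1]|}{n-2f}=1$, giving the claim.

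I expect the only delicate point to be the counting step: one must use \emph{both} that $k$ itself is faulty and that it lies in the untrimmed set $\calR_j^1[t-1]$ to obtain the strict bounds $\phi_L,\phi_S\le\phi-1$; without this extra unit, a trimmed set of $f+1$ agents could a priori be entirely faulty and no non-faulty witness with the required inequality would be guaranteed. Everything else — checking that the accumulated coefficients are nonnegative and sum to one — is routine bookkeeping.
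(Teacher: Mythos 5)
Your proposal is correct and follows essentially the same strategy as the paper's proof: the trivial case $\calR_j^1[t-1]\cap\calF=\emptyset$ is handled identically, and in the nontrivial case both arguments use the observation that a faulty agent surviving the trim leaves at most $\phi-1\le f-1$ faulty agents for each of $\calL_j[t-1]$ and $\calS_j[t-1]$, so each contains a non-faulty witness whose estimate brackets the faulty first entries, allowing those entries to be replaced by convex combinations of non-faulty estimates. The only (immaterial) difference is that you decompose each faulty entry $w_k[t-1]$ individually with its own pair $(a_k,b_k)$ and coefficient $\theta_k$, whereas the paper fixes a single pair $(s,l)$ and convexly combines the aggregate sum $\sum_{k\in\calR_j^1[t-1]\cap\calF}w_k[t-1]$ with one coefficient $\zeta$.
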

\begin{proof}
Note that $\calR_j^1[t-1]=\pth{\calR_j^1[t-1]-\calF}\cup \pth{\calR_j^1[t-1]\cap\calF}$. We consider two cases: (i) $\calR_j^1[t-1]\cap\calF=\O$ and (ii) $\calR_j^1[t-1]\cap\calF\not=\O$, separately.

\paragraph{{\bf Case (i):} $\calR_j^1[t-1]\cap\calF=\O$.}
When $\calR_j^1[t-1]\cap\calF=\O$, every agent in $\calR_j^1[t-1]$ is non-faulty, i.e., $\calR_j^1[t-1]\subseteq \calN$. Then we get
\begin{align}
\label{BS p1 c1}
\frac{1}{n-2f}\sum_{i\in \calR_j^1[t-1]} w_i[t-1]=\frac{1}{n-2f}\sum_{i\in \calR_j^1[t-1]} x_i[t-1] ~~~\text{since}~ w_i[t-1]=x_i[t-1]~\text{for each}~ i\in \calN
\end{align}
Let $\beta_i=\frac{1}{n-2f}$ for each $i\in \calR_j^1[t-1]\subseteq \calN$, and $\beta_i=0$ for each $i\in \calN-\calR_j^1[t-1]$. The obtained $\beta_i$'s is a valid collection of convex coefficients, since $\beta_i=\frac{1}{n-2f}\ge 0$ for each $i\calN$, and
$$\sum_{i\in \calN} \beta_i=\sum_{i\in \calR_j^1[t-1]} \beta_i=\sum_{i\in \calR_j^1[t-1]} \frac{1}{n-2f}=\frac{1}{n-2f}|\calR_j^1[t-1]|=\frac{1}{n-2f}(n-2f)=1.$$

\paragraph{{\bf Case (ii):} $\calR_j^1[t-1]\cap\calF\not=\O$.}
Let $\calL_j[t-1]$ be the set of the identifiers of the $f$ agents from whom the $f$ largest first entries ($w_i[t-1]$'s) are received, and let $\calS_j[t-1]$ be the set of the identifiers of the $f$ agents from whom the $f$ smallest first entries ($w_i[t-1]$'s) are received. Since $\calR_j^1[t-1]\cap\calF\not=\O$,  it holds that $\calL_j[t-1]\cap\calN \not=\O$ and $\calS_j[t-1]\cap \calN\not=\O$. Let $l$ and $s$ be two non-faulty agents such that $l\in \calL_j[t-1]\cap\calN$ and $s\in \calS_j[t-1]\cap\calN$. By definition of $\calR_j^1[t-1]$, for each $k\in \calR_j^1[t-1]\cap\calF$, we have
\begin{align}
x_s[t-1]=~w_s[t-1]~\le w_k[t-1]~\le w_l[t-1]=x_l[t-1].
\end{align}
Then,
$$ |\calR_j^1[t-1]\cap\calF|\, x_s[t-1] \le \sum_{k\in \calR_j^1[t-1]\cap\calF}w_k[t-1]~\le |\calR_j^1[t-1]\cap\calF|\,x_l[t-1].$$
Thus, there exists $0\le \zeta\le 1$ such that
\begin{align}
\label{bbb}
\sum_{k\in \calR_j^1[t-1]\cap\calF}w_k[t-1]=\zeta\pth{|\calR_j^1[t-1]\cap\calF|\, x_s[t-1]}+(1-\zeta)\pth{|\calR_j^1[t-1]\cap\calF|\, x_l[t-1]}.
\end{align}
Thus,
\begin{align}
\label{BS p1 c2}
\nonumber
&\frac{1}{n-2f}\sum_{i\in \calR_j^1[t-1]} w_i[t-1]=\frac{1}{n-2f}\pth{\sum_{i\in \calR_j^1[t-1]-\calF} w_i[t-1]+\sum_{i\in \calR_j^1[t-1]\cap \calF} w_i[t-1]} \\
\nonumber
&=\frac{1}{n-2f}\pth{\sum_{i\in \calR_j^1[t-1]-\calF} x_i[t-1]+\sum_{i\in \calR_j^1[t-1]\cap \calF} w_i[t-1]}~~~\text{since}~ w_i[t-1]=x_i[t-1]~\text{for each}~ i\in \calN\\
\nonumber
&=\frac{1}{n-2f}\pth{\sum_{i\in \calR_j^1[t-1]-\calF} x_i[t-1]+    \zeta\pth{|\calR_j^1[t-1]\cap\calF|\, x_s[t-1]}+(1-\zeta)\pth{|\calR_j^1[t-1]\cap\calF|\, x_l[t-1]} }~~\text{by}~(\ref{bbb})\\
&=\frac{1}{n-2f}\sum_{i\in \calR_j^1[t-1]-\calF} x_i[t-1]+\frac{\zeta|\calR_j^1[t-1]\cap\calF|}{n-2f} x_s[t-1]+\frac{(1-\zeta)|\calR_j^1[t-1]\cap\calF|}{n-2f} x_l[t-1].
\end{align}

Let $\beta_s=\frac{\zeta|\calR_j^1[t-1]\cap\calF|}{n-2f}$, $\beta_l=\frac{(1-\zeta)|\calR_j^1[t-1]\cap\calF|}{n-2f}$, let $\beta_i=\frac{1}{n-2f}$ for each $i\in \calR_j^1[t-1]-\calF$, and let $\beta_i=0$ for all other non-faulty agents. The obtained $\beta_i$'s is a valid collection of convex coefficients since
\begin{align*}
\beta_s+\beta_l +\sum_{i\in \calR_j^1[t-1]-\calF} \beta_i&=\frac{\zeta|\calR_j^1[t-1]\cap\calF|}{n-2f}+\frac{(1-\zeta)|\calR_j^1[t-1]\cap\calF|}{n-2f}+\sum_{i\in \calR_j^1[t-1]-\calF} \frac{1}{n-2f}\\
&=\frac{|\calR_j^1[t-1]\cap\calF|}{n-2f}+\frac{|\calR_j^1[t-1]-\calF|}{n-2f}\\
&=\frac{|\calR_j^1[t-1]|}{n-2f}=\frac{n-2f}{n-2f}=1.
\end{align*}

Case (i) and case (ii) together prove the proposition.

\eproof
\end{proof}

We define $z[t]$ and $x_{j_t}$ similar to that for Algorithm 1. In particular, let $\{z[t]\}_{t=0}^{\infty}$ be a sequence of estimates such that
\begin{align}
\label{alg3 crash sequence z}
z[t]=x_{j_{t}}[t], ~~\text{where}~j_{t}\in \argmax_{j\in \calN[t]} Dist\pth{x_j[t], Y}.
\end{align}
From the definition, there is a sequence of agents $\{j_t\}_{t=0}^{\infty}$ associated with the sequence $\{z[t]\}_{t=0}^{\infty}$.

\begin{theorem}
\label{talgo BS}
The sequence $\{Dist\pth{z[t], Y}\}_{t=0}^{\infty}$ converges and $$\lim_{t\diverge} Dist\pth{z[t], Y}=0.$$
\end{theorem}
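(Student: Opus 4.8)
The plan is to reproduce, almost verbatim, the two–case argument used in the proofs of Theorem \ref{optimal alg1} and Theorem \ref{optimal alg2}, once the appropriate one–step inequality has been extracted. (Here I write $\widetilde Y$ for the set denoted $Y$ in the statement, i.e.\ $\widetilde Y=\cup_{p\in\widetilde\calC}\argmin p$ from (\ref{BS valid collection}); in the Byzantine model $\calN[t]=\calN$ for all $t$.) Concretely, I first show
$$Dist\pth{z[t+1],\widetilde Y}\le \max\sth{\lambda[t]L,\ Dist\pth{z[t],\widetilde Y}}+\lambda[t]L\pth{M[t]-m[t]},$$
the analogue of (\ref{basic iteration alg2 crash}), together with the refinement that $Dist\pth{z[t+1],\widetilde Y}\le Dist\pth{z[t],\widetilde Y}-\lambda[t]|g[t]|+\lambda[t]L\pth{M[t]-m[t]}$ whenever the relevant point is not a resilient point and the gradient step does not land inside $\widetilde Y$.

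First I would rewrite the update. Fix the non-faulty agent $j_{t+1}\in\argmax_{j\in\calN}Dist\pth{x_j[t+1],\widetilde Y}$, so $z[t+1]=x_{j_{t+1}}[t+1]$. By (\ref{Byzantine Iterative}), $x_{j_{t+1}}[t+1]=\frac{1}{n-2f}\sum_{i\in\calR^1_{j_{t+1}}[t]}w_i[t]-\lambda[t]\widetilde g_{j_{t+1}}[t]$. By Proposition \ref{BS p1}, the first term equals $y[t]:=\sum_{i\in\calN}\beta_i x_i[t]$ for convex weights $\beta_i$, so $y[t]\in[m[t],M[t]]$; by Lemma \ref{BS valid gradient} there is a valid $p\in\widetilde\calC$ with $\widetilde g_{j_{t+1}}[t]=\sum_{i\in\calN}\alpha_i h_i^{\prime}(x_i[t])$. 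Setting $g[t]:=p^{\prime}(y[t])=\sum_{i\in\calN}\alpha_i h_i^{\prime}(y[t])$ and using that each $h_i^{\prime}$ is $L$-Lipschitz with $x_i[t],y[t]\in[m[t],M[t]]$, we get $|\widetilde g_{j_{t+1}}[t]-g[t]|\le L\pth{M[t]-m[t]}$, hence $Dist\pth{z[t+1],\widetilde Y}\le Dist\pth{y[t]-\lambda[t]g[t],\widetilde Y}+\lambda[t]L\pth{M[t]-m[t]}$. This is exactly the step (\ref{crash distance y2})–(\ref{alg2 iter2}). Now apply Definition \ref{BS def resilient} (well-defined over $\widetilde Y$ by Lemma \ref{BS valid closed}) to the pair $\pth{x[t],g[t]}$ with $x[t]:=y[t]$: if $x[t]$ is resilient, then $Dist\pth{y[t]-\lambda[t]g[t],\widetilde Y}\le\lambda[t]L$; if not resilient and the step lands in $\widetilde Y$, that distance is $0$; if not resilient and the step stays on the same side, then as in (\ref{r3})/(\ref{alg2 negative gradient}), $Dist\pth{y[t]-\lambda[t]g[t],\widetilde Y}=Dist\pth{y[t],\widetilde Y}-\lambda[t]|g[t]|$, and convexity of $Dist\pth{\cdot,\widetilde Y}$ applied to the convex combination $y[t]=\sum_i\beta_i x_i[t]$ gives $Dist\pth{y[t],\widetilde Y}\le\max_{i\in\calN}Dist\pth{x_i[t],\widetilde Y}=Dist\pth{z[t],\widetilde Y}$. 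Combining the three cases gives the claimed inequalities.

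From there the argument is the endgame of Theorem \ref{optimal alg2}. If infinitely many $x[t]$ are resilient, the distance is $O\pth{\lambda[t]}+O\pth{\lambda[t]\pth{M[t]-m[t]}}\to0$ along those indices, and between consecutive resilient times it grows only by a vanishing tail of $\sum_r\lambda[r]\pth{M[r]-m[r]}$; here $M[t]-m[t]\to0$ by Corollary \ref{BS consensus} and $\sum_t\lambda[t]\pth{M[t]-m[t]}<\infty$ by Lemma \ref{finiteness of series SB}, so a $\limsup$ argument yields $\lim_{t\diverge}Dist\pth{z[t],\widetilde Y}=0$. If only finitely many $x[t]$ are resilient, then for large $t$ we have $Dist\pth{z[t+1],\widetilde Y}\le Dist\pth{z[t],\widetilde Y}+\lambda[t]L\pth{M[t]-m[t]}$, so by Lemma \ref{SB stochatic convergence} the limit $c\ge0$ exists; assuming $c>0$, Lemma \ref{accu point} traps $z[t]$ eventually within $c/2$ of $\min\widetilde Y-c$ or of $\max\widetilde Y+c$. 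At such a trapped point $x_0<\min\widetilde Y$ (resp.\ $x_0>\max\widetilde Y$), every valid $p\in\widetilde\calC$ satisfies $p^{\prime}(x_0)<0$ (resp.\ $>0$) since $x_0$ lies strictly left (resp.\ right) of every minimizer of $p$, and because the admissible weight set defining $\widetilde\calC$ is compact, $\rho^*:=\sup_{p\in\widetilde\calC}p^{\prime}(x_0)<0$ (resp.\ $\widetilde\rho:=\inf_{p\in\widetilde\calC}p^{\prime}(x_0)>0$); monotonicity of the synthesized gradient plus consensus give $|g[t]|\ge\min\{|\rho^*|,|\widetilde\rho|\}$ for large $t$. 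Telescoping the contraction estimate, as in (\ref{f1}), drives $Dist\pth{z[t],\widetilde Y}\to-\infty$, a contradiction; hence $c=0$.

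The main obstacle is the first step: converting the filtered update (\ref{Byzantine Iterative}) into ``convex combination of non-faulty states minus $\lambda[t]$ times a gradient of a valid function, up to an $L\lambda[t]\pth{M[t]-m[t]}$ error'', which is precisely where Proposition \ref{BS p1} and Lemma \ref{BS valid gradient} do the work; and, inside the contradiction argument, verifying that $\sup_{p\in\widetilde\calC}p^{\prime}(x_0)$ (resp.\ the infimum) is \emph{strictly} negative (resp.\ positive), which rests on compactness of the coefficient set and on $\widetilde Y$ containing every minimizer of every valid function (Lemmas \ref{BS valid convex} and \ref{BS valid closed}). Everything else is a transcription of the Algorithm 2 proof with $Y\mapsto\widetilde Y$, the consensus and summability inputs now supplied by Corollary \ref{BS consensus} and Lemma \ref{finiteness of series SB}.
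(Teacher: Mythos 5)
Your proposal is correct and follows essentially the same route as the paper: it reduces the filtered update to ``convex combination of non-faulty states minus a step along a valid function's gradient'' via Proposition \ref{BS p1} and Lemma \ref{BS valid gradient}, establishes the analogue of (\ref{basic iteration alg2 crash}), and then runs the identical resilient-point case analysis of Theorem \ref{optimal alg2} using Corollary \ref{BS consensus}, Lemma \ref{finiteness of series SB}, Lemma \ref{SB stochatic convergence} and Lemma \ref{accu point}. The only (harmless) deviations are that you evaluate the surrogate gradient at the averaged point $\sum_i \beta_i x_i[t]$ rather than at the worst agent $x_{j_{t+1}^{\prime}}[t]$ as in (\ref{BS distance y2}), and you obtain strict negativity of $\rho^*$ from compactness of the weight set rather than by the paper's explicit construction of the extremal valid function.
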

\begin{proof}

\begin{align}
\label{BS a1}
\nonumber
Dist\pth{z[t+1], Y}&=Dist\pth{x_{j_{t+1}}[t], Y}~~~\text{by (\ref{alg3 crash sequence z})}\\
\nonumber
&=Dist\pth{\frac{1}{n-2f}\sum_{i\in \calR^{1}_{j_{t+1}}[t]} w_i[t]-\lambda[t]\widetilde{g}_{j_{t+1}}[t],~ Y}~~~\text{by}~(\ref{Byzantine Iterative})\\
\nonumber
&=Dist\pth{\sum_{i\in \calN} \beta_i x_i[t]-\lambda[t]\widetilde{g}_{j_{t+1}}[t],~ Y}~~~\text{by Proposition}~ \ref{BS p1}\\
\nonumber
&=Dist\pth{\sum_{i\in \calN} \beta_i \pth{x_i[t]-\lambda[t]\widetilde{g}_{j_{t+1}}[t]},~ Y}~~~\text{since}~ \sum_{i\in \calN} \beta_i=1\\
\nonumber
&\le \sum_{i\in \calN} \beta_i \, Dist\pth{x_i[t]-\lambda[t]\widetilde{g}_{j_{t+1}}[t], ~Y}~~~\text{by convexity of}~Dist\pth{\cdot, Y}\\
&\le \max_{i\in \calN} Dist\pth{x_i[t]-\lambda[t]\widetilde{g}_{j_{t+1}}[t], ~Y}.
\end{align}
By Lemma \ref{BS valid gradient}, there exists a valid function $p_{t}(\cdot)=\sum_{q\in \calN} \alpha_q h_q(\cdot)\in \calC$ such that
\begin{align}
\label{valid gradient inter}
\widetilde{g}_{j_{t+1}}[t]=\sum_{q\in \calN} \alpha_q h_q^{\prime}(x_q[t]).
\end{align}
In addition, let $$j_{t+1}^{\prime}\in \argmax_{i\in \calN} Dist\pth{x_i[t]-\lambda[t]\widetilde{g}_{j_{t+1}}[t], ~Y}.$$
We get
\begin{align}
\label{BS distance y2}
\nonumber
Dist\pth{z[t+1], Y}
&\le \max_{i\in \calN} Dist\pth{x_i[t]-\lambda[t]\widetilde{g}_{j_{t+1}}[t], ~Y}~~~\text{by (\ref{BS a1})}\\
\nonumber
&= Dist\pth{x_{j_{t+1}^{\prime}}[t]-\lambda[t]\widetilde{g}_{j_{t+1}}[t], ~Y}\\
\nonumber
&=Dist\pth{x_{j_{t+1}^{\prime}}[t]-\lambda[t]\sum_{q\in \calN} \alpha_q h_q^{\prime}(x_q[t]), ~Y}~~~\text{by Lemma \ref{BS valid gradient}}\\
\nonumber
&=\inf_{y\in Y}\left |x_{j_{t+1}^{\prime}}[t]-\lambda[t]\sum_{q\in \calN} \alpha_q h_q^{\prime}(x_q[t])-y \right |\\
&\le \inf_{y\in Y}\left |x_{j_{t+1}^{\prime}}[t]-\lambda[t]p_{t+1}^{\prime}(x_{j_{t+1}^{\prime}}[t])-y\right |+\lambda[t]L (M[t]-m[t]).
\end{align}
where $p_{t}$ is defined in (\ref{valid gradient inter}).
Note that for each $t\ge 0$, there exists a non-faulty agent $j_{t}^{\prime}$ such that (\ref{BS distance y2}) holds, and there exists a sequence of agents $\{j_{t}^{\prime}\}_{t=0}^{\infty}$.
Let $\{x[t]\}_{t=0}^{\infty}$ be a sequence of estimates such that $x[t]=x_{j_{t+1}^{\prime}}[t]$.
Let $\{g[t]\}_{t=0}^{\infty}$ be a sequence of gradients such that $g[t]=p_{t}^{\prime}(x_{j_{t+1}^{\prime}}[t])$.\\

The remaining of the proof is identical to the proof of Theorem \ref{optimal alg2}.

\eproof
\end{proof}

\section{Discussion and Conclusion}
\label{conclusion and discussion}
So far, a synchronous system is considered. In an asynchronous system, when there are up to $f$ crash faults, Problem 1 is not solvable, since it is possible that every agent in the system is non-faulty, but $f$ agents are slow. In this case, the system will mistakenly ``treat" the slow agents as crashed agents. Consequently, the weights of the slow agents may be strictly smaller than the other agents.
Despite the impossibility of solving Problem 1 in asynchronous system, nevertheless, Problem 2 can be solved
with $\beta\ge \frac{1}{n}$ and $\gamma\ge |\calN|-f$. In particular, Algorithm 2 can be easily adapted for asynchronous system by modifying the receiving step (step 2). For completeness, we list out the algorithm for crash faults.
\paragraph{}
\hrule
~
\vspace*{4pt}

{\bf Algorithm 4} (crash faults) for agent $j$ for iteration $t\ge 1$ :
~
\vspace*{4pt}\hrule

\begin{list}{}{}

\item[{\bf Step 1:}]
Compute $h_j^{\prime}(x_j[t-1])$-- the gradient of local function $h_j(\cdot)$ at point $x_j[t-1]$, and send the triple $\pth{x_j[t-1],\, h_j^{\prime}(x_j[t-1]), \, t}$ to all the agents (including agent $j$ itself).\\
~
\item[{\bf Step 2:}]
Upon receiving $\pth{x_i[t-1],\, h_i^{\prime}(x_i[t-1]), \, t}$ from $n-f$ non-faulty agents (including agent $j$ itself) -- these received tuples form a multiset $\calR_j[t-1]$,  update $x_j$ as
\begin{align}
\label{asyn update x crash 2}
x_j[t]=\frac{1}{\left | \calR_j[t-1]\right |} \pth{\sum_{i\in \calR_j[t-1]} \pth{x_i[t-1]-\lambda[t-1] h_i^{\prime}(x_i[t-1])}}.
\end{align}
~
\end{list}

~
\hrule

~

~
Note that $\left |\calR_j[t-1] \right|=n-f$.
Since at most $f$ agents may crash, agent $j$ can receive messages from at least $n-f$ agents in step 2. Thus, Algorithm 3 will always proceed to the next iteration.
We are able to show the following theorem.
\begin{theorem}
\label{alg4 optimal}
Algorithm 4 solves Problem 2 with $\beta=\frac{1}{n}$ and $\gamma=n-f$.
\end{theorem}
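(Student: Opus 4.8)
The plan is to mirror the analysis of Algorithm~2 (Theorem~\ref{optimal alg2}), with the target set $Y$ replaced by the Byzantine-style set adapted to the parameters $\beta=\frac1n$, $\gamma=n-f$. Define
\[
\widetilde{\calC}\ \triangleq\ \Big\{\, p(x)=\sum_{i\in\calN}\alpha_i h_i(x)\ :\ \alpha_i\ge 0,\ \sum_{i\in\calN}\alpha_i=1,\ \sum_{i\in\calN}\mathbf{1}\pth{\alpha_i>\tfrac{1}{n}}\ge n-f \,\Big\},\qquad \widetilde{Y}\ \triangleq\ \bigcup_{p\in\widetilde{\calC}}\argmin\,p .
\]
First I would note $\widetilde{\calC}\ne\emptyset$: for any $(n-f)$-element subset $S\subseteq\calN$ the function $\frac{1}{n-f}\sum_{i\in S}h_i$ lies in $\widetilde{\calC}$, since $f\ge 1$ gives $\frac{1}{n-f}>\frac1n$ and there are exactly $n-f$ such weights. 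Running the arguments of Lemmas~\ref{BS valid convex}--\ref{BS valid closed} verbatim shows $\widetilde{Y}$ is a nonempty convex closed set, so $\min\widetilde{Y},\max\widetilde{Y}$ exist, $Dist\pth{\cdot,\widetilde{Y}}$ is convex, and Definition~\ref{BS def resilient} is well defined over $\widetilde{Y}$.

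Next I would establish asynchronous consensus. Each non-faulty agent broadcasts its iteration-$t$ tuple in Step~1 before blocking in Step~2, and at most $\phi\le f$ agents are crashed or slow, so every non-faulty agent eventually assembles $\calR_j[t-1]$ with $|\calR_j[t-1]|=n-f$ (liveness); moreover for non-faulty $i,j$ one has $|\calR_i[t-1]\cap\calR_j[t-1]|\ge n-2f$, hence $|\calR_i[t-1]\setminus\calR_j[t-1]|\le f$. Plugging these facts, together with $s_k[t]\in\qth{m[t-1]-\lambda[t-1]L,\ M[t-1]+\lambda[t-1]L}$ for $s_k[t]:=x_k[t-1]-\frac{\lambda[t-1]}{|\calR_k[t-1]|}\sum_{i\in\calR_k[t-1]}h_i'(x_i[t-1])$, into the computation of Lemma~\ref{consensus alg1} yields
\[
M[t]-m[t]\ \le\ \frac{f}{n-f}\pth{M[t-1]-m[t-1]+2\lambda[t-1]L},
\]
so $\lim_{t\diverge}\pth{M[t]-m[t]}=0$ by Proposition~\ref{crash sum 0}, and, by the argument of Lemma~\ref{finite ub crash} (using $\sum_t\lambda^2[t]<\infty$), $\sum_{t=0}^{\infty}\lambda[t]\pth{M[t]-m[t]}<\infty$; in particular the conclusions of Corollary~\ref{cor1} hold for Algorithm~4, and there is a constant $C$ with $\sum_{\tau\ge t}\lambda[\tau]L\pth{M[\tau]-m[\tau]}\le C$ for every $t$.

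The third ingredient --- the analogue of Lemma~\ref{BS valid gradient} --- is immediate: agent $j$'s update is $x_j[t]=\frac{1}{n-f}\sum_{i\in\calR_j[t-1]}\pth{x_i[t-1]-\lambda[t-1]h_i'(x_i[t-1])}$, so, since $\calR_j[t-1]\subseteq\calN$ with $|\calR_j[t-1]|=n-f$, the effective gradient $\frac{1}{n-f}\sum_{i\in\calR_j[t-1]}h_i'(x_i[t-1])$ equals $\sum_{i\in\calN}\alpha_i h_i'(x_i[t-1])$ for the valid function $q(x)=\frac{1}{n-f}\sum_{i\in\calR_j[t-1]}h_i(x)\in\widetilde{\calC}$, which puts weight $\frac{1}{n-f}>\frac1n$ on exactly the $n-f=\gamma$ agents of $\calR_j[t-1]$. (In the crash model every delivered tuple is correct, and there is a finite time past which all crashes have occurred, so $\calR_j[t-1]\subseteq\calN$ for all large $t$; since every conclusion here is asymptotic, it suffices to run the bookkeeping from that time on, with $M,m$ there as initial data.) From this point the proof copies that of Theorem~\ref{optimal alg2}: with $z[t]=x_{j_t}[t]$, $j_t\in\argmax_{j\in\calN}Dist\pth{x_j[t],\widetilde{Y}}$, pushing $Dist\pth{z[t+1],\widetilde{Y}}$ through the average (convexity of $Dist\pth{\cdot,\widetilde{Y}}$), using $L$-Lipschitzness of the $h_i'$ to replace each $x_i[t-1]$ by the maximizing estimate at cost $\lambda[t]L\pth{M[t]-m[t]}$, and splitting into ``resilient / not resilient'' (Definition~\ref{BS def resilient}) and, in the non-resilient case, scenario~1 ($x[t]-\lambda[t]g[t]\in\widetilde{Y}$) versus scenario~2 ($\notin\widetilde{Y}$, where the gradient of a valid function at $x[t]$ has the required sign because $\widetilde{Y}=\bigcup_{p\in\widetilde{\calC}}\argmin p$), gives the recursion
\[
Dist\pth{z[t+1],\widetilde{Y}}\ \le\ \max\left\{\lambda[t]L,\ Dist\pth{z[t],\widetilde{Y}}\right\}+\lambda[t]L\pth{M[t]-m[t]},
\]
the analogue of (\ref{basic iteration alg2 crash}). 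Closing out is then verbatim: with infinitely many resilient points, $\limsup_{t\diverge}Dist\pth{z[t],\widetilde{Y}}\le\lim_{t\diverge}\pth{\lambda[t_{i(t)}]L+\sum_{r\ge t_{i(t)}}\lambda[r]L\pth{M[r]-m[r]}}=0$ by Corollary~\ref{cor1}; with finitely many, Lemma~\ref{SB stochatic convergence} with $a_t=Dist\pth{z[t],\widetilde{Y}}$, $b_t=0$, $c_t=\lambda[t]L\pth{M[t]-m[t]}$ gives the existence of $c=\lim_{t\diverge}Dist\pth{z[t],\widetilde{Y}}\ge 0$, and if $c>0$ then Lemma~\ref{accu point} furnishes, on whichever side of $\widetilde{Y}$ the $z[t]$ accumulate, a valid function with derivative bounded away from $0$ at $\min\widetilde{Y}-\frac c2$ (resp.\ $\max\widetilde{Y}+\frac c2$), forcing $Dist\pth{z[t+1],\widetilde{Y}}\to-\infty$, a contradiction, so $c=0$. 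Combining $Dist\pth{z[t],\widetilde{Y}}\to0$ with $M[t]-m[t]\to0$ and Lemma~\ref{optimial z}, every non-faulty estimate converges to a common point of $\widetilde{Y}$, i.e.\ to a minimizer of some $p=\sum_{i\in\calN}\alpha_i h_i$ with $\sum_{i\in\calN}\mathbf{1}\pth{\alpha_i>\frac1n}\ge n-f$; that is exactly Problem~2 with $\beta=\frac1n$, $\gamma=n-f$.

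The main obstacle is not any new estimate but the asynchronous accounting: one must justify liveness (every non-faulty agent finishes every iteration, from $\phi\le f$ and the send-before-receive order in Algorithm~4) and the overlap bound $|\calR_i[t-1]\cap\calR_j[t-1]|\ge n-2f$ --- which simultaneously drives the consensus contraction and the validity of the effective gradient --- despite the adversary choosing which $n-f$ messages arrive first in each iteration. Once these are secured, every quantitative step is the line-by-line transcription of the Algorithm~2 / Theorem~\ref{optimal alg2} argument with $\widetilde{Y}$ in place of $Y$.
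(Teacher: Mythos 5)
Your proposal is correct and follows essentially the same route as the paper, which itself proves this theorem only by the remark that the argument is ``similar to the proof of Theorem~\ref{optimal alg2}'': you transcribe that proof with $\widetilde{Y}$ built from the Problem~2--style collection of valid functions, supply the asynchronous bookkeeping (liveness from $\phi\le f$, the overlap bound $|\calR_i\cap\calR_j|\ge n-2f$, and the tail argument ensuring $\calR_j[t-1]\subseteq\calN$ after all crashes have occurred), and close with the same resilient/non-resilient case analysis. This fills in precisely the details the paper leaves implicit, so no gap remains.
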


The collection of valid function is defined as follows.
\begin{align*}
\calC\triangleq \Big{\{}~~p(x)~:~ p(x)&=\sum_{i\in \calV} \alpha_i h_i(x), \forall i\in \calV, \alpha_i\ge 0, \sum_{i\in \calN} \alpha_i=1, ~~\text{and}\\
&\sum_{i\in \calV}{\bf 1}\pth{\alpha_i\ge \frac{1}{n}}\ge n-f
~~\Big{\}}
\end{align*}

The proof of Theorem \ref{alg4 optimal} is similar to the proof of Theorem \ref{optimal alg2}.  \\

In an asynchronous system, when there are up to $f$ Byzantine faults, simple iterative algorithms like Algorithm 3 may not exist, observing that it is impossible to achieve Byzantine consensus with single round of message exchange with only $n=3f+1$ agents.
In contrast, when the algorithm introduced in \cite{abraham2005optimal} is used as a communication mechanism in each iteration, we believe that Algorithm 3 can be modified such that it can solve Problem 2 with $\beta\ge \frac{1}{2(|\calN|-f)}$ and $\gamma\ge |\calN|-2f$.
There may be a tradeoff between the system size $n$ and the communication load in each iteration. We leave this problem for future exploration.

Note that the definition of admissibility of the local functions in this report is slightly different from that in \cite{su2015byzantine}. Comparing to \cite{su2015byzantine}, stronger assumptions  are used in proving the correctness of the three iterative algorithms developed in this work. In particular, we require that the local functions have to have $L$--Lipschitz derivatives. Whether such assumptions are necessary or not is still open, and we leave this for future exploration as well.

\bibliographystyle{plain}

\bibliography{PSDA_DL}

\newpage
\appendix


\centerline{\large\bf Appendices}

~
\section{Lemma \ref{valid convex crash}}
\label{appendix: valid convex crash}

 \begin{proof}
Let $x_1, x_2\in Y$ such that $x_1\not=x_2$. By definition of $Y$, there exist valid functions
$$p_1(x)=C_1\pth{\sum_{i\in \calN} h_i(x)+\sum_{i\in \calF} \alpha_i h_i(x)}, ~~~~\text{and}~~~ p_2(x)=C_2\pth{\sum_{i\in \calN} h_i(x)+\sum_{i\in \calF} \beta_i h_i(x)},$$
such that $x_1\in \argmin~ p_1(x)$ and $x_2\in \argmin~ p_2(x)$, respectively. Note that it is possible that $p_1(\cdot)=p_2(\cdot)$, and that $p_i(\cdot)=\widetilde{p}(\cdot)$ for $i=1$ or $i=2$.\\

Given $0\le \alpha\le 1$, let $x_{\alpha}=\alpha x_1+(1-\alpha) x_2$. We consider two cases:
\begin{itemize}
\item[(i)] ~$x_{\alpha}\in \argmin ~p_1(x)\cup \argmin ~p_2(x) \cup \argmin ~ \widetilde{p}(x)$, and
\item[(ii)] $x_{\alpha}\notin \argmin ~p_1(x)\cup \argmin ~p_2(x) \cup \argmin ~ \widetilde{p}(x)$.
\end{itemize}

\paragraph{{\bf Case (i)}:~$x_{\alpha}\in \argmin ~p_1(x)\cup \argmin ~p_2(x) \cup \argmin ~ \widetilde{p}(x)$.}

When $x_{\alpha}\in \argmin ~p_1(x)\cup \argmin ~p_2(x) \cup \argmin ~ \widetilde{p}(x)$, by definition of $Y$, we have
$$ x_{\alpha}\in \argmin ~p_1(x)\cup \argmin ~p_2(x) \cup \argmin ~ \widetilde{p}(x)\subseteq Y.$$
Thus, $x_{\alpha}\in Y$.

\paragraph{{\bf Case (ii)}:~$x_{\alpha}\notin \argmin ~p_1(x)\cup \argmin ~p_2(x) \cup \argmin ~ \widetilde{p}(x)$.}

By symmetry, WLOG, assume that $x_1<x_2$. By definition of $x_{\alpha}$ and the assumption of case (ii),
it holds that $x_1<x_{\alpha}<x_2$.
In particular, it must be that
$$x_{\alpha}> \max \pth{\argmin p_1(x)}~\text{and}~ x_{\alpha}< \min \pth{\argmin p_1(x)},$$
which imply that $p_1^{\prime}(x_{\alpha})>0$ and $p_2^{\prime}(x_{\alpha})<0$. There are two possibilities for $\widetilde{p}^{\prime}(x_{\alpha})$: either $\widetilde{p}^{\prime}(x_{\alpha})>0$ or $\widetilde{p}^{\prime}(x_{\alpha})<0$. Note that $\widetilde{p}^{\prime}(x_{\alpha})\not=0$, since
 $x_\alpha\notin \argmin ~ \widetilde{p}(x)$.\\

Assume that $\widetilde{p}^{\prime}(x_{\alpha})<0$. Then, there exists $0\le \zeta\le 1$ such that
$$
 \zeta ~p_1^{\prime}(x_{\alpha}) + (1-\zeta)~\widetilde{p}^{\prime}(x_{\alpha})=0.$$
By definition of $p_1(x)$ and $\widetilde{p}(x)$, we have
 \begin{align*}
 0~&=~\zeta ~ p_1^{\prime}(x_{\alpha}) + (1-\zeta)~\widetilde{p}^{\prime}(x_{\alpha})\\
 ~&=\zeta ~C_1\pth{\sum_{i\in \calN} h^{\prime}_i(x_{\alpha})+\sum_{i\in \calF} \alpha_i h^{\prime}_i(x_{\alpha})} + (1-\zeta)\pth{\frac{1}{|\calN|}\sum_{i\in \calN} h^{\prime}_i(x_{\alpha})}\\
 ~&=\pth{\zeta C_1+(1-\zeta)\frac{1}{|\calN|}} \sum_{i\in \calN} h^{\prime}_i(x_{\alpha})
 +\zeta C_1 \sum_{i\in \calF} \alpha_i h^{\prime}_i(x_{\alpha}).
\end{align*}
Thus, $x_{\alpha}$ is an optimum of function
\begin{align}
\label{valid obj crash syn}
\pth{\zeta C_1+(1-\zeta)\frac{1}{|\calN|}} \sum_{i\in \calN} h_i(x)
 +\zeta C_1 \sum_{i\in \calF} \alpha_i h_i(x).
\end{align}
Since $p_1(x)\in \calC$, it holds that $C_1\pth{|\calN|+\sum_{i\in \calF} \alpha_i}=1$. Then we get
\begin{align*}
\pth{\zeta C_1+(1-\zeta)\frac{1}{|\calN|}} |\calN|+\zeta C_1 \sum_{i\in \calF} \alpha_i&=\zeta C_1\pth{|\calN|+ \sum_{i\in \calF} \alpha_i}+(1-\zeta)\frac{1}{|\calN|}|\calN|\\
&=\zeta \, 1+(1-\zeta)\, 1=1.
\end{align*}
In addition, since $\frac{1}{n}\le C_1\le \frac{1}{|\calN|}$, we get
\begin{align*}
\frac{1}{n}~=~ \zeta \frac{1}{n}+(1-\zeta)\frac{1}{n}\le \zeta C_1+(1-\zeta)\frac{1}{|\calN|} \le \zeta \frac{1}{|\calN|}+(1-\zeta)\frac{1}{|\calN|}=\frac{1}{|\calN|}.
\end{align*}
So function (\ref{valid obj crash syn}) is a valid function.

Similarly, we can show that the above result holds when $\widetilde{p}^{\prime}(x_{\alpha})>0$ is positive.

Therefore, set $Y$ is convex.
\eproof
\end{proof}

\section{Lemma \ref{valid closed crash}}
\label{appendix: valid closed crash}

Define an auxiliary function $r(x)$ as follows
\begin{align}
\label{aux f}
r(x)\triangleq \sum_{i\in \calN} h_i^{\prime}(x)+\sum_{i\in \calF} \pth{h_i^{\prime}(x){\bf 1}\{h_i^{\prime}(x)>0\}}.
\end{align}
\begin{proposition}
\label{aux continuous}
Function $r(x)$ is continuous and non-decreasing.
\end{proposition}

\begin{proof}
Since $h_i(x)$ is convex for each $i\in \calV$, it holds that $h_i^{\prime}(x)$ is non-decreasing. In addition, ${\bf 1}\{h_i^{\prime}(x)>0\}$ is also non-decreasing for each $i\in \calV$. Thus, function $r(x)$ is non-decreasing.  \\

For each $i\in \calV$, since $h_i(\cdot)$ is differentiable and continuous, it follows that $h_i^{\prime}(\cdot)$ is continuous. That is, $\forall\, \frac{\epsilon}{n}>0, ~\exists ~ \delta>0$, and for each $i\in \calV$, such that
$$
|x-c|<\delta ~~\Longrightarrow~ \left |h_i^{\prime}(x)-h_i^{\prime}(c) \right|\le \frac{\epsilon}{n}.
$$
Then
\begin{align}
\label{conti}
\nonumber
|r(x)-r(c)|&=\left |  \sum_{i\in \calN} h_i^{\prime}(x)+\sum_{i\in \calF} \pth{h_i^{\prime}(x){\bf 1}\{h_i^{\prime}(x)>0\}}-\pth{\sum_{i\in \calN} h_i^{\prime}(c)+\sum_{i\in \calF} \pth{h_i^{\prime}(c){\bf 1}\{h_i^{\prime}(c)>0\}}}  \right |\\
\nonumber
&=\left |  \sum_{i\in \calN} \pth{h_i^{\prime}(x)-h_i^{\prime}(c)}+\sum_{i\in \calF} \pth{h_i^{\prime}(x){\bf 1}\{h_i^{\prime}(x)>0\}-h_i^{\prime}(c){\bf 1}\{h_i^{\prime}(c)>0\}}  \right |\\
\nonumber
&\le \sum_{i\in \calN} |h_i^{\prime}(x)-h_i^{\prime}(c)|+\sum_{i\in \calF} \left | h_i^{\prime}(x){\bf 1}\{h_i^{\prime}(x)>0\}-h_i^{\prime}(c){\bf 1}\{h_i^{\prime}(c)>0\} \right |\\
&\le |\calN|\frac{\epsilon}{n}+\sum_{i\in \calF} \left | h_i^{\prime}(x){\bf 1}\{h_i^{\prime}(x)>0\}-h_i^{\prime}(c){\bf 1}\{h_i^{\prime}(c)>0\} \right |.
\end{align}

\paragraph{}
When ${\bf 1}\{h_i^{\prime}(x)>0\}={\bf 1}\{h_i^{\prime}(c)>0\}$, it holds that
\begin{align}
\label{cont1}
\nonumber
\left | h_i^{\prime}(x){\bf 1}\{h_i^{\prime}(x)>0\}-h_i^{\prime}(c){\bf 1}\{h_i^{\prime}(c)>0\} \right |&\le \max\{~0,  | h_i^{\prime}(x)-h_i^{\prime}(c)|\}\\
&\le  | h_i^{\prime}(x)-h_i^{\prime}(c)|<\frac{\epsilon}{n}.
\end{align}

\paragraph{}
Consider the case when ${\bf 1}\{h_i^{\prime}(x)>0\}\not={\bf 1}\{h_i^{\prime}(c)>0\}$. Assume $x<c$. As $h_i^{\prime}(\cdot)$ is non-decreasing, we have
$$ {\bf 1}\{h_i^{\prime}(x)>0\}=0\not=1={\bf 1}\{h_i^{\prime}(c)>0\}. $$
Then,
\begin{align}
\label{cont2}
\nonumber
\left | h_i^{\prime}(x){\bf 1}\{h_i^{\prime}(x)>0\}-h_i^{\prime}(c){\bf 1}\{h_i^{\prime}(c)>0\} \right |&=\left | 0-h_i^{\prime}(c)\right |=h_i^{\prime}(c)\\
\nonumber
&\le  h_i^{\prime}(c)-h_i^{\prime}(x)~~\text{since}~ h_i^{\prime}(x)\le 0\\
&=|h_i^{\prime}(c)-h_i^{\prime}(x)|<\frac{\epsilon}{n}.
\end{align}
Similarly, we can show $\left | h_i^{\prime}(x){\bf 1}\{h_i^{\prime}(x)>0\}-h_i^{\prime}(c){\bf 1}\{h_i^{\prime}(c)>0\} \right |<\frac{\epsilon}{n}$, for the case when $x\ge c$.

By (\ref{cont1}) and (\ref{cont2}), we can bound (\ref{conti}) as
\begin{align*}
|r(x)-r(c)|&\le |\calN|\frac{\epsilon}{n}+\sum_{i\in \calF} \left | h_i^{\prime}(x){\bf 1}\{h_i^{\prime}(x)>0\}-h_i^{\prime}(c){\bf 1}\{h_i^{\prime}(c)>0\} \right |\\
&\le |\calN|\frac{\epsilon}{n}+|\calF|\frac{\epsilon}{n}=\epsilon.
\end{align*}

\eproof
\end{proof}

\begin{proposition}
\label{app: p1}
For each valid function $p(x)\in \calC$, $\argmin_{x\in \reals}\, p(x)$ is compact.
\end{proposition}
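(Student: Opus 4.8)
The plan is to show that for any valid function $p(x) = C\pth{\sum_{i\in\calN} h_i(x) + \sum_{i\in\calF}\alpha_i h_i(x)} \in \calC$, the set $\argmin_{x\in\reals} p(x)$ is nonempty, bounded, and closed. Closedness is immediate: $p$ is continuous (a finite nonnegative combination of continuously differentiable functions), so $\argmin p = p^{-1}(\{\inf p\})$ is closed whenever the infimum is attained; thus the real content is nonemptiness and boundedness, which I will handle together by a coercivity-type argument. Since $p$ is convex and continuously differentiable, $x^\star \in \argmin p$ if and only if $p'(x^\star) = 0$, so it suffices to show that the set $\{x : p'(x) = 0\}$ is nonempty and bounded, where $p'(x) = C\pth{\sum_{i\in\calN} h_i'(x) + \sum_{i\in\calF}\alpha_i h_i'(x)}$.

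First I would establish boundedness. Because each $h_i$ is admissible, $\argmin h_i$ is nonempty and compact; fix an interval $[a_0, b_0]$ containing $\bigcup_{i\in\calV} \argmin h_i$. For $x > b_0$, every non-faulty term satisfies $h_i'(x) \ge 0$ (in fact $h_i'(x) > 0$ once $x$ exceeds $\max \argmin h_i$), and similarly $h_i'(x) \le 0$ for $x < a_0$; the faulty terms carry coefficients $\alpha_i \in [0,1]$, so they cannot reverse the sign contributed by the non-faulty majority once $x$ is pushed far enough. Concretely, I would show there is a point $x_+ > b_0$ with $\sum_{i\in\calN} h_i'(x_+) > \sum_{i\in\calF} L$ (possible since each non-faulty $h_i'$ is eventually strictly positive and there are $|\calN| \ge n - f \ge 1$ of them — here I may invoke admissibility, which forces $h_i'$ to be strictly positive to the right of its minimizer set), whence $p'(x_+) > 0$; symmetrically a point $x_- < a_0$ with $p'(x_-) < 0$. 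Then every zero of $p'$, hence every minimizer of $p$, lies in $[x_-, x_+]$, giving boundedness.

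Nonemptiness then follows from the intermediate value theorem applied to the continuous function $p'$ on $[x_-, x_+]$: since $p'(x_-) < 0 < p'(x_+)$, there exists $x^\star \in (x_-, x_+)$ with $p'(x^\star) = 0$, and by convexity this $x^\star$ minimizes $p$. Combining closedness, boundedness, and nonemptiness yields that $\argmin_{x\in\reals} p(x)$ is compact. The main obstacle I anticipate is the sign bookkeeping in the boundedness step: one must be careful that the faulty coefficients $\alpha_i$ — which are only assumed to lie in $[0,1]$ — are genuinely dominated by the non-faulty contributions far from $[a_0,b_0]$, and this relies on the fact that each admissible $h_i'$ not merely is nonnegative but grows strictly away from its minimizer (otherwise a faulty $h_i$ could in principle be ``flat'' and the argument would need $|\calN| L$ versus $\phi L$ comparisons more carefully); I would phrase the estimate using the strictness built into the admissibility hypothesis (compactness of $\argmin h_i$ plus convexity) to pick $x_\pm$ making the non-faulty sum exceed $\phi L$ in absolute value.
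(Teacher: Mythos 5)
Your overall strategy is sound and in fact more careful than the paper's own proof, which disposes of boundedness in one line (``it follows trivially'' from compactness of each $\argmin h_i$) and then proves closedness by a sequential continuity argument essentially equivalent to your $p^{-1}(\{\inf p\})$ observation; your additional IVT step for nonemptiness is a harmless bonus (the paper never proves nonemptiness here, though admissibility guarantees it). So the comparison is: you fill in the boundedness gap the paper waves at, at the cost of a longer argument.

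However, the concrete estimate you propose for the boundedness step is wrong. You plan to choose $x_+ > b_0$ with $\sum_{i\in\calN} h_i'(x_+) > \phi L$ so that the non-faulty terms ``dominate'' the faulty ones. Admissibility only gives $|h_i'(x)|\le L$; it does not force $h_i'(x)$ to approach $L$, so each non-faulty gradient may be eventually strictly positive yet bounded above by, say, $L/(2n)$ for all $x$, in which case $\sum_{i\in\calN} h_i'(x) \le |\calN| L/(2n) < L/2 \le \phi L$ for every $x$ whenever $\phi\ge 1$, and no such $x_+$ exists. Fortunately the domination is also unnecessary: the faulty $h_i$ are admissible too, so their minimizer sets are contained in your interval $[a_0,b_0]$, hence $h_i'(x) > 0$ for every $i\in\calV$ (faulty included) once $x > b_0$, and since $\alpha_i\ge 0$ the faulty terms contribute \emph{nonnegatively} to $p'(x)$ there. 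Thus $p'(x) \ge C\sum_{i\in\calN} h_i'(x) > 0$ for any $x > b_0$, and symmetrically $p'(x) < 0$ for $x < a_0$, which is all your argument needs. With that repair (delete the $\phi L$ comparison entirely), the proof goes through.
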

\begin{proof}
Since $\argmin_{x\in \reals}\, h_i(x)$ is compact, and $p(x)$ is a convex combination of the local functions, it follows trivially that $\argmin_{x\in \reals}\, p(x)$ is bounded.
Thus, to show $\argmin_{x\in \reals}\, p(x)$ is compact, it remains to show that $\argmin_{x\in \reals}\, p(x)$ is closed.

Let $\{x_t\}_{t=0}^{\infty}\subseteq \argmin_{x\in \reals}\, p(x)$ be a sequence such that 
\begin{align}
\label{app: converge1}
\lim_{t\diverge} x_t=x^*. 
\end{align}
Recall that $h_i(\cdot)$ is continuous for each $i\in \calV$. Then $p(x)$ is also continuous. Thus, (\ref{app: converge1}) implies that 
\begin{align}
\label{app: converge2}
\lim_{t\diverge} p(x_t)=p(x^*).
\end{align}

Therefore, $x^*\in \argmin_{x\in \reals} p(x)$ and $\argmin_{x\in \reals} p(x)$ is compact.

\eproof
\end{proof}

\section*{Proof of Lemma \ref{valid closed crash}}
\begin{proof}
By Lemma \ref{valid convex crash}, we know that $Y$ is convex. To show $Y$ is closed, it is enough to show that $Y$ is bounded and both $\min Y$ and $\max Y$ exist.

For small enough $x$, $h^{\prime}_i(x)<0$ for each $i\in \calV$. Thus, $r(x)<0$ for small enough $x$. Similarly, $r(x)>0$ for large enough $x$.
By Proposition \ref{aux continuous}, we know that function $r(x)$ is non-decreasing and continuous. Thus, there exists $x_0\in \reals$ such that
\begin{align*}
0~=~r(x_0)=\sum_{i\in \calN} h_i^{\prime}(x_0)+\sum_{i\in \calF} \pth{h_i^{\prime}(x_0){\bf 1}\{h_i^{\prime}(x_0)>0\}}.
\end{align*}
Let
$$p_1(x)=C_0\pth{\sum_{i\in \calN} h_i(x)+\sum_{i\in \calF} \pth{h_i (x){\bf 1}\{h_i^{\prime}(x_0)>0\}}},$$
where $C_0\pth{|\calN|+\sum_{i\in \calF} {\bf 1}\{h_i^{\prime}(x_0)>0\}}=1$. Since
$$0\le \left |\sum_{i\in \calF} {\bf 1}\{h_i^{\prime}(x_0)>0\}\right|\le |\calF|,$$
it holds that $\frac{1}{n}\le C_0\le \frac{1}{|\calN|}$. Thus, $p_1(x)\in \calC$ is a valid function.

 Let $a=\min \pth{\argmin\, p_1(x)}$. By Proposition \ref{app: p1}, $\argmin\, p_1(x)$ is compact. Thus, $a$ is well-defined.  By definition $a\in Y$. Next we show that $a=\min Y$.

Suppose, on the contrary that, there exists $\tilde{a}<a$ such that $\tilde{a}\in Y$.  Since $\tilde{a}\in Y$, there exists $q(x)\in \calC$ such that $\tilde{a}\in \argmin\, q(x)$. That is,
\begin{align}
\label{close crash1}
0=q^{\prime}(\tilde{a})=C\pth{\sum_{i\in \calN} h_i^{\prime}\pth{\tilde{a}}+\sum_{i\in \calF} \alpha_i h_i^{\prime}\pth{\tilde{a}}}.
\end{align}
As $C>0$, from (\ref{close crash1}), we have
\begin{align*}
0&=\sum_{i\in \calN} h_i^{\prime}\pth{\tilde{a}}+\sum_{i\in \calF} \alpha_i h_i^{\prime}\pth{\tilde{a}}\\
&\le  \sum_{i\in \calN} h_i^{\prime}\pth{\tilde{a}}+\sum_{i\in \calF} \alpha_i h_i^{\prime}\pth{\tilde{a}}{\bf 1}\{ h_i^{\prime}(\tilde{a})>0\} ~~~\text{since}~h_i^{\prime}\pth{\tilde{a}}{\bf 1}\{ h_i^{\prime}(\tilde{a})>0\}\ge h_i^{\prime}\pth{\tilde{a}}\\
&\le \sum_{i\in \calN} h_i^{\prime}\pth{\tilde{a}}+\sum_{i\in \calF} h_i^{\prime}\pth{\tilde{a}}{\bf 1}\{ h_i^{\prime}(\tilde{a})>0\} ~~~\text{since}~0\le \alpha_i\le 1\\
&=r(\tilde{a})\\
&\le r(x_0)~~~\text{since}~\tilde{a}<a\le x_0~\text{and monotonicity of}~ r(\cdot)\\
&=0.
\end{align*}
Thus, $r(\tilde{a})=0= r(x_0)$. Since $h_i^{\prime}(\cdot)$ and ${\bf 1}\{h_i^{\prime}(\cdot)>0\}$
are both non-decreasing for each $i\in \calV$, we get
\begin{align}
\label{identical}
h_i^{\prime}(\tilde{a})=h_i^{\prime}(x_0), ~\forall~i\in \calN, ~~\text{and}~~{\bf 1}\{h_i^{\prime}(\tilde{a})>0\}={\bf 1}\{h_i^{\prime}(x_0)>0\}, \forall~i\in \calF.
\end{align}
We obtain
\begin{align*}
p_1^{\prime}(\tilde{a})&=C_1\pth{\sum_{i\in \calN} h_i^{\prime}(\tilde{a})+\sum_{i\in \calF} \pth{h_i^{\prime}(\tilde{a}){\bf 1}\{h_i^{\prime}(x_0)>0\}}}\\
&=C_1\pth{\sum_{i\in \calN} h_i^{\prime}(\tilde{a})+\sum_{i\in \calF} \pth{h_i^{\prime}(\tilde{a}){\bf 1}\{h_i^{\prime}(\tilde{a})>0\}}}~~\text{by}~(\ref{identical})\\
&=C_1r(\tilde{a})=0.
\end{align*}
That is, $\tilde{a}\in \argmin\, p_1(x)$, contradicting the fact that $\tilde{a}<a=\min\pth{\argmin\, p_1(x)}$.

Therefore, $a=\min Y$, i.e., $\min Y$ exists. Similarly, we can show that $\max Y$ also exists.\\

Therefore, set $Y$ is closed.

\eproof
\end{proof}

\section{Proof of Proposition \ref{crash sum 0}}
\label{appendix: crash sum 0}
\begin{proof}

For any $t\ge 1$, we have
\begin{align*}
\ell(t)&=\sum_{r=0}^{t-1}\lambda[r] b^{t-r}\\
&=\sum_{r=0}^{\lceil \frac{t}{2}\rceil}\lambda[r] b^{t-r}+\sum_{r=\lceil \frac{t}{2}\rceil+1}^{t-1}\lambda[r] b^{t-r}\\
&\le \sum_{r=0}^{\lceil \frac{t}{2}\rceil}\lambda[0] b^{t-r}+\lambda[\lceil \frac{t}{2}\rceil]\sum_{r=\lceil \frac{t}{2}\rceil+1}^{t-1}b^{t-r}~~~\text{since}~\lambda[t]\le \lambda[t-1],~\forall t\ge 1\\
&\le \lambda[0] \frac{b^{t-\lceil \frac{t}{2}\rceil}}{1-b}+\frac{b\lambda[\lceil \frac{t}{2}\rceil]}{1-b}\\
&\le \lambda[0] \frac{b^{\frac{t}{2}-1}}{1-b}+\frac{b\lambda[\lceil \frac{t}{2}\rceil]}{1-b}.
\end{align*}
Thus, we get
$$\limsup_{t\diverge} \ell(t)\le \lim_{t\diverge}\pth{\lambda[0] \frac{b^{\frac{t}{2}-1}}{1-b}+\frac{b\lambda[\lceil \frac{t}{2}\rceil]}{1-b}}
=\lambda[0] \frac{1}{1-b}\lim_{t\diverge} b^{\frac{t}{2}-1}+\frac{b}{1-b}\lim_{t\diverge}\lambda[\lceil \frac{t}{2}\rceil]\overset{(a)}{=}0+0=0.$$
Equality $(a)$ follows from the fact that $0\le b<1$ and the fact that $\lim_{t\diverge}\lambda[\lceil \frac{t}{2}\rceil]=0$.
On the other hand, by definition of $\ell(t)$ we know
$\ell(t)\ge 0$ for each $t\ge 1$. Thus $\liminf_{t\diverge} \ell(t)\ge 0.$\\

Therefore, the limit of $\ell(t)$ exists and $\lim_{t\diverge}\ell(t)=0$.

\eproof
\end{proof}

\section{Proof of Lemma \ref{SB asymptotic consensus fb}}
\label{app: SB asymptotic consensus fb}
\begin{proof}

When $t=0$, for all $i, j\in \calN$ we have
\begin{align*}
|x_i[0]-x_j[0]|\le \max_{i\in \calN} x_i[0]-\min_{j\in \calN} x_j[0]=U-u.
\end{align*}
Recall (\ref{MR evo BS}). For $t\ge 1$,
\begin{align*}
{\bf x}[t]={\bf \Phi}(t-1, 0){\bf x}[0]-\sum_{r=0}^{t-1} \lambda[r]{\bf \Phi}(t-1, r+1) \widetilde{\bf g}[r],
\end{align*}
Then each $x_i[t]$ can be written as
\begin{align*}
x_i [t]=\sum_{k=1}^{n-\phi}{\bf \Phi}_{ik}(t-1, 0)x_k [0]-\sum_{r=0}^{t-1} \pth{\lambda[r]\sum_{k=1}^{n-\phi}{\bf \Phi}_{ik}(t-1, r+1) \widetilde{g}_k[r]}.
\end{align*}
Thus
\begin{align}
\label{lemma consensus BS}
\nonumber
|x_i[t]-x_j[t]|&=  \Bigg{|} \sum_{k=1}^{n-\phi}{\bf \Phi}_{ik}(t-1, 0)x_k [0]-\sum_{r=0}^{t-1} \pth{\lambda[r]\sum_{k=1}^{n-\phi}{\bf \Phi}_{ik}(t-1, r+1) \widetilde{g}_k[r]}\\
\nonumber
&\quad-\sum_{k=1}^{n-\phi}{\bf \Phi}_{jk}(t-1, 0)x_k [0]+\sum_{r=0}^{t-1} \pth{\lambda[r]\sum_{k=1}^{n-\phi}{\bf \Phi}_{jk}(t-1, r+1) \widetilde{g}_k[r]} \Bigg{|}\\
\nonumber
&\le \left |   \sum_{k=1}^{n-\phi}{\bf \Phi}_{ik}(t-1, 0)x_k [0]-\sum_{k=1}^{n-\phi}{\bf \Phi}_{jk}(t-1, 0)x_k [0]  \right |\\
&\quad + \left |  \sum_{r=0}^{t-1} \pth{\lambda[r]\sum_{k=1}^{n-\phi}{\bf \Phi}_{jk}(t-1, r+1) \widetilde{g}_k[r]}-\sum_{r=0}^{t-1} \pth{\lambda[r]\sum_{k=1}^{n-\phi}{\bf \Phi}_{ik}(t-1, r+1) \widetilde{g}_k[r]}  \right |.
\end{align}
We bound the two terms in (\ref{lemma consensus BS}) separately. For the first term in (\ref{lemma consensus BS}), we have
\begin{align}
\label{lemma consensus BS t1}
\nonumber
\left |   \sum_{k=1}^{n-\phi}{\bf \Phi}_{ik}(t-1, 0)x_k [0]-\sum_{k=1}^{n-\phi}{\bf \Phi}_{jk}(t-1, 0)x_k [0]  \right |&=\left |\sum_{k=1}^{n-\phi}\pth{{\bf \Phi}_{ik}(t-1, 0)-{\bf \Phi}_{jk}(t-1, 0)}x_k [0]\right |\\
\nonumber
&\le \sum_{k=1}^{n-\phi}\left |{\bf \Phi}_{ik}(t-1, 0)-{\bf \Phi}_{jk}(t-1, 0)\right | \left |x_k [0]\right |\\
\nonumber
&\le \sum_{k=1}^{n-\phi} \gamma^{\lceil \frac{t}{\nu}\rceil} \left |x_k [0]\right | ~~~\text{by Theorem \ref{convergencerate}}\\
&\le (n-\phi)  \max\{|u|, |U|\} \gamma^{\lceil \frac{t}{\nu}\rceil}.
\end{align}
In addition, the second term in (\ref{lemma consensus BS}) can be bounded as follows.
\begin{align}
\label{lemma consensus BS t2}
\nonumber
&\left |  \sum_{r=0}^{t-1} \pth{\lambda[r]\sum_{k=1}^{n-\phi}{\bf \Phi}_{jk}(t-1, r+1) \widetilde{g}_k[r]}-\sum_{r=0}^{t-1} \pth{\lambda[r]\sum_{k=1}^{n-\phi}{\bf \Phi}_{ik}(t-1, r+1) \widetilde{g}_k[r]}  \right |\\
\nonumber
&=\left |  \sum_{r=0}^{t-1} \pth{\lambda[r]\sum_{k=1}^{n-\phi}{\bf \Phi}_{jk}(t-1, r+1) -{\bf \Phi}_{ik}(t-1, r+1)} \widetilde{g}_k[r]  \right |\\
\nonumber
&\le  \sum_{r=0}^{t-1} \pth{\lambda[r]\sum_{k=1}^{n-\phi}\left |{\bf \Phi}_{jk}(t-1, r+1) -{\bf \Phi}_{ik}(t-1, r+1) \right |} |\widetilde{g}_k[r]|\\
&\le L\sum_{r=0}^{t-1}\lambda[r](n-\phi) \gamma^{\lceil \frac{t-1-r}{\nu}\rceil} ~~~\text{by Theorem \ref{convergencerate} and the fact that $ |\widetilde{g}_k[r]|\le L$}
\end{align}
From (\ref{lemma consensus BS t1}) and (\ref{lemma consensus BS t2}), the LHS of (\ref{lemma consensus BS}) can be upper bounded by
\begin{align*}
|x_i[t]-x_j[t]|\le (n-\phi)  \max\{|u|, |U|\} \gamma^{\lceil \frac{t}{\nu}\rceil}+L\sum_{r=0}^{t-1}\lambda[r](n-\phi) \gamma^{\lceil \frac{t-1-r}{\nu}\rceil}.
\end{align*}
The proof is complete.

\eproof
\end{proof}

\section{Proof of Corollary \ref{BS consensus}}
\label{app: BS consensus}
\begin{proof}
By Lemma \ref{SB asymptotic consensus fb}, for each $t\ge 1$,
\begin{align*}
|x_i[t]-x_j[t]|&\le (n-\phi)  \max\{|u|, |U|\} \gamma^{\lceil \frac{t}{\nu}\rceil}+L\sum_{r=0}^{t-1}\lambda[r](n-\phi) \gamma^{\lceil \frac{t-1-r}{\nu}\rceil}\\
&\le (n-\phi)  \max\{|u|, |U|\} \gamma^{ \frac{t}{\nu}}+L\sum_{r=0}^{t-1}\lambda[r](n-\phi) \gamma^{ \frac{t-1-r}{\nu}},
\end{align*}
and for all $i, j\in \calN$. Taking limit sup on both sides, we get
\begin{align*}
\limsup_{t\diverge}|x_i[t]-x_j[t]|&\le
(n-\phi)  \max\{|u|, |U|\} \limsup_{t\diverge}\gamma^{ \frac{t}{\nu}}
+L(n-\phi)\limsup_{t\diverge}\pth{\sum_{r=0}^{t-1}\lambda[r] \gamma^{ \frac{t-1-r}{\nu}}}\\
&=0+L(n-\phi)\limsup_{t\diverge}\pth{\sum_{r=0}^{t-1}\lambda[r] \gamma^{ \frac{t-1-r}{\nu}}}\\
&=0+0=0~~~\text{by Lemma \ref{crash sum 0}},
\end{align*}
proving the corollary.

\eproof
\end{proof}

\section{Proof of Lemma \ref{finiteness of series SB}}
\label{app: finiteness of series SB}
\begin{proof}
By Lemma \ref{SB asymptotic consensus fb}, for $t\ge 1$ we have
$$M[t]-m[t]\le (n-\phi)  \max\{|u|, |U|\} \gamma^{\lceil \frac{t}{\nu}\rceil}+L\sum_{r=0}^{t-1}\lambda[r](n-\phi) \gamma^{\lceil \frac{t-1-r}{\nu}\rceil}.
$$
Thus, we get
\begin{align}
\label{finiteness SB}
\nonumber
\sum_{t=1}^{\infty} \lambda[t]\pth{M[t]-m[t]}&\le ~\sum_{t=1}^{\infty} \lambda[t] \pth{(n-\phi)  \max\{|u|, |U|\} \gamma^{\lceil \frac{t}{\nu}\rceil}+L\sum_{r=0}^{t-1}\lambda[r](n-\phi) \gamma^{\lceil \frac{t-1-r}{\nu}\rceil}}\\
&~=(n-\phi)  \max\{|u|, |U|\} \sum_{t=1}^{\infty} \lambda[t] \gamma^{\lceil \frac{t}{\nu}\rceil}+L(n-\phi)\sum_{t=1}^{\infty} \lambda[t] \sum_{r=0}^{t-1}\lambda[r] \gamma^{\lceil \frac{t-1-r}{\nu}\rceil}.
\end{align}
Since $\lambda[t]\le \lambda [0]$ for each $t\ge 0$, we have
\begin{align}
\label{SB finiteness t1}
\nonumber
(n-\phi)  \max\{|u|, |U|\} \sum_{t=1}^{\infty} \lambda[t] \gamma^{\lceil \frac{t}{\nu}\rceil}&\le (n-\phi)  \max\{|u|, |U|\} \lambda[0] \sum_{t=1}^{\infty} \gamma^{\lceil \frac{t}{\nu}\rceil}\\
\nonumber
\nonumber
&\le (n-\phi)  \max\{|u|, |U|\} \lambda[0] \sum_{t=1}^{\infty} \gamma^{ \frac{t}{\nu}}\\
&\le (n-\phi)  \max\{|u|, |U|\} \lambda[0]\frac{1}{1-\gamma^{ \frac{1}{\nu}}}~<~\infty.
\end{align}

\begin{align}
\label{SB finiteness t2}
\nonumber
L(n-\phi)\sum_{t=1}^{\infty} \lambda[t] \sum_{r=0}^{t-1}\lambda[r] \gamma^{\lceil \frac{t-1-r}{\nu}\rceil}&= L(n-\phi)\sum_{t=1}^{\infty}  \sum_{r=0}^{t-1}\lambda[t]\lambda[r] \gamma^{\lceil \frac{t-1-r}{\nu}\rceil}\\
\nonumber
&\le \frac{L(n-\phi)}{2}\sum_{t=1}^{\infty}  \sum_{r=0}^{t-1}\pth{\lambda^2[t]+\lambda^2[r]} \gamma^{\lceil \frac{t-1-r}{\nu}\rceil}~~~\text{since}~\lambda[t]\lambda[r]\le \frac{\lambda^2[t]+\lambda^2[r]}{2}\\
&=\frac{L(n-\phi)}{2}\sum_{t=1}^{\infty}  \lambda^2[t]\sum_{r=0}^{t-1} \gamma^{\lceil \frac{t-1-r}{\nu}\rceil}+\frac{L(n-\phi)}{2}\sum_{t=1}^{\infty}  \sum_{r=0}^{t-1}\lambda^2[r] \gamma^{\lceil \frac{t-1-r}{\nu}\rceil}
\end{align}
The first term on the RHS of (\ref{SB finiteness t2}) can be bounded as
\begin{align}
\label{SB finiteness t2-1}
\nonumber
\frac{L(n-\phi)}{2}\sum_{t=1}^{\infty}  \lambda^2[t]\sum_{r=0}^{t-1} \gamma^{\lceil \frac{t-1-r}{\nu}\rceil}&\le \frac{L(n-\phi)}{2}\sum_{t=1}^{\infty}  \lambda^2[t]\sum_{r=0}^{t-1} \gamma^{\frac{t-1-r}{\nu}}\\
\nonumber
& \le \frac{L(n-\phi)}{2}\sum_{t=1}^{\infty}  \lambda^2[t] \frac{1}{1-\gamma^{\frac{1}{\nu}}}\\
\nonumber
&=\frac{L(n-\phi)}{2\pth{1-\gamma^{\frac{1}{\nu}}}}\sum_{t=1}^{\infty}  \lambda^2[t]\\
&<\infty ~~~\text{since}~ \sum_{t=1}^{\infty}  \lambda^2[t]<\infty.
\end{align}

For the second term on the RHS of (\ref{SB finiteness t2}), for any fixed $T$, we get
\begin{align*}
\nonumber
\frac{L(n-\phi)}{2}\sum_{t=1}^{T}  \sum_{r=0}^{t-1}\lambda^2[r] \gamma^{\lceil \frac{t-1-r}{\nu}\rceil}&\le \frac{L(n-\phi)}{2}\sum_{t=1}^{T}  \sum_{r=0}^{t-1}\lambda^2[r] \gamma^{\frac{t-1-r}{\nu}}\\
\nonumber
&=\frac{L(n-\phi)}{2} \sum_{r=0}^{T-1}\lambda^2[r]\sum_{t=0}^{T-1-r} \gamma^{\frac{t}{\nu}}\\
&\le \frac{L(n-\phi)}{2(1-\gamma^{\frac{1}{\nu}})}  \sum_{r=0}^{T-1}\lambda^2[r].
\end{align*}
Thus, we get
\begin{align}
\label{SB finiteness t2-2}
\frac{L(n-\phi)}{2}\sum_{t=0}^{\infty}  \sum_{r=0}^{t-1}\lambda^2[r] \gamma^{\lceil \frac{t-1-r}{\nu}\rceil}\le \frac{L(n-\phi)}{2(1-\gamma^{\frac{1}{\nu}})}  \sum_{r=0}^{\infty}\lambda^2[r]<\infty.
\end{align}
By (\ref{SB finiteness t1}), (\ref{SB finiteness t2-1}) and (\ref{SB finiteness t2-2}), we get
$$\sum_{t=1}^{\infty} \lambda[t] \pth{M[t]-m[t]}<\infty.$$
In addition,
\begin{align*}
\sum_{t=0}^{\infty} \lambda[t] \pth{M[t]-m[t]}&= \lambda[0]\pth{M[0]-m[0]}+\sum_{t=1}^{\infty} \lambda[t] \pth{M[t]-m[t]}\\
&= \lambda[0] (U-u)+\sum_{t=1}^{\infty} \lambda[t] \pth{M[t]-m[t]}<\infty,
\end{align*}
proving the lemma.

\eproof
\end{proof}

\section{Proof of Lemma \ref{BS valid closed}}
\label{app: BS closed}

Define an auxiliary function $\tr(x)$ as follows. For each $x\in \reals$, let $h_{i_1(x)}^{\prime}(x), \cdots, h_{i_{|\calN|}(x)}^{\prime}(x)$ be a non-decreasing order of $h_j^{\prime}(x)$, for $j\in \calN$. Define $\tr(x)$ as follows,
\begin{align}
\label{BS aux f}
\tr(x)=\pth{1-\frac{|\calN|-f-1}{2(|\calN|-f)}} h^{\prime}_{i_1(x)}(x)+\frac{1}{2(|\calN|-f)}\sum_{j=2}^{|\calN|-f} h^{\prime}_{i_j(x)}(x).
\end{align}
Intuitively speaking, $\tr(x)$ is the largest gradient value among all valid functions in $\tC$ at point $x$.

\begin{proposition}
\label{BS aux continuous}
Function $\tr(\cdot)$ is continuous and non-decreasing.
\end{proposition}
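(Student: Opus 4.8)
## Proof Proposal for Proposition \ref{BS aux continuous}

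The plan is to show continuity and monotonicity of $\tr(\cdot)$ by essentially the same argument used for $r(\cdot)$ in Proposition \ref{aux continuous}, with the extra care that $\tr$ is built from an order statistic of the gradients $\{h_j'(x)\}_{j\in\calN}$ rather than from a fixed-index sum. First I would observe that $\tr$ can be rewritten in a sorting-independent way: since the coefficients in \eqref{BS aux f} are $1-\frac{|\calN|-f-1}{2(|\calN|-f)} = \frac{1}{2} + \frac{1}{2(|\calN|-f)}$ on the smallest gradient and $\frac{1}{2(|\calN|-f)}$ on each of the next $|\calN|-f-1$ gradients, and $0$ on the remaining $f$ largest, we have
\[
\tr(x) = \frac{1}{2}\min_{j\in\calN} h_j'(x) + \frac{1}{2(|\calN|-f)}\sum_{j\in S(x)} h_j'(x),
\]
where $S(x)$ is the set of indices of the $|\calN|-f$ smallest values of $h_j'(x)$. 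Equivalently, $\tr(x) = \frac12 \min_{j\in\calN} h_j'(x) + \frac{1}{2(|\calN|-f)}\,\sigma_{|\calN|-f}(x)$, where $\sigma_k(x)$ denotes the sum of the $k$ smallest of the numbers $h_1'(x),\dots,h_{|\calN|}'(x)$.

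For monotonicity, I would use two facts. (1) Each $h_j'(\cdot)$ is non-decreasing because $h_j$ is convex (admissibility). (2) The map sending a tuple of reals to the sum of its $k$ smallest entries is non-decreasing coordinatewise: if $a_j \le b_j$ for all $j$, then $\sigma_k(a) \le \sigma_k(b)$ (indeed $\sigma_k(a) = \min_{|T|=k}\sum_{j\in T} a_j \le \min_{|T|=k}\sum_{j\in T} b_j = \sigma_k(b)$). Composing, for $x \le x'$ we get $h_j'(x)\le h_j'(x')$ for every $j$, hence $\min_j h_j'(x)\le \min_j h_j'(x')$ and $\sigma_{|\calN|-f}(x) \le \sigma_{|\calN|-f}(x')$, so $\tr(x)\le\tr(x')$.

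For continuity, the key point is that $\sigma_k$, as a function of the tuple $(h_1'(x),\dots,h_{|\calN|}'(x))$, is Lipschitz (in fact $1$-Lipschitz in the $\ell^1$ sense, using $\sigma_k(a) = \min_{|T|=k}\sum_{j\in T} a_j$ as an infimum of $1$-Lipschitz functions), and likewise $\min_j$ is $1$-Lipschitz; meanwhile each $h_j'(\cdot)$ is continuous (in fact $L$-Lipschitz by admissibility). So for $|x-c|<\delta$ with $\delta$ chosen so that $|h_j'(x)-h_j'(c)|\le \epsilon/|\calN|$ for all $j\in\calN$, I would bound
\[
|\tr(x)-\tr(c)| \le \tfrac12\Bigl|\min_j h_j'(x)-\min_j h_j'(c)\Bigr| + \tfrac{1}{2(|\calN|-f)}\bigl|\sigma_{|\calN|-f}(x)-\sigma_{|\calN|-f}(c)\bigr| \le \tfrac12\max_j|h_j'(x)-h_j'(c)| + \tfrac{1}{2(|\calN|-f)}\sum_{j\in\calN}|h_j'(x)-h_j'(c)| \le \epsilon,
\]
which gives continuity. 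The main obstacle is purely expository: one must justify the "sum of $k$ smallest entries" representation of $\tr$ and its Lipschitz/monotonicity properties cleanly, since the definition \eqref{BS aux f} is phrased via an explicit sorting $i_1(x),\dots,i_{|\calN|}(x)$ whose indices jump discontinuously as $x$ varies — the point being that although the labeling is discontinuous, the symmetric function $\tr$ of the unordered multiset is not. Once that representation is in hand, the rest is the same two-part estimate as in Proposition \ref{aux continuous}.
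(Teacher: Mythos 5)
Your proof is correct, and it takes a genuinely different route from the paper's. The paper works directly with the explicit sorted representation $h'_{i_1(x)}(x)\le\cdots\le h'_{i_{|\calN|}(x)}(x)$: for monotonicity it compares $\tr(y)$ and $\tr(x)$ by freezing one point's ordering and evaluating it at the other point, and for continuity it runs the analogous two-sided estimate, exactly as in Proposition \ref{aux continuous}. You instead rewrite $\tr$ as $\frac12\min_{j\in\calN}h_j'(x)+\frac{1}{2(|\calN|-f)}\sigma_{|\calN|-f}(x)$, where $\sigma_k$ is the sum of the $k$ smallest entries, and then invoke the facts that $\min$ and $\sigma_k=\min_{|T|=k}\sum_{j\in T}(\cdot)$ are coordinatewise non-decreasing and $1$-Lipschitz as minima of linear functionals; your numerical checks (the coefficient identity $1-\frac{|\calN|-f-1}{2(|\calN|-f)}=\frac12+\frac{1}{2(|\calN|-f)}$ and the final $\epsilon$-bound) are right. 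What your formulation buys is that it makes completely transparent why the discontinuity of the sorting indices $i_j(x)$ is harmless, and it sidesteps a delicacy in the ordering-freeze argument: since $\tr$ is a pointwise \emph{minimum} over orderings of the weighted sum (largest weight on the smallest gradient), one must freeze the ordering at the point where $\tr$ attains that minimum to get the inequality in the right direction --- the paper's displayed chain freezes the ordering at $x$ rather than $y$ in the monotonicity step, which gives the inequality backwards as written, whereas your symmetric-function argument has no such step to get wrong. The paper's approach, in exchange, requires no reformulation and parallels its earlier proof verbatim.
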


\begin{proof}
Let $x\le y \in \reals$.
\begin{align*}
\tr(y)-\tr(x)&=\pth{1-\frac{|\calN|-f-1}{2(|\calN|-f)}} h^{\prime}_{i_1(y)}(y)+\frac{1}{2(|\calN|-f)}\sum_{j=2}^{|\calN|-f} h^{\prime}_{i_j(y)}(y)\\
&\quad-\pth{1-\frac{|\calN|-f-1}{2(|\calN|-f)}} h^{\prime}_{i_1(x)}(x)-\frac{1}{2(|\calN|-f)}\sum_{j=2}^{|\calN|-f} h^{\prime}_{i_j(x)}(x)\\
&\ge \pth{1-\frac{|\calN|-f-1}{2(|\calN|-f)}} h^{\prime}_{i_1(x)}(y)+\frac{1}{2(|\calN|-f)}\sum_{j=2}^{|\calN|-f} h^{\prime}_{i_j(x)}(y)\\
&\quad-\pth{1-\frac{|\calN|-f-1}{2(|\calN|-f)}} h^{\prime}_{i_1(x)}(x)-\frac{1}{2(|\calN|-f)}\sum_{j=2}^{|\calN|-f} h^{\prime}_{i_j(x)}(x)\\
&=\pth{1-\frac{|\calN|-f-1}{2(|\calN|-f)}}\pth{h^{\prime}_{i_1(x)}(y)-h^{\prime}_{i_1(x)}(x)}
+\frac{1}{2(|\calN|-f)}\sum_{j=2}^{|\calN|-f}\pth{h^{\prime}_{i_j(x)}(y)-h^{\prime}_{i_j(x)}(x)}\\
&\le 0+0~~~\text{since}~x\le y~\text{and}~~h_i^{\prime}(\cdot)~ \text{is non-decreasing}
\end{align*}
Thus, function $\tr(\cdot)$ is non-decreasing. \\

Next we show that function $\tr(\cdot)$ is continuous.

For each $i\in \calV$, since $h_i(\cdot)$ is differentiable, it follows that $h_i^{\prime}(\cdot)$ is continuous. That is, $\forall\, \epsilon >0, ~\exists ~ \delta>0$, and for each $i\in \calV$, such that
$$
|x-c|<\delta ~~\Longrightarrow~ \left |h_i^{\prime}(x)-h_i^{\prime}(c) \right|\le \epsilon.
$$

Assume $c\le x< c+\delta$.
Then
\begin{align}
\label{BS conti}
\nonumber
|\tr(x)-\tr(c)|&=\tr(x)-\tr(c)~~~\text{by monotonicity of }\tr(\cdot)\\
\nonumber
&=\pth{1-\frac{|\calN|-f-1}{2(|\calN|-f)}} h^{\prime}_{i_1(x)}(x)+\frac{1}{2(|\calN|-f)}\sum_{j=2}^{|\calN|-f} h^{\prime}_{i_j(x)}(x)\\
\nonumber
&\quad-\pth{1-\frac{|\calN|-f-1}{2(|\calN|-f)}} h^{\prime}_{i_1(c)}(c)-\frac{1}{2(|\calN|-f)}\sum_{j=2}^{|\calN|-f} h^{\prime}_{i_j(c)}(c)\\
\nonumber
&\le \pth{1-\frac{|\calN|-f-1}{2(|\calN|-f)}} h^{\prime}_{i_1(x)}(x)+\frac{1}{2(|\calN|-f)}\sum_{j=2}^{|\calN|-f} h^{\prime}_{i_j(x)}(x)\\
\nonumber
&\quad-\pth{1-\frac{|\calN|-f-1}{2(|\calN|-f)}} h^{\prime}_{i_1(x)}(c)-\frac{1}{2(|\calN|-f)}\sum_{j=2}^{|\calN|-f} h^{\prime}_{i_j(x)}(c)\\
\nonumber
&\le \pth{1-\frac{|\calN|-f-1}{2(|\calN|-f)}} \pth{h^{\prime}_{i_1(x)}(x)-h^{\prime}_{i_1(x)}(c)}+\frac{1}{2(|\calN|-f)}\sum_{j=2}^{|\calN|-f} \pth{h^{\prime}_{i_j(x)}(x)-h^{\prime}_{i_1(x)}(c)}\\
&< \pth{1-\frac{|\calN|-f-1}{2(|\calN|-f)}} \epsilon +\frac{1}{2(|\calN|-f)}\sum_{j=2}^{|\calN|-f} \epsilon=\epsilon.
\end{align}
Similarly, we can show that when $c-\delta< x\le c$, $|\tr(x)-\tr(c)|<\epsilon$.\\

Thus, function $\tr(\cdot)$ is continuous. \\

The proof is complete.

\eproof
\end{proof}

\section*{Proof of Lemma \ref{BS valid closed}}
\begin{proof}
By Lemma \ref{BS valid convex}, we know that $\tY$ is convex. To show $Y$ is closed, it is enough to show that $\tY$ is bounded and both $\min\, \tY$ and $\max\, \tY$ exist.

By Proposition \ref{BS aux continuous}, we know that function $\tr(x)$ is non-decreasing and continuous. Thus, there exists $x_0\in \reals$ such that
\begin{align*}
0~=~\tr(x_0)=\pth{1-\frac{|\calN|-f-1}{2(|\calN|-f)}} h^{\prime}_{i_1(x_0)}(x_0)+\frac{1}{2(|\calN|-f)}\sum_{j=2}^{|\calN|-f} h^{\prime}_{i_j(x_0)}(x_0).
\end{align*}
Let
\begin{align}
\label{order}
q(x)=\pth{1-\frac{|\calN|-f-1}{2(|\calN|-f)}} h_{i_1(x_0)}(x)+\frac{1}{2(|\calN|-f)}\sum_{j=2}^{|\calN|-f} h_{i_j(x_0)}(x).
\end{align}
By construction, $q(x)\in \calC$ is a valid function. Note that due to the possibility of existence of ties in top $|\calN|-f$ rankings of the order $h_{i_1(x)}^{\prime}(x), \cdots, h_{i_{|\calN|}(x)}^{\prime}(x)$, for a given $x_0$, there may be multiple orders over $h_i^{\prime}(x_0), \forall i\in \calN$ of the top $|\calN|-f$ elements. Let $\calO$ be the collection of all such orders. Note that there is an one-to-one correspondence of an order and a valid function defined in (\ref{order}). Let
$$a=\min_{o\in \calO} \min\pth{\argmin\, q_o(x)},$$
 which is well-defined since $\argmin\, q_o(x)$ is compact, and $|\calO|$ is finite.\\

By definition $a\in \tY$. Next we show that $a=\min\, \tY$.

Suppose, on the contrary that, there exists $\tilde{a}<a$ such that $\tilde{a}\in \tY$.  Since $\tilde{a}\in \tY$, there exists $\tilde{q}(x)=\sum_{i\in \calN} \alpha_i h_i(x)\in \tilde{\calC}$ such that $\tilde{a}\in \argmin\, \tilde{q}(x)$. That is,
\begin{align}
\label{BS close}
\tilde{q}^{\prime}(\tilde{a})=0.
\end{align}
We have
\begin{align*}
0=q^{\prime}(\tilde{a})&=\sum_{i\in \calN} \alpha_i h^{\prime}_i(\tilde{a})\\
&\le \pth{1-\frac{|\calN|-f-1}{2(|\calN|-f)}} h^{\prime}_{i_1(\tilde{a})}(\tilde{a})+\frac{1}{2(|\calN|-f)}\sum_{j=2}^{|\calN|-f} h^{\prime}_{i_j(\tilde{a})}(\tilde{a})\\
&=\tr(\tilde{a})\le \tr(x_0)=0~~~\text{by monotonicity of } \tr(\cdot)
\end{align*}
Thus, $\tr(\tilde{a})=0=\tr(x_0)$.
In addition, we have
\begin{align*}
0&=\pth{1-\frac{|\calN|-f-1}{2(|\calN|-f)}} h^{\prime}_{i_1(\tilde{a})}(\tilde{a})+\frac{1}{2(|\calN|-f)}\sum_{j=2}^{|\calN|-f} h^{\prime}_{i_j(\tilde{a})}(\tilde{a})\\
&\le \pth{1-\frac{|\calN|-f-1}{2(|\calN|-f)}} h^{\prime}_{i_1(\tilde{a})}(x_0)+\frac{1}{2(|\calN|-f)}\sum_{j=2}^{|\calN|-f} h^{\prime}_{i_j(\tilde{a})}(x_0)~~~\text{by monotonicity of } h_i^{\prime}(\cdot)\\
&\le \pth{1-\frac{|\calN|-f-1}{2(|\calN|-f)}} h^{\prime}_{i_1(x_0)}(x_0)+\frac{1}{2(|\calN|-f)}\sum_{j=2}^{|\calN|-f} h^{\prime}_{i_j(x_0)}(x_0),
\end{align*}
which implies that $i_1(\tilde{a}), \cdots, i_{|\calN|-f}(\tilde{a})$ is an order in $\calO$. Thus, it can be seen that $\tilde{a}\ge a=\min_{o\in \calO} \min\pth{\argmin\, q_o(x)}$, contradicting the assumption that $\tilde{a}< a$. \\

Therefore, $a=\min \tY$, i.e., $\min\, \tY$ exists. Similarly, we can show that $\max\, \tY$ also exists. \\

Therefore, set $\tY$ is closed.

\eproof
\end{proof}

\end{document}